\newcounter{counter}
\newtheorem*{assumption*}{Assumption}
\newtheorem{proposition}[counter]{Proposition}
\newtheorem*{definition*}{Definition}
\newtheorem{lemma}[counter]{Lemma}
\newtheorem{corollary}[counter]{Corollary}
\newtheorem*{theorem*}{Theorem}
\newtheorem{theorem}{Theorem}
\newif\ifplotvhat
\newif\ifplottruth
\newif\ifplotcommit
\newif\ifplotcheap 
\def\valuefig{
	\draw[->] (0,0) -- (0,1.05);
	\draw[->] (0,0) -- (1.1,0) node[below] {$b_1$};
	\draw (1,0) node[below] {$1$};
	\draw (0,0) node[below left] {$0$};
	
	\def\pbar{0.5};
	\draw[dotted, opacity=0.5] (1,0) -- (1,1);
	\draw[dotted, opacity=0.5] (\pbar,0.125) -- (\pbar,0);
	\draw (\pbar,0) node[below]{$\bar{p}$};
	
	\ifplotvhat
	\draw[dashed, opacity=0.8, domain=0.1:1, smooth, variable=\x] plot ({\x}, {\x*\x*\x});
	\fi
	
	\ifplottruth
	\draw[truth, domain=0.5:1, smooth, variable=\x] plot ({\x}, {\x*\x*\x-\offsettruth});
	\draw[truth] (0,\offsettruth) -- (\pbar, \offsettruth);
	\draw[truth, dotted] (\pbar, 0.125-\offsettruth) -- (\pbar, \offsettruth);
	\fi
	
	\ifplotcommit
	\def\commitL{0.38};
	\def\commitR{0.62};
	\draw[commit, domain=\commitL:\commitR, variable=\x] plot ({\x}, {3/4*\x-1/4-\offsetcommit}); 
	\draw[commit, domain=\commitR:1, smooth, variable=\x] plot ({\x}, {\x*\x*\x-\offsetcommit});
	\draw[commit] (0,\offsetcommit) -- (\commitL, \offsetcommit);
	\draw[commit, dotted] (\commitL, {3/4*\commitL-1/4-\offsetcommit}) -- (\commitL, \offsetcommit);
	\draw[commit, dotted] (\commitR, {3/4*\commitR-1/4-\offsetcommit}) -- (\commitR, \commitR*\commitR*\commitR-\offsetcommit);
	\fi
	
	\ifplotcheap
	\def\cheapL{1/3+8/3*\offsetcheap};
	\def\cheapR{0.68};
	\draw[cheap, domain=\cheapL:\cheapR, variable=\x] plot ({\x}, {3/4*\x-1/4-\offsetcheap}); 
	\draw[cheap] (0, \offsetcheap) -- (\cheapL, \offsetcheap); 
	\draw[cheap, domain=\cheapR:1, variable=\x] plot ({\x}, {25/12*\x-125/108-\offsetcheap}); 
	\fi
}
\begin{document}
\title{Designing Social Learning\footnote{An earlier version of this paper was submitted to the University of Z{\"u}rich as a chapter of Smirnov's Ph.D. thesis (\citet{Smir20}). We are grateful to three anonymous referees, Mikhail Drugov, Renato Gomes, Johannes M{\"u}nster, Paolo Piacquadio, Marek Pycia, Armin Schmutzler, Jakub Steiner, Francesco Squintani, Bauke Visser, Nikhil Vellodi, Jidong Zhou, seminar participants at the University of Z{\"u}rich, J{\"o}nk{\"o}ping University, the University of M{\"u}nster, and participants of ESWEM 2020 and MaCCI 2023 conferences for valuable feedback and helpful comments.}
}
	
	\author{Aleksei Smirnov\footnote{Faculty of Economic Sciences, Higher School of Economics, Pokrovsky Boulevard 11, 109028, Moscow, Russia; e-mail: \href{mailto:assmirnov@hse.ru}{assmirnov@hse.ru}.}, Egor Starkov\footnote{Department of Economics, University of Copenhagen, {\O}ster Farimagsgade 5, 1353 K{\o}benhavn K, Denmark; e-mail: \href{mailto:egor@starkov.email}{egor@starkov.email}.}}
	
	\maketitle
	
	\begin{abstract}
		This paper studies strategic communication in the context of social learning. Product reviews are used by consumers to learn product quality, but in order to write a review, a consumer must be convinced to purchase the item first. When reviewers care about the welfare of future consumers, this leads to a conflict: a reviewer today wants the future consumers to purchase the item even when this comes at a loss to them, so that more information is revealed for the consumers that come after. We show that due to this conflict, communication via reviews is inevitably noisy, regardless of whether reviewers can commit to a communication strategy or have to resort to cheap talk.
		The optimal communication mechanism involves truthful communication of extreme experiences and pools the moderate experiences together.
		
		\textbf{Keywords}: Social learning, dynamic games, strategic information transmission, experimentation.
		
		\textbf{JEL Codes}: C73, D83, L15.
	\end{abstract}
	
	\newpage

	\section{Introduction} \label{sec:INT}
	
	Whenever information is dispersed in society, the question of social learning arises: can society aggregate this information in a way that yields an efficient outcome for its members?
	In recent years, online customer reviews have emerged as a powerful tool of social learning: in multiple surveys of internet users, at least half of the respondents report using ratings and online reviews ``always'' or ``often'' to inform their purchasing decisions, and most respondents find reviews to be at least ``mostly reliable'' (\citet{CMA,Mintel,eMarketer}).
	Curiously, in one of these surveys, only about 10\% of respondents indicated that they find product reviews ``very reliable'' (\citet{eMarketer}). This skepticism can arise due to a variety of reasons, including the many ways in which sellers can tamper with reviews, such as through censorship or fake reviews.\footnote{See, e.g., \citet{LZ} for an exploration of the effects of fake reviews and \citet{SScens} for a model of censorship in product reviews.}
	However, in this paper we show that reviews can -- and should -- be noisy even in the absence of any intervention from sellers, with the noise stemming from \emph{how} customers write reviews.
	
	To understand the source of this noise, one must first ask \emph{why} customers write reviews. According to surveys, one of the most common responses to this question is ``to help other consumers'' (\citet{Trustpilot}).
	The desire to help other consumers make the right choice is often a sufficient incentive for people to spend time and effort writing a review. However, these altruistic concerns only seem to appear ex post, after the consumer has purchased and consumed the product, rather than ex ante. It is reasonable to expect that when deciding whether to buy a product and which one to buy, consumers focus primarily on their own expected utility from consumption, rather than the desire to provide helpful information to others.
	
	As we show, this time inconsistency in altruism leads to noise in product reviews. When product quality is uncertain, purchases have an informational externality, since they generate informative reviews that enable future consumers to make better decisions. This externality is not accounted for by consumers when they are deciding whether to purchase the product. It is, however, internalized by an altruistic reviewer, who may then want to mislead a future consumer into buying a product when it is socially, but not individually, optimal to do so.\footnote{The point that product evaluations produce a positive externality and are hence socially underprovided was made, among others, by \citet*{ARZ}, who proposed cash payments to alleviate this inefficiency.}
	
	We formalize the preceding argument in a model of product reviews. In this model, a sequence of consumers decides whether to buy a product of some uncertain quality and, if they do, what kind of review to write about their experience. A consumer only buys the product if she expects to derive high enough utility from consuming it. The realized consumption utility is informative about the product's quality and the expected utilities that other consumers would derive from consuming this product.\footnote{Unlike many other papers on experimentation, we do \emph{not} restrict attention to a model of exponential/Poisson bandits with binary utility outcomes, allowing instead for rich heterogeneity in utility realizations.} 
	The consumer can leave a review describing her consumption experience, and in doing so she wishes to maximize the welfare of future consumers considering the same purchase.
	Note, in particular, that we place the choice of how to communicate the consumption experiences directly in the consumers' hands. This approach differs from existing literature on experimentation in social learning, in which a centralized recommender platform makes these decisions (\citet*{KMP,CH}). It also differs from the extensive literature on observational learning, which takes the observation/communication technology for granted (\citet*{BHTW}). This difference and other distinctions are discussed in more detail in Section \ref{sec:LIT}.
	
	The myopic behavior at the purchasing stage and the altruistic desire to induce some experimentation with the product at the reviewing stage conflict with each other. We show that this conflict creates noise in communication through reviews. Instead of reporting their experiences truthfully, consumers may obfuscate their reviews to foster experimentation. This is true regardless of whether consumers can commit to some communication strategy ex ante (which can be interpreted as a shared social norm in reviewing) or select their review ex post, as in cheap talk models.
	
	In particular, we show that in the commitment scenario, the optimal communication strategy features perfect communication of extreme experiences (very high or very low utilities), but remains vague about experiences that would have put the next consumer close to indifference. The intuition is that the consumer would sometimes like to exaggerate product quality to induce a socially optimal purchase that the next consumer would not make if she were fully informed. However, for such a recommendation to be credible, it must also sometimes be issued when the product experience was, in fact, good -- meaning that such a review has to be vague, only imperfectly revealing the consumer's experience with the product.
	
	If a consumer cannot commit to a communication strategy, then despite the conflict of interest between her and a future consumer occurring only in a narrow set of circumstances -- when buying the product is socially optimal, but not individually optimal for the future consumer, -- the effects of this conflict propagate and distort communication in other cases as well. Communication is then always noisy in equilibrium, taking the interval structure often seen in cheap talk models when similar experiences are pooled in the same message. 
	
	Our paper provides a possible explanation for inflation in product reviews---namely, that reviews can sometimes be inflated in equilibrium, in order to deceive consumers into purchasing a product they would not have bought otherwise. This complements other possible explanations, including positive ratings being sponsored or faked by the sellers. Contrary to those explanations, in our case inflation arises endogenously as a result of interaction between consumers, with no intervention from the firm.
	Further, noise in communication is welfare-enhancing in our model, helping not only the seller, but also the consumers in aggregate.
	
	Our findings can be translated to settings beyond product reviews. In particular, they could be readily applied to scientific progress in the broad sense, contributing to the debate on whether negative results should be published.\footnote{See \citet{BPB20} and \citet{BMWetal} for evidence of publication bias in Economics, and \citet{NK20} and \cite{Bik} for examples of broader discussion on the topic.} 
	Our paper presents an argument \emph{against} publishing some negative results. Specifically, consider a world in which a series of researchers must decide whether to attempt to contribute to a broad research question (e.g., ``what is dark matter?'', ``how to detect a graviton?'', or ``are microplastics associated with negative health outcomes?''). Each researcher only attempts a project on the topic if they expect some positive findings to emerge (since there is no sense in looking for something that does not exist). In such a world, our findings suggest that mildly negative results should be suppressed if they discourage research on a topic that is socially beneficial. The idea is that such weak results may render the topic sufficiently unappealing for individual researchers to be unwilling to investigate it further, yet they are too weak to outweigh the potential social benefit if this research avenue were to be correct after all. Our model demonstrates that the benefits accruing from continuing exploration (by suppressing such weak negative findings) outweigh the losses from suppressing information.\footnote{\citet{CS24} also discuss dynamic externalities in research production, but explore a different question of the optimal \emph{novelty} of a research project.}
	
	Similarly, our results can be applied to designing informational nudges for present-biased individuals, as explored by \citet*{MSSW}. For example, consider a present-biased individual who is trying to develop a new positive habit (running in the mornings or cutting sugars/alcohol/tobacco from their diet), but is uncertain whether they can make this habit stick. We find that such an individual would benefit from keeping a journal of their experiences with the habit, where they would describe all extreme experiences accurately, positive and negative, but withhold mildly negative experiences. Recording the latter would run the risk of disincentivizing the present-biased self in the future from adhering to the habit, but would not constitute a sufficient reason for the forward-looking self to abandon the habit (unlike more extreme negative experiences).

	The remainder of the paper proceeds as follows. We review the relevant literature in Section \ref{sec:LIT}.
	We then formulate the model in Section \ref{sec:MOD}. Sections \ref{sec:EX} and \ref{sec:EQ} analyze a version of the model in which consumers are assumed to be able to commit to some communication structure, with Section \ref{sec:EX} exploring an illustrative three-period example, and Section \ref{sec:EQ} generalizing the insights to an infinite-horizon problem.
	Section \ref{sec:CT} then analyzes a model without commitment, where communication is cheap talk, looking at both the three-period example, and an infinite-horizon model.
	Section \ref{sec:CON} concludes. Proofs of all results are relegated to the Appendix.

	\section{Literature Review} \label{sec:LIT}
	
	Our paper belongs to the literature on social experimentation and the design of social learning; notable references include \citet*{KMP}, \citet*{MSS}, \citet{CH}, and \citet{CM}.
	This literature investigates the planner's problem of devising a communication protocol that incentivizes short-lived and/or selfish agents to experiment with a novel alternative more than they would on their own, for the sake of social benefit.\footnote{A separate literature explores optimal experimentation by groups of \emph{long-lived} agents, which explores the issues of free-riding that are fundamentally similar to the core friction in our model: see \citet{BH} and \citet*{KRC} for settings with full observability; \citet*{HRS} explore strategic communication under private payoffs. These papers focus on the repeated game effects -- i.e., devising the optimal reward and punishment strategies -- which are absent from games with short-lived agents, such as ours.}
	We explore what is effectively a decentralized version of the models mentioned above: instead of a single benevolent planner issuing recommendations, we allow each agent to communicate in the way they deem optimal. While the agents are altruistic and so have the same objective as the planner, they lack the \emph{memory} of a central planner. In our model, once one agent withholds some of the information about their experience, this information is lost forever -- unlike with the planner, who can remember privately all information that is not revealed publicly. This assumption is novel relative to the aforementioned papers and introduces a fundamentally new trade-off, since any attempt by the sender to mislead one receiver now carries not only the benefit of more social experimentation, but also the cost of subsequent receivers having worse information at their disposal. So, while the planner could simply make binary recommendations to consumers (whether to continue or stop experimenting/purchasing), in our decentralized setting, a richer communication strategy is optimal.\footnote{This trade-off is quite similar to that arising in problems of public communication to a heterogeneous audience; for some examples see \citet{AC} and \citet{IP}.}
	
	Closely related is the literature on herding and cascades in observational learning. In these models, agents receive private signals \emph{before} making a choice, and only observe past agents' \emph{choices}, which leads to inefficiencies (see \citet*{BHTW} for a more detailed overview of the topic). 
	\citet{Wol} considers a setting in which agents observe only past agents' outcomes, rather than actions, while \citet*{CBH} and \cite*{LSB} allow for various combinations of the two. These papers conclude that imperfect communication may increase the probability of cascades, and argue that perfect communication could alleviate the inefficiency. We show that despite this, perfect communication is not welfare-maximizing.
	\citet*{GHH} show that if only one action (such as buying a product) is observable to other consumers, then a cascade may only arise on this action. In our setting, the converse is true: cascades may only arise on ``not buying the product'', the action that does \emph{not} produce information.
	\citet{AK} and \citet*{SST} consider sequential observational learning with other-regarding preferences, but assume that agents only observe past agents' actions, and not their private information or their outcomes. 
	
	Social learning with strategic information provision was explored by \citet{SV15}; in their model conflict arises from senders' career concerns, which are absent in our case.
	\citet{LM} consider a sequential information acquisition model, where the information that an agent acquires is also observed by future agents. They show that information acquisition driven by myopic incentives can lead to long-run inefficiencies. 
	We show that tension between a forward-looking sender and a myopic receiver leads to noise in information provision.
	
	One version of the model we consider, where the sender cannot commit to a communication strategy, belongs to the literature on dynamic cheap talk.
	\citet*{AAK} explore hierarchical cheap talk, in which all communication must go through a chain of biased intermediaries. In our setting, all agents are perfectly benevolent. \citet*{RSV} consider a game in which a single sender repeatedly reports the state that follows a Markov process. The repeated nature of interaction shifts focus towards the repeated game effects, which are absent in our model. Highly related is the work of \citet{Chi}, who looks at a committee of agents with fully aligned interests and shows that perfect communication between the committee members may be infeasible. Friction in that model is driven by: (1) the agents' uncertainty about others' informedness and (2) the limited capacity of the communication channel. Communication constraints, however, play a crucial role in the result -- perfect communication would be an equilibrium if agents were able to report not only their recommendation but also reveal the information this recommendation is based on. Our model, in contrast, shows that technological constraints are not necessary for communication to be imperfect in settings with highly aligned interests.
	Contemporary work by \citet{BV24} considers a version of our three-period model without commitment, but with hard information {\`a} la \citet{DYE}. They show that such a communication technology produces outcomes very similar to the commitment technology we explore. We show, however, that this structure continues to hold in an infinite-horizon setting -- where more informative communication could, in principle, be optimal -- and we obtain our results without imposing any logconcavity assumptions on signal distributions.
	
	Our question is motivated by consumers' altruistic motives for writing reviews. This is mainly based on real-world consumers' own accounts regarding why they write reviews (\citet{Trustpilot}). However, the economic literature has also argued for a long time that people's real-world behavior often exhibits regard for others (see surveys by \citet{FS2}, \citet{Kon}, and \citet{Mei}).
	While evidence exists that market environments decrease the morality of participants' behavior (\cite{FS13}), it is still natural to expect at least some altruism towards fellow consumers in such settings.
	In particular, \citet{MZ} and \citet*{PRSX} find experimental evidence of altruistic motives in games of observational learning specifically.

	\section{The Model} \label{sec:MOD}
	
	\subsection{Primitives}
	
	Time is assumed to proceed in discrete periods: $t \in \mathcal{T}$. We consider versions with three periods, $\mathcal{T} = \{1,2,3\}$, and infinite horizon, $\mathcal{T} = \mathbb{N}$. All agents share a common discount factor $\beta$.
	
	\paragraph{Seller.} There is a single long-lived seller, who offers for sale a single product that he has in infinite supply at zero cost. Product quality $\theta$, which represents the average consumption utility of the product, can be either \emph{low} or \emph{high}: $\theta \in \{L,H\}$, with $0 \leqslant L < H$. The price of the product is fixed at $c > 0$; to avoid triviality we assume that $L < c < H$.
	
	\paragraph{Consumers.} In every period $t$, a single short-lived consumer $C_t$ arrives at the market. 	
	Upon entering the market, the consumer can either purchase the good at cost $c$ and subsequently leave a review as described further, or leave the market forever. The latter option yields the reservation utility normalized to $0$. 
	In case of purchase, the consumer receives a random consumption utility $v$, distributed according to a quality-contingent c.d.f. $F_\theta$ with mean $\theta$ and a respective p.d.f. $f_\theta$.
	We assume that both distributions have full support on the same open interval $\mathcal{V} = (\underline{v}, \overline{v}) \subseteq \mathbb{R}$, and their respective densities are continuously differentiable and bounded.\footnote{The support may be infinite: $\underline{v} = -\infty$ and $\overline{v} = +\infty$ are both admissible values. The common support assumption implies that no realized utility rules out one of the possible quality levels $\theta$.}
	In addition, we assume that the monotone likelihood ratio property (MLRP) holds:
	\begin{assumption*}[MLRP]
		Ratio $L(v) := \ln \left[\frac{f_{H}(v)}{f_{L}(v)}\right]$ is a strictly increasing and continuous function of $v$ on $\mathcal{V}$. Moreover, $\lim\limits_{v \rightarrow \underline{v}} L(v) = -\infty$, and $\lim\limits_{v \rightarrow \overline{v}} L(v) = +\infty$.\footnote{In the terminology of \citet*{SS_2000}, the assumption implies beliefs are unbounded.}
	\end{assumption*}
	
	All consumers are assumed to be Bayesian risk-neutral agents with lexicographic preferences, with the first-order preference being the consumer's own consumption utility and the second-order preference being the future consumers' expected consumption utility, discounted with factor $\beta$. 
	
	The consumer does not observe the product quality $\theta$, so her purchasing decision is based on her belief $p := \mathbb{P}(\theta = H)$ about the product quality, given the information available to her. In particular, the consumer purchases the product if and only if her expected consumption utility weakly exceeds the cost of purchase (we assume that the consumer purchases the product when indifferent):
	\begin{equation*}
		\theta(p) := p \cdot H + (1-p) \cdot L \geqslant c \iff p \geqslant \bar{p},
	\end{equation*}
	where $\bar{p} := \frac{c - L}{H - L}$. This purchasing strategy will be taken as given in what follows.

	\paragraph{Reviews.} If the good was purchased, the consumer then sends a message (writes a review) $m \in \mathcal{M}$ to subsequent consumers, describing her experience with the product. 
	A consumer's communication strategy maps her observed review history (described further) and her experience $v_t$ to $\varDelta (\mathcal{M})$. The optimal communication strategy maximizes the expected discounted sum of consumption utilities of all future consumers. 
	We consider two main versions of the model throughout the paper: in the \emph{commitment} scenario, we assume that consumers choose their communication strategy simultaneously with their purchasing strategy.\footnote{This commitment assumption can either be taken at face value -- that a consumer can choose a reviewing strategy in advance and follow it through -- or it can be seen through the lens of the planner's problem of devising an optimal \emph{social norm} that the consumers would use when writing reviews.} 
	In the \emph{cheap talk} version of the game, the consumer selects message $m \in \mathcal{M}$ after observing $v_t$ and not before.

	\paragraph{Timing.} Within a given period $t \in \mathcal{T}$, the sequence of events is as follows:
	\begin{enumerate}
		\item Consumer $C_t$ arrives at the market and observes all past reviews $\left(m_1, m_2, \ldots, m_{t-1}\right)$ and forms belief $p_t$ about the quality of the product.
		\item $C_t$ decides whether to purchase the product at cost $c$ or not and, in the \emph{commitment} model only, publicly selects her communication strategy.\footnote{We assume for simplicity that in the commitment model, every consumer's communication strategy is observable to all future consumers. Without this assumption, multiple equilibria could arise in which a specific communication strategy could be supported by some specific on- and off-the-equilibrium-path beliefs. It is straightforward that the equilibrium we find Pareto-dominates all such self-reinforcing equilibria.}
		\item After a purchase she receives random consumption utility $v_t \sim F_\theta$ and updates her belief about the product's quality.
		\item After a purchase, $C_t$ leaves review $m_t$ about her experience, according to her communication strategy in the \emph{commitment} model, or freely in the \emph{cheap talk} model. The review is observable to all subsequent consumers $(C_{t+1}, C_{t+2}, ...)$. If $C_t$ has not purchased the product, she leaves no review: $m_t = \varnothing$.
	\end{enumerate}

	\subsection{Histories, State Variables, Strategies} \label{sub:hist}
 
	\emph{Review history} $R_t := \left(m_1, m_2, \ldots, m_{t-1}\right)$ is a tuple consisting of all messages sent by consumers before period $t$. It constitutes the public history at the beginning of period $t$.\footnote{Formally, the observability of communication strategies in the commitment model implies that they must be included in public history, in addition to the realized messages. To economize on notation, we follow the convention in the field and suppress the dependence of histories and other objects on communication strategies. It should be understood that a given message $m_t$ is interpreted in the context of the observed communication strategy in the commitment model (and in the context of the equilibrium communication strategy in the cheap talk model, as is standard).} 
	We denote the \emph{public belief} about the quality of the product as 
	\begin{equation*}
		p_t := \mathbb{P}(\theta = H \mid R_t).
	\end{equation*} 
	The prior belief $p_1 = \mathbb{P}(\theta = H \mid \varnothing)$ is exogenously fixed and commonly agreed upon. 
	The \emph{private posterior belief} of consumer $C_t$ in the event she purchased and consumed the product is given by
	\begin{equation} \label{eq:bel_con1}
		b_t = b(p_t,v_t) := \mathbb{P}(\theta = H \mid p_t, v_t)
		= \frac{p_t f_H(v_t)}{p_t f_H(v_t) + \left(1-p_t\right)f_L(v_t)},
	\end{equation}
	where \eqref{eq:bel_con1} follows from the Bayes' rule.
	
	The public belief $p_t$ contains all payoff-relevant information available to $C_t$ at the time she decides whether to purchase the product. The pair of beliefs $p_t$ and $b_t$ summarizes all payoff-relevant information available to $C_t$ when she decides which message to send to subsequent consumers.
	Therefore, in what follows, we look at Markov equilibria (defined in Section \ref{sub:eqdef}), where $p_t$ is the time-$t$ \emph{public state} and a sufficient statistic of the review history $R_t$, and the tuple $(p_t,b_t)$ is $C_t$'s \emph{private state} and a sufficient statistic of her private history $(R_t,v_t)$. 
	
	The consumers' purchasing decisions are myopic, and so the uniquely optimal strategy (up to indifference) is ``buy if and only if $p_t \geqslant \bar{p}$''. Therefore, from this point onward, we focus on the consumers' \emph{communication} strategies.
	A behavioral strategy of consumer $C_t$ is $\mu_t$, where $\mu_t(m | p_t,b_t)$ denotes the probability with which $C_t$ sends message $m \in \mathcal{M}$ conditional on private state $(p_t,b_t)$. 
	In the commitment model, consumer $C_t$ chooses $\mu_t(p_t,\cdot)$ given public state $p_t$, while in the cheap talk model, $C_t$ chooses $\mu_t(p_t,b_t)$ given private state $(p_t, b_t)$.
	
	Let $\mathcal{M}_t(p_t) := \left\{ m \in \mathcal{M} \mid \exists \; b_t:\, \mu_t(m \mid p_t,b_t) > 0 \right\}$ denote the set of messages that are sent according to $\mu_t$ at $p_t$.
	Whenever $p_t \geqslant \bar{p}$, the public belief $p_{t+1}$ induced by an on-path message $m \in \mathcal{M}_t(p_t)$ is given by the Bayes' rule:
	\begin{equation} \label{eq:bel_con2}
		p_{t+1} = q_t(m | p_t) := \frac{p_t \cdot \int\limits_0^1 \mu_t(m \mid p_t,b_t) dF_H (v_t)}{p_t \cdot \int\limits_0^1 \mu_t(m \mid p_t,b_t) dF_H (v_t) + (1-p_t) \cdot \int\limits_0^1 \mu_t(m \mid p_t,b_t) dF_L (v_t)}
	\end{equation}
	If, on the other hand, $p_t < \bar{p}$, then $C_t$ does not buy the product, and message $m = \varnothing$ is sent, so $p_{t+1} = p_t$. In this case, $C_{t+1}$ has exactly the same information at the time she makes her purchasing decision as $C_t$ and does not purchase the product either.
	We shall refer to $p_{t+1}=q_t(m|p_t)$ as the \emph{posterior} or the \emph{induced public belief}.
	
	Given some strategy profile, let $\mathcal{P}_t(p_t) := \{q_t(m | p_t) \mid m \in \mathcal{M}_t(p_t)\}$ denote the set of all public posteriors induced by consumer $C_t$.
	We partition this set into the ``experimentation set'' $\mathcal{E}_t(p_t) := \left\{ q \in \mathcal{P}_t(p_t) \mid q \geqslant \bar{p} \right\}$, which includes all public posteriors $q$ that convince $C_{t+1}$ to buy the product, and the ``stopping set'' $\mathcal{S}_t(p_t) := \left\{ q \in \mathcal{P}_t(p_t) \mid q < \bar{p} \right\}$, which contains all $q$ that deter her from the purchase.
	Note that if $p_t \geqslant \bar{p}$ then $\mathcal{E}_t(p_t) \ne \emptyset$, as $p_t$ is a martingale (from the consumers' point of view).
	Conversely, as argued above, if $p_t < \bar{p}$ then $\mathcal{P}_t(p_t) = \mathcal{S}_t(p_t) = \{p_t\}$.
	The latter implies that all $q \in \mathcal{S}_t(p_t)$ are equivalent in the sense of shutting the market down from period $t+1$ onward.

	\subsection{Maximization Problem and Equilibrium Definitions}
	\label{sub:eqdef}
	
	Given a strategy profile for all future consumers, when consumer $C_t$ sends message $m$ at private state $(p_t,b_t)$, her continuation value (the discounted sum of future consumers' utilities) from doing so is equal to
	\begin{equation} \label{eq:value}
		V_t(m \mid p_t,b_t) := \mathbb{E} \left[ \sum\limits_{s=t+1}^{+\infty} \beta^{s-t-1} \cdot \mathbb{I}\left(p_s \geqslant \bar{p}\right) \cdot (v_s-c) \: \bigg| \: m, p_t, b_t \right].
	\end{equation}
	Implicit in \eqref{eq:value} is the fact that $m$ together with $p_t$ determines $p_{t+1}$ in equilibrium. It also embeds the dependence between future $v_s$ and future $p_s$, stemming from the future consumers' strategies.
	
	In the commitment model, we are looking for the Markov Perfect Equilibria (MPE) of the game, defined as follows.
	\begin{definition*}
		A Markov Perfect Equilibrium with Commitment consists of a collection of consumer communication strategies $\left\{ \mu_t(m | p_t,b_t) \right\}_{t \geqslant 1}$ and belief updating rules $b(p_t,v_t)$ and $\left\{ q_t(p_t,m) \right\}_{t \geqslant 1}$ such that the following conditions hold:
		\begin{itemize}
			\item \textbf{Belief Consistency:} condition \eqref{eq:bel_con1} holds for all $(p_t,b_t)$, and \eqref{eq:bel_con2} holds for all $p_t$ and all $m \in \mathcal{M}_t(p_t)$;
			\item \textbf{Optimality:} for any given $p_t$, and continuation strategies $\left\{\mu_t (m | p_\tau, b_\tau),\, q_t(p_\tau,m) \right\}_{\tau > t}$, communication strategy $\mu_t (m \mid p_t, b_t)$ maximizes $\mathbb{E}_{b} \left[ V_t(m \mid p_t,b) \mid p_t \right]$.
		\end{itemize}
	\end{definition*}
	
	The definition above presents a Markov equilibrium: $C_t$'s communication strategy does not depend on the review history $R_t$, except through belief $p_t$.
	The belief consistency condition ensures that all consumers use Bayes' rule to update their belief whenever possible. Optimality requires that every consumer chooses a communication strategy so as to maximize her ex ante value, as given by the expectation of \eqref{eq:value} over $v_t$ or, equivalently, $b_t$.
	In particular, maximizing the ex ante value means that $C_t$ commits to a communication strategy before observing the realized consumption utility $v_t$ (but she can condition on public state $p_t$).
	
	In turn, for the cheap talk game, the equilibrium definition is as follows.
	\begin{definition*}
		A Markov Perfect Equilibrium with Cheap Talk consists of a collection of consumer communication strategies $\left\{ \mu_t(m | p_t,b_t) \right\}_{t \geqslant 1}$ and belief updating rules $b(p_t,v_t)$ and $\left\{ q_t(p_t,m) \right\}_{t \geqslant 1}$ such that the following conditions hold:
		\begin{itemize}
			\item \textbf{Belief Consistency:} condition \eqref{eq:bel_con1} holds for all $(p_t,b_t)$, and \eqref{eq:bel_con2} holds for all $p_t$ and $m \in \mathcal{M}_t(p_t)$;
			\item \textbf{Ex Post Optimality:} for any given $p_t$, and continuation strategies $\left\{\mu_t (m | p_\tau, b_\tau),\, q_t(p_\tau,m) \right\}_{\tau > t}$, if $\mu_t(m^* \mid p_t,b_t) > 0$ then $m^* \in \arg \max_{m} V_t(m \mid p_t,b_t)$.
		\end{itemize}
	\end{definition*}
	
	In particular, the (ex ante) Optimality condition is replaced with Ex Post Optimality: the message sent by $C_t$ must now maximize her continuation value given her private state. To pin down the off-path beliefs, we assume that $q_t(m | p_t) = \varepsilon$ for all $p_t$ and all $m \notin \mathcal{M}_t(p_t)$ for some $\varepsilon < \bar{p}$.\footnote{This belief is admissible in equilibrium, since $v_t$ has full support, and thus at every $p_t$ there exists $v$ deemed possible on the equilibrium path such that $b(p_t,v) = \varepsilon$.}
	
	Without loss of generality, in both cases, we assume that $\mathcal{M} = [0,1]$ and restrict attention to equilibria with \emph{direct communication}, where $C_t$'s review simply prescribes the belief that $C_{t+1}$ must have after reading this and all other reviews: i.e., $m_t = q_t(p_t,m_t)$. This assumption is made for illustrative simplicity, allowing us to ignore the distinction between message $m_t$ and belief $p_{t+1}$ that it induces. Further, we assume that the set of stopping messages $\mathcal{S}_t(p_t)$ always consists of a single representative element whenever it is nonempty; as argued above, this is also without loss.

	\section{Three-Period Model with Commitment} \label{sec:EX}

	This section demonstrates the main insights in a simple three-period commitment model: $\mathcal{T} = \{1,2,3\}$. We look at a \emph{non-stationary} MPE with commitment, where the three consumers' strategies can be different. Suppose further that there is no discounting, $\beta=1$, so $C_1$ treats welfare of both $C_2$ and $C_3$ equally. We solve the example by backward induction. Fix some prior belief $p_1 \geqslant \bar{p}$. For the sake of this example, assume that consumption utilities are $v_t \sim \text{i.i.d.}\ \mathcal{N} \left(\theta, \sigma^2\right)$. 
	
	In period $t=3$, $C_3$ purchases the product if and only if $\theta(p_3) = H \cdot p_3 + L \cdot (1-p_3) \geqslant \bar{p}$, and her messaging strategy is irrelevant, since no consumers arrive at the market after her. We hence continue straight to $t=2$.

	\subsection{Second Period} \label{sub:EX2}
	
	In the second period, if $p_2 < \bar{p}$ then, as mentioned in the model setup, the game effectively ends. $C_2$ does not buy the product, writes no review, so $p_3 = p_2 < \bar{p}$, and $C_3$ does not buy the product either. Payoffs of $C_2$ and $C_3$ are zero in this case.
	Conversely, if $p_2 \geqslant \bar{p}$ then $C_2$'s continuation value equals $C_3$'s expected consumption utility: $V_2(m | p_2, b_2) = \theta(q(m | p_2)) - c$. Therefore, at the review stage $C_2$ would prefer to act in the best interest of $C_3$. Truthful communication, where $C_2$ reports $m_2 = p_2$, is thus optimal.
	
	However, perfect communication is not necessary to achieve the maximal payoff for $C_3$.
	Note that the only piece of information relevant to $C_3$ is whether to buy the product or not. She cannot make use of more precise information to make better recommendations to future consumers because there are no future consumers. Therefore, $C_2$ can achieve her optimum via a simple binary communication strategy that sends two distinct messages depending on whether $b_2 \geqslant \bar{p}$ or $b_2 < \bar{p}$.

	\subsection{First Period} \label{sub:EX1p} \label{sub:EX1c}
	
	As shown above, perfect communication is optimal for $C_2$. We now show that the same is not true for $C_1$. 
	We begin this section by analyzing $C_1$'s continuation value $V_1(m_1 \mid p_1,b_1)$ as a function of the message she sends or, equivalently, of the public belief $p_2 = m_1 = q_1(m_1 | p_1)$ she induces. 
	
	Recall that we assumed in the example setup that $p_1 \geqslant \bar{p}$, otherwise $C_1$ does not buy the product and all values are zero. 
	Then $C_1$ buys the good and receives utility $v_1$. If she sends $m_1 \leqslant \bar{p}$, then neither $C_2$ nor $C_3$ will buy the product, and $C_1$'s continuation value is zero.
	If she sends $m_1 \geqslant \bar{p}$, then $C_2$ purchases the product, obtains utility $v_2$, and reveals her posterior $b_2$ truthfully. Following that, $C_3$ purchases the product if and only if $p_3 \geqslant \bar{p} \Leftrightarrow b_2 = b(p_2,v_2) \geqslant \bar{p} \Leftrightarrow v_2 \geqslant \bar{v}_2(p_2)$, where $\bar{v}_2(p_2)$ is defined as the solution to $b\left(p_2,\bar{v}_2(p_2) \right) = \bar{p}$.
	Therefore, $C_1$'s continuation value from inducing public belief $p_2$ is given by
	\begin{equation*}
	V_1 (p_2 \mid p_1, b_1) =
	\begin{cases}
		\hat{V}(p_2 \mid p_1, b_1) & \text{ if } m_1 \geqslant \bar{p},
		\\
		0 & \text{ if } m_1 < \bar{p},
	\end{cases}
	\end{equation*}
	where $\hat{V}(p_2 \mid p_1, b_1)$ is $C_1$'s continuation value if $C_2$ buys the product:
	\begin{align}
		\nonumber
		\hat{V}(p_2 \mid p_1, b_1) &:= \mathbb{E} \left[ (v_2-c) + (v_3-c) \cdot \mathbb{I}\{v_2 \geqslant \bar{v}_2(p_2) \} \mid b_1 \right]
		\\
		\label{eq:val_nonbab}
		&= \theta(b_1) - c + b_1 \cdot \left(1-F_H(\bar{v}_2(p_2))\right) \left(H - c\right) + \left(1-b_1\right) \cdot \left(1-F_{L}(\bar{v}_2(p_2))\right) \left(L - c\right).
	\end{align}

	\begin{figure}
		\centering
		\begin{tikzpicture}[xscale=6,yscale=3.2]
			\draw[->] (0,-0.8) -- (0,1.2) node[left] {$\hat{V}(p_2 \mid p_1, b_1)$};
			\draw[->] (0,0) -- (1.1,0) node[below] {$p_2$};
			\draw[dotted] (1,-0.8) -- (1,1.2);
			\draw (1,0) node[below left] {$1$};
			\draw (0,0) node[below left] {$0$};
			
			\def\pbar{0.33333};
			\draw[dotted] (\pbar,1.2) -- (\pbar,-0.8);
			\draw (\pbar,0) node[below right]{$\bar{p}$};
			\draw (\pbar,0.03) -- (\pbar,-0.03);
			
			\draw[cyan,dotted] (0.2,0) -- (0.2, -0.361);
			\filldraw[cyan] (0.2, -0.361) circle [x radius=0.01, y radius=0.02];
			\draw (0.2,-0.361) node[below]{\tiny $b_1=\frac{1}{5}$};
			\draw [cyan, dashed] plot [smooth] coordinates { (0,-0.4) (0.01,-0.3999) (0.02,-0.3999) (0.03,-0.3998) (0.05,-0.3986) (0.07,-0.3952) (0.1,-0.3863) (0.15, -0.369) (0.2, -0.361) (0.25, -0.3703) (0.3, -0.3977) (0.33, -0.4216)};
			\draw [cyan] plot [smooth] coordinates { (0.33, -0.4216) (0.35, -0.4401) (0.4, -0.4924) (0.45, -0.5492) (0.5, -0.6056) (0.55, -0.6578) (0.6, -0.7028) (0.65, -0.739) (0.7, -0.7658) (0.75, -0.7835) (0.8, -0.7937) (0.85, -0.7984) (0.9, -0.7998) (0.92, -0.7999) (0.95, -0.7999) (0.97, -0.7999) (0.99, -0.8) (0.999, -0.8) (1, -0.8)};
			
			\draw[blue,dotted] (0.5,0) -- (0.5, 1.0488);
			\filldraw[blue] (0.5, 1.0488) circle [x radius=0.01, y radius=0.02];
			\draw (0.5, 1.0488) node[above]{\tiny $b_1=\frac{1}{2}$};
			
			\draw [blue, dashed] plot [smooth] coordinates { (0, 0.5) (0.0001, 0.5) (0.001, 0.5) (0.005, 0.5) (0.01, 0.5) (0.02, 0.5) (0.03, 0.5004) (0.05, 0.5041) (0.07, 0.5149) (0.1, 0.5472) (0.15, 0.6359) (0.2, 0.743) (0.25, 0.8443) (0.3, 0.9269) (0.3333, 0.9692)};
			
			\draw [blue] plot [smooth] coordinates { (0.3333, 0.9692) (0.35, 0.9866) (0.4, 1.0242) (0.45, 1.0434) (0.5, 1.0488) (0.55, 1.045) (0.6, 1.0361) (0.65, 1.0255) (0.7, 1.0157) (0.75, 1.0081) (0.8, 1.0033) (0.85, 1.001) (0.9, 1.0001) (0.92, 1) (0.95, 1) (0.97, 1) (0.99, 1) (0.999, 1) (1,1)};
			
			\draw[red,dotted] (0.3333,0) -- (0.3333, 0.1949);
			\filldraw[red] (0.3333, 0.1949) circle [x radius=0.01, y radius=0.02];
			\draw (0.3333, 0.1949) node[above]{\tiny $b_1=\frac{1}{3}$};
			\draw [red, dashed] plot [smooth] coordinates { (0,0) (0.03, 0.0003) (0.05, 0.0026) (0.07, 0.0093) (0.1, 0.0286) (0.15, 0.0777) (0.2, 0.1297) (0.25, 0.1695) (0.3, 0.191) (0.3333, 0.1949) };
			\draw [red] plot [smooth] coordinates {	(0.3333, 0.1949) (0.35, 0.194) (0.4, 0.1817) (0.45, 0.1587) (0.5, 0.1297) (0.55, 0.099) (0.6, 0.07) (0.65, 0.0452) (0.7, 0.026) (0.75, 0.0128) (0.8, 0.005) (0.85, 0.0013) (0.9, 0.0002) (0.95, 0) (0.97, 0) (0.99, 0) (0.999, 0) (1,0)};
			
			\matrix [draw, fill=white, below left] at (current bounding box.north east) {
				\draw [blue] ++(-0.2,0) -- ++(0.4,0) node[black,right] {\footnotesize $b_1=1/2$}; \\
				\draw [red] ++(-0.2,0) -- ++(0.4,0) node[black,right] {\footnotesize $b_1=1/3$}; \\
				\draw [cyan] ++(-0.2,0) -- ++(0.4,0) node[black,right] {\footnotesize $b_1=1/5$}; \\
			};
		\end{tikzpicture}
		\caption{$\hat{V}(p_2 \mid p_1,b_1)$ as a function of $p_2$. \label{fig:example}}
		\small
		\emph{Note: The parameter values are $H=3, L=0, c=1$ (so $\bar{p}=1/3$), $\sigma = 4$. Since $\hat{V}(p_2 \mid p_1,b_1)$ only coincides with $V_1(p_2 \mid p_1,b_1)$ when $p_2 \geqslant \bar{p}$, we use dashed lines for values at $p_2 < \bar{p}$.}
	\end{figure}
	
	Analyzing \eqref{eq:val_nonbab}, we can identify several important properties of $\hat{V}(p_2 \mid p_1, b_1)$, which are plotted in Figure \ref{fig:example}.
	We start by noting that $\hat{V}(p_2 \mid p_1,b_1)$ is single-peaked in $p_2$, with a peak at $p_2 = b_1$. This means that \emph{conditional} on $C_2$ buying the product, $C_1$ wants to communicate truthfully and induce the correct belief, $p_2=b_1$. 
	To see this, observe that
	\begin{equation*}
		\frac{\partial \hat{V}(p_2 \mid p_1, b_1)}{\partial p_2} = \left(1-b_1\right) \cdot f_L(\bar{v}_2(p_2)) \cdot \frac{\bar{p} \cdot \sigma^2}{(1 - \bar{p}) \cdot p_2 (1-p_2)} \left( \frac{b_1}{1-b_1} \cdot \frac{1 - \bar{p}}{\bar{p}} \cdot \frac{f_H(\bar{v}_2(p_2))}{f_L(\bar{v}_2(p_2))} - 1\right),
	\end{equation*}
	where single-peakedness follows from $\frac{f_H(\bar{v}_2(p_2))}{f_L(\bar{v}_2(p_2))}$ being strictly decreasing in $p_2$ (due to MLRP), and the term multiplying the bracket being positive. 
	The peak must satisfy 
	\begin{align*}
		\frac{\partial \hat{V}(p_2 \mid p_1, b_1)}{\partial p_2} = 0
		&&\iff &&
		\frac{\bar{p}}{1-\bar{p}} = \frac{b_1}{1-b_1} \cdot \frac{f_{H}(\bar{v}_2(p_2))}{f_{L}(\bar{v}_2(p_2))},
	\end{align*}
	which, by \eqref{eq:bel_con1} and the definition of $\bar{v}_2(p_2)$, is equivalent to $p_2 = b_1$.
	We conclude that \emph{after} $C_1$ convinces $C_2$ to buy, she has no incentives to distort her review (and the same is true if $C_1$ convinces $C_2$ to pass). Therefore, the only incentives for $C_1$ to misreport her experience can come from the need to convince $C_2$ to buy the product.
	
	To see that such a motive is indeed present, observe that $\hat{V}(p_2 \mid p_1,b_1)$ is positive for $b_1 = \bar{p} - \varepsilon$ for at least some $\varepsilon > 0$. This means that $C_1$ sometimes wants $C_2$ to purchase a product that yields negative expected consumption utility. This is due to the social value of experimentation (i.e., of information generated by $C_2$'s purchase), which is internalized by $C_1$ in her communication strategy, but not by $C_2$ in her purchasing strategy. 
	To see this, note first that $\hat{V}(p_2 \mid p_1,\bar{p}) > 0$ for all $p_2$, since $F_H(\bar{v}_2(p_2)) < F_L(\bar{v}_2(p_2))$. Function $\hat{V}(p_2 \mid p_1, b_1)$ is continuous in $b_1$, hence it is also strictly positive in some neighborhood of $b_1 = \bar{p}$. This implies that $C_1$ strictly prefers to induce $p_2 \geqslant \bar{p}$ for at least some posteriors $b_1 < \bar{p}$, as compared to inducing $p_2 < \bar{p}$: she wants $C_2$ to purchase the product despite herself believing that this is not myopically optimal.  
	
	However, a recommendation to buy the product must be credible: if it is sometimes made after $b_1 < \bar{p}$, it must also be sometimes made after $b_1 > \bar{p}$ for the posterior $p_2$ to be $p_2 \geqslant \bar{p}$. 
	Suppose then that instead of being perfectly informative, $C_1$ sends the same message $\bar{p}$ after all $b_1$ in some $\varepsilon$-neighborhood of $\bar{p}$. Reporting $b_1 > \bar{p}$ inaccurately yields a welfare loss, but it is only of the order $\varepsilon^2$, since $\hat{V}(\bar{p} \mid p_1,b_1)$ is tangent to $\hat{V}(b_1 \mid p_1,b_1)$ at $b_1 = \bar{p}$. Intuitively, this little lie does not affect $C_2$'s expected payoff, but may lead $C_2$ to discourage $C_3$ from buying a product that actually has a positive expected payoff. Both the expected payoff that $C_3$ would be missing out on and the probability of such a mistake are roughly proportional to $\varepsilon$, hence the total loss is of order $\varepsilon^2$.
	On the other hand, misreporting $b_1 = \bar{p} - \varepsilon$ as $\bar{p}$ yields a benefit approximately equal to $\varepsilon \cdot \hat{V}(\bar{p} \mid p_1,\bar{p})$, due to inducing a purchase from $C_2$ in cases where $C_2$ would have passed if she knew $b_1$. As was previously established, $\hat{V}(\bar{p} \mid p_1,\bar{p}) > 0$, hence the benefit is of order $\varepsilon$, and it outweighs the cost for at least some small $\varepsilon$. Therefore, garbling information when $b_1$ is close to $\bar{p}$ is indeed beneficial for $C_1$.
	Further, belief $p_2$ induced by the pooling message must be exactly $\bar{p}$. If it was lower, $C_2$ would not buy the product, which is $C_1$'s goal. If it was higher, then $C_1$ could reduce the pooling interval by communicating the highest of pooled $b_1$ truthfully instead -- which $C_1$ prefers, by the single-peakedness argument above.\footnote{While such a deviation would also decrease $C_1$'s value from the pooling message, the convexity of $\hat{V}(b_1 \mid p_1,b_1)$ in $b_1$ together with Jensen's inequality suggests that the deviation is indeed beneficial in expectation.}
	A similar argument suggests that the optimal pooling interval must be convex.
	
	In the end, the following communication strategy for $C_1$ is optimal, according to the intuitive argument above: send message $\bar{p}$ when $b_1$ is in some neighborhood of $\bar{p}$, and report $b_1$ truthfully otherwise.
	Figure \ref{fig:optimal_value} plots the value attained by $C_1$ under this communication strategy. 
	In the following section, we demonstrate formally in the context of an infinite-horizon model that such a communication strategy is indeed optimal.\footnote{The argument for the three-period model can be obtained by following the same steps as in the proof of Theorem~\ref{thm:commitment} below. A contemporary paper by \citet{BV24} formally derives an optimal strategy in a similar three-period model and shows that it is indeed of the form that we present here.} 
	
	\begin{figure}
		\centering
		\subfloat[][Truthful communication.]{
		\begin{tikzpicture}[xscale=6.5,yscale=6, 
				vhat/.style={dashed},
				truth/.style={color=violet!70!white, line width=1},
				commit/.style={color=red!70!yellow, line width=1},
				cheap/.style={color=cyan, line width=1},
			]
			\plotvhattrue
			\plottruthtrue
			\plotcommitfalse
			\plotcheapfalse
			
			\def\offsettruth{0.005};
			\def\offsetcommit{0.01};
			\def\offsetcheap{0.006};
			
			\valuefig
			
			\def\stroke{++(-0.2,0) -- ++(0.4,0)};
			\matrix [draw, fill=white, below right] at (0.05,1.0) {
				\draw [vhat] \stroke node[black,right] {\footnotesize $\hat{V}(b_1 \mid p_1,b_1)$}; \\ 
				\draw [truth] \stroke node[black,right] {\footnotesize $V_1(b_1 \mid p_1,b_1)$}; \\ 
			};
		\end{tikzpicture}
		}
		\subfloat[][Optimal communication with commitment.]{
		\begin{tikzpicture}[xscale=6.5,yscale=6, 
				vhat/.style={dashed},
				truth/.style={color=violet!70!white, line width=1},
				commit/.style={color=red!70!yellow, line width=1},
				cheap/.style={color=cyan, line width=1},
			]
			\plotvhattrue
			\plottruthfalse
			\plotcommittrue
			\plotcheapfalse
			
			\def\offsettruth{0.003};
			\def\offsetcommit{0.005};
			\def\offsetcheap{0.006};
			
			\valuefig
			
			\def\stroke{++(-0.2,0) -- ++(0.4,0)};
			\matrix [draw, fill=white, below right] at (0.05,1.0) {
				\draw [vhat] \stroke node[black,right] {\footnotesize $\hat{V}(b_1 \mid p_1,b_1)$}; \\ 
				\draw [commit] \stroke node[black,right] {\footnotesize $V_1^{Comm}(m^* \mid p_1,b_1)$}; \\
			};
		\end{tikzpicture}
		}
		\caption{$C_1$'s continuation value given different communication strategies. \label{fig:optimal_value}}
		
		\footnotesize\emph{Note: Dashed line in both figures represents $C_1$'s continuation value from truthtelling conditional on $C_2$ buying the product. Solid purple line in panel (a) plots $C_1$'s value from truthful communication; solid red line in panel (b) plots $C_1$'s value from optimal communication with commitment (and $m^*$ denotes the message sent by the optimal communication strategy).}
	\end{figure}

	\section{Infinite-Horizon Model with Commitment} \label{sec:EQ}

	This section describes the equilibrium of the infinite-horizon commitment game with discounting: $\mathcal{T} = \mathbb{N}$; $\beta < 1$.
	Specifically, we are looking for \emph{stationary} MPE with Commitment (hereinafter simply ``equilibria'') of the game, where the consumers' communication strategies $\mu (m | p_t, b_t)$ and updating rules $q(p_t,m)$ do not depend on $t$.
	It is not ex ante clear whether the equilibrium with an infinite horizon would necessarily look the same as in the three-period example, since in the latter $C_1$ only needed to deceive one following consumer, $C_2$, all to benefit $C_3$. With the infinite horizon, $C_t$ may similarly want to induce more experimentation by deceiving $C_{t+1}$ while supplying her with enough information to give good recommendations to $C_{t+2}$ and onward. However, two new considerations arise. First, both the benefit and the cost of the noise in $C_t$'s review now affect an infinite number of future consumers, who benefit from $t+2$ onward from more experimentation at $t+1$, but suffer from $t+1$ onward from acting on worse information than they could have had. Second, $C_t$ now has to take into account that $C_{t+1}$ may want to deceive $C_{t+2}$ to benefit those from $t+3$ onward, and $C_{t+2}$ may want to deceive $C_{t+3}$, and so on. Would these considerations potentially lead $C_t$ to alter her communication strategy? 
	We show below that this is not the case, and the optimal communication strategy is similar to the one arising in the three-period example: pooling experiences close to $\bar{p}$ and perfectly informative otherwise. In particular, neither perfect communication is optimal, nor is it optimal for a consumer to introduce other kinds of noise to help the next consumer(s) trick those who follow.
	
	The statement of the main result below and the argument behind it mirror the conclusions from the three-period model (Section \ref{sub:EX1c}): $C_t$'s desire to inflate the review of a marginally-bad item for the sake of social experimentation results in garbled communication being optimal. The reviewer ends up issuing the same review for a product that she believes is barely good enough and for a product that is subpar but not bad enough for her to outright reject. After reading such a review, $C_{t+1}$ is exactly indifferent between buying the product and not. Conversely, if $C_t$ is sufficiently confident in her judgment of the product quality, then it is optimal for her to report her experience truthfully.
	
	\begin{theorem} \label{thm:commitment}
	There exists a unique equilibrium such that for any $p_t \geqslant \bar{p}$, $C_t$'s communication strategy is characterized by cutoffs $0 < l(p_t) < \bar{p} < r(p_t) < 1$ such that:
		\begin{enumerate}
			\item For all $b_t \in [0,l(p_t))$, $C_t$ truthfully reveals her private belief $b_t$: $m_t = b_t$, i.e., the experimentation stops.
			\item For any $b_t \in (r(p_t), 1]$, $C_t$ truthfully reveals her private belief $b_t$: $m_t = b_t$.
			\item For all $b_t \in \left[l(p_t), r(p_t)\right]$, $C_t$ sends message $m_t = \bar{p}$.
		\end{enumerate}
	Any other equilibrium is payoff-equivalent to this equilibrium.
	\end{theorem}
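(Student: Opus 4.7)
The plan is to reformulate the consumer's problem as a one-shot information-design problem amenable to concavification. Because $W(b_t, p_{t+1}) := b_t W_H(p_{t+1}) + (1-b_t) W_L(p_{t+1})$ is affine in $b_t$, where $W_\theta(q)$ is the equilibrium continuation value from $t+1$ onward given $\theta$ and $p_{t+1}=q$, the law of iterated expectations applied to Bayes-plausibility $\mathbb{E}[b_t \mid m_t] = p_{t+1}$ collapses the reviewer's ex ante objective to $\mathbb{E}_m[\Phi(p_{t+1})]$ where $\Phi(q) := q W_H(q) + (1-q) W_L(q)$. Given any equilibrium continuation, the reviewer therefore solves $\max \mathbb{E}[\Phi(p_{t+1})]$ subject to $\mathbb{E}[p_{t+1}] = p_t$, with optimal value $\mathrm{cav}\,\Phi(p_t)$ achieved by distributions whose support lies in the tangency set $\{q:\mathrm{cav}\,\Phi(q)=\Phi(q)\}$.

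I would then establish three properties of $\Phi$ that generalize those identified in Section \ref{sub:EX1p}: (i) $\Phi(q) = 0$ for all $q < \bar{p}$, since the market shuts down at any posterior below the purchase threshold; (ii) $\Phi(\bar{p}) > 0$, because the marginal buyer at $\bar{p}$ obtains zero myopic surplus but her purchase reveals a noisy MLRP signal that strictly benefits some later consumers with positive probability -- the infinite-horizon counterpart of $\hat{V}(\bar{p} \mid p_1, \bar{p}) > 0$; and (iii) $W(b, q)$ is single-peaked in $q$ at $q = b$ whenever $b$ is bounded away from $\bar{p}$, so that above some cutoff $r^* > \bar{p}$ the function $\Phi$ coincides with its own concave closure. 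Together, (i) and (ii) force $\Phi$ to jump upward at $\bar{p}$, which prevents $\mathrm{cav}\,\Phi$ from agreeing with $\Phi$ on any left-neighborhood of $\bar{p}$ and forces a supporting line through $(\bar{p}, \Phi(\bar{p}))$; (iii) pins down the right endpoint of this line at $r^*$, where the line is tangent to $\Phi$ from above.

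Translating the resulting shape of $\mathrm{cav}\,\Phi$ back into a strategy yields the three regions. For $b_t \geq r^*$, truthful reporting $m_t = b_t$ places $p_{t+1}$ in the region where $\mathrm{cav}\,\Phi = \Phi$; for $b_t$ in an interval whose Bayesian mean equals $\bar{p}$, sending $m_t = \bar{p}$ places $p_{t+1}$ on the supporting line; for the residual low-$b_t$ interval, any message in $\mathcal{S}(p_t)$ stops the market, which is weakly optimal since the supporting line from $(\bar{p}, \Phi(\bar{p}))$ dominates $\Phi \equiv 0$ on $[0, \bar{p})$. The constraint $\mathbb{E}[b_t \mid b_t \in (l(p_t), r(p_t))] = \bar{p}$, combined with the dependence of the Bayesian density $f_{b_t \mid p_t}$ on $p_t$, determines $l(p_t)$ once $r(p_t) = r^*$ is fixed, and the ordering $l(p_t) < \bar{p} < r(p_t)$ follows immediately from $\Phi(\bar{p})>0$ together with single-peakedness of $W(b,\cdot)$ at $b$ for $b > r^*$.

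The main obstacle will be the recursive definition of $\Phi$: because $W_H, W_L$ themselves depend on future reviewers' strategies, which by stationarity have the very structure under analysis, properties (ii) and (iii) must be verified at the equilibrium fixed point. I would handle this by imposing a threshold-form ansatz for future strategies, showing it induces $\Phi$ satisfying (i)--(iii), and then closing the loop. The key technical step is controlling the infinite-horizon experimentation value at the pooling belief $\bar{p}$ -- ensuring that MLRP signals generated by future purchases eventually push the posterior decisively above or below the stationary cutoff with probability bounded away from zero, which by the full-support and bounded-density assumptions on $f_L, f_H$ keeps $\Phi(\bar{p}) > 0$ and rules out pathological non-convex shapes of $\Phi$ on $[\bar{p}, r^*]$.
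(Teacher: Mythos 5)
Your reduction of the ex ante objective to $\mathbb{E}_m[\Phi(p_{t+1})]$ via the linearity of $V(m\mid p_t,b_t)$ in $b_t$ and the identity $\mathbb{E}[b_t\mid m]=p_{t+1}$ is correct and matches the paper's Lemma~\ref{lem:representation}. Properties (i) and (ii) of $\Phi$ are also established in the paper (the latter is Step~8 of the proof, $V^*(\bar{p})>0$). But the concavification step has a genuine gap, and property (iii) contradicts what the paper actually proves about the shape of the value function.

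First, the constraint set is wrong. The reviewer does not observe $\theta$; she observes $b_t$ and commits to a garbling of $b_t$. The set of achievable distributions of $p_{t+1}$ is therefore not the full Bayes-plausible set $\{\nu : \mathbb{E}_\nu[p_{t+1}]=p_t\}$ used in Kamenica--Gentzkow concavification, but the strictly smaller set of mean-preserving contractions of $\Phi(b\mid p_t)$. So $\mathrm{cav}\,\Phi(p_t)$ is an unattainable upper bound, and the ``tangency set'' characterization of optimal support does not apply. Second, and more seriously, the paper's Lemma~\ref{lem:vstar} shows that $V^*(b)=V(b\mid p_t,b)$ -- which differs from your $\Phi$ only by the linear term $\theta(b)-c$ -- is \emph{convex} on $[\bar{p},1]$. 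A convex function coincides with its concave closure only at endpoints, so your claim (iii), that $\Phi$ agrees with $\mathrm{cav}\,\Phi$ above some $r^*$, fails. Unrestricted concavification applied to a function that is zero on $[0,\bar{p})$, jumps at $\bar{p}$, and is convex above would push mass toward the extreme points $\{0\}\cup\{1\}$ (or one tangent point), not produce the ``pool at $\bar{p}$, reveal truthfully above $r^*$'' structure the theorem asserts. The reason the theorem's structure is nonetheless optimal is precisely the garbling constraint: on the convex region $[\bar{p},1]$ the reviewer would like to spread posteriors further but cannot do better than full revelation $p_{t+1}=b_t$, while below $\bar{p}$ she gains a first-order benefit from pooling into $\bar{p}$ against only a second-order loss from compressing a small interval above $\bar{p}$. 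The paper makes this precise via the envelope inequality $V(m\mid p_t,b_t)\leqslant V^*(b_t)$ (which shows truthful revelation is interim-optimal pointwise and hence pins down behavior for high $b_t$ without any appeal to concave closure), together with a sequence of local perturbation arguments (Steps~3--7) that force the single pooling message to induce posterior $\bar{p}$ and the pooling region to be an interval. Your ansatz/fixed-point concern at the end is legitimate but secondary; the primary issue is that the persuasion-style argument does not deliver the theorem's structure.
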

	
	The conclusion that some noise is optimal is driven by the lexicographic nature of the consumers' preferences. When writing a review, $C_t$ maximizes welfare, so she would like $C_{t+1}$ to sometimes buy the product when it is myopically suboptimal for her in order to generate more information about product quality. 
	Reviewer $C_t$ thus faces an incentive to sometimes send message $m \geqslant \bar{p}$ when $b_t < \bar{p}$. The benefit of such an upwards distortion for an interval of beliefs $b_t \in (\bar{p}-\varepsilon, \bar{p})$ is approximately equal to $\varepsilon \cdot V(\bar{p} \mid p_t, \bar{p}) \sim \mathcal{O}(\varepsilon)$. 
	For such a message to induce a posterior belief $q(m | p_t) \geqslant \bar{p}$, this distortion must be balanced by a proportionate downwards distortion: the same message $m$ must be sent after some $b_t > \bar{p}$. 
	Such a downwards distortion is costly, since it not only conceals some information about the state from all future consumers, but it also decreases the amount of experimentation from $t+2$ onward. However, the total losses from distorting the review downwards after $b_t \in (\bar{p}, \bar{p} + \varepsilon)$ are of order $\mathcal{O}(\varepsilon^2)$, because $V(p_{t+1} \mid p_t, b_t)$ is continuous and smooth in $p_{t+1} \geqslant \bar{p}$. 
	For small enough $\varepsilon$, the gains outweigh the costs, so it is optimal to send a vague review when $C_t$ is sufficiently uncertain of whether the product is worth buying, as opposed to revealing the private posterior $b_t$ exactly.
	
	Part 3 of the Theorem states that the pooling interval is convex, i.e., it is an interval of private posteriors $b_t$ around $\bar{p}$ that is pooled into the vague message, as opposed to disparate experiences from both sides. In particular, the incentive to lie only exists for $b_t$ \emph{slightly} below $\bar{p}$. If $b_t \ll \bar{p}$, then $C_t$'s opinion of the product is so low that the product is not worth experimenting with any further, even after accounting for the informational externality of future purchases, hence reporting such beliefs truthfully is optimal. 
	On the other hand, the cheapest way to make the ``buy'' recommendation credible is to pool it with experiences $b_t$ just above $\bar{p}$, as opposed to $b_t \gg \bar{p}$. In the latter case, $C_t$ would need to distort her experience heavily downwards relative to $b_t$, and the welfare cost of such a distortion increases \emph{faster} in $b_t$ than the persuasive power generated by pooling such very positive experiences into message $m_t=\bar{p}$. Credibility of message $m_t = \bar{p}$ is, therefore, best achieved by distorting the marginally positive experiences $b_t \in [\bar{p}, r(p_t))$, while the more positive experiences are best reported truthfully in order to allow future consumers to make precise recommendations.
	
	Interestingly, Theorem \ref{thm:commitment} can be related to empirical evidence suggesting that consumers are most likely to write reviews after either very positive, or very negative experiences (\citet{Trustpilot}).
	Our result suggests that this reviewing strategy is socially optimal, since leaving no review can be treated as a pooling (catch-all) message $m = \bar{p}$. Of course, in the real world, many consumers do not leave a review regardless of their experience with the product, but our model can be extended to account for that without affecting the main result.
	
	The formal proof of Theorem \ref{thm:commitment} in the Appendix proceeds in three main steps. First, we show that pooling is only beneficial around the cutoff. The second step shows that gains from pooling over an arbitrarily small interval of posteriors will be of the first order, while losses will be of the second order, meaning that some noise is always optimal in equilibrium. The final step shows that the equilibrium of the specified form (with pooling around $\bar{p}$ and truthtelling otherwise) does, indeed, exist, and is unique up to payoff equivalence.

	\section{Cheap Talk Model} \label{sec:CT}
	
	The two previous sections analyze a model with commitment and derive the optimal ``social norm'' -- the planner's welfare-maximizing guideline for altruistic consumers. A question inevitably arises: would consumers abide by this social norm? In this section, we address it by considering a cheap talk version of the model, where consumers write their review \emph{after} observing their utility realization $v_t$ (the equilibrium definition is adjusted accordingly, see Section \ref{sub:eqdef}.). We show that the optimal norm cannot be an equilibrium outcome in such a setting. This section begins by analyzing the three-period model, where we show that if communication is informative in both $t=1$ and $t=2$, then it is noisier than the optimal norm, leading to too much experimentation and lower welfare. We then proceed to analyze an infinite-horizon model, where we show that noise is robust: it \emph{must} arise in equilibrium regardless of how other consumers behave in the continuation equilibrium, so long as they produce meaningful information. While truthful communication is possible in a given period, this can only happen if social learning is expected to fully stop afterwards.

	\subsection{Three-Period Model} \label{sub:int3}
	
	In this section we analyze the three-period cheap talk model. As in Section \ref{sec:EX}, we include current period $t$ in both private and public states, allowing strategies to be time-dependent, and focus on non-stationary MPE with Cheap Talk. The first thing to note is that the analysis of periods $t=3$ and $t=2$ is completely analogous to what is presented in Section \ref{sec:EX} for the model with commitment. Specifically, in period 3, communication is irrelevant, since no one else arrives at the market to read $C_3$'s review. Then in period 2, $C_2$ has no reason to misreport her experience, and hence truthful reporting is an equilibrium.\footnote{Unlike in Section \ref{sec:EX}, in cheap talk model there exist continuation equilibria at $t=2$ that are \emph{payoff-distinct} from the perfect communication equilibrium. However, for the purpose of exposition, in this section we select continuation equilibria with perfect communication in period 2 in order to simplify the analysis of communication in period 1.}
	Furthermore, the analysis of $C_1$'s continuation payoffs in Section \ref{sub:EX1p} carries over to this setting as well.
	Therefore, we focus on analyzing the equilibrium first-period communication strategy and show that the equilibrium communication strategy must have an interval structure: i.e., there exists a partition $0 = \Delta_0 < \Delta_1 < \Delta_2 < \ldots = 1$ and messages $m_0, m_1, \ldots$ such that if $b_1 \in (\Delta_{j}, \Delta_{j+1})$ then $\mu(m_j \mid p_1, b_1) = 1$. 
	
	Suppose that $\mathcal{S}_1(p_1)$ is nonempty, i.e., there exists a review $m_0 \in \mathcal{S}_1(p_1)$ that will prevent $C_2$ from buying the product. Then this review will be used by $C_1$ if her private posterior $b_1$ is low enough. To see this, recall that $C_1$'s continuation value $V(p_2 \mid p_1, b_1)$ is defined in \eqref{eq:value}, and her continuation value conditional on $C_2$ buying the item, $\hat{V}(p_2 \mid p_1, b_1)$, is given by \eqref{eq:val_nonbab}. For $b_1$ close to $0$, expression \eqref{eq:val_nonbab} reduces to $\hat{V}(p_2 \mid p_1,b_1) \approx (L - c) \cdot \left( 1 + 1 - F_L (\bar{v}_2(p_2))\right)$. This expression is negative because $L < c$, whereas sending $m_0$ yields $V(m_0 \mid p_1,b_1) = 0$, and is therefore preferred. 
	
	Consider now ``the weakest recommendation to buy'' -- the smallest posterior belief among those available in equilibrium that lead $C_2$ to purchase the product, $m_1 = \min \{ m | m \in \mathcal{E}_1(p_1)\}$. After which experiences $v_1$ or, equivalently, for which private posteriors $b_1$ will $C_1$ send this message? Let $\varDelta_1$ denote the private posterior such that $C_1$ is indifferent between sending reviews $m_0$ and $m_1$: 
	\begin{align*}
		V_1(m_0 \mid p_1, \varDelta_1) &= V_1(m_1 \mid p_1, \varDelta_1) &&\iff&& 0=\hat{V}(m_1 \mid p_1, \varDelta_1).
	\end{align*}
	Since $\hat{V}(p_2 \mid p_1, b_1)$ is (linearly) increasing in $b_1$ for a given $p_2$, it follows that $C_1$ will prefer to send $m_0$ when $b_1 < \varDelta_1$ and to send $m_1$ when $b_1 > \varDelta_1$. Further, recall from Section \ref{sub:EX1p} that $\hat{V}(p_2 \mid p_1, \bar{p}) > 0$ for all $p_2$, so $\varDelta_1 < \bar{p}$. Importantly, $C_1$ also prefers $m_1$ to all other reviews $m_j > m_1$ when $b_1 \in (\varDelta_1, \bar{p}]$, because $\hat{V}(p_2 \mid p_1,b_1)$ is decreasing in $p_2$ when $p_2 > b_1$ (due to single-peakedness shown in \ref{sub:EX1p}).
	In the end, there exists a non-trivial range $(\varDelta_1, \bar{p}]$ of private posteriors, for which $C_1$ believes the product yields negative expected consumption utility, but she finds it strictly optimal to send $m_1$ and recommend a purchase. The intuition is the same as in Section \ref{sec:EX}: $C_1$ wants $C_2$ to purchase the product in order to generate information for $C_3$, even if it is not myopically optimal for $C_2$.
	
	In order to understand what the whole equilibrium communication strategy looks like, consider the (potentially infinite) set $\mathcal{M}_1(p_1)$ of messages $m_0 < m_1 < m_2 < ...$ that are available to $C_1$ in equilibrium. Then single-peakedness of $\hat{V}(p_2 \mid p_1,b_1)$ implies that $C_1$'s optimal (up to indifference) communication strategy is pure and monotone: higher $b_1$ leads to a higher message $m$. Conversely, the range of private beliefs for which $m_1$ is sent has to be convex, call it $(\varDelta_1, \varDelta_2]$. 
	Since $C_2$ is rational and Bayesian, her inference from $m_1$ must be consistent with $C_1$'s equilibrium strategy: $m_1 = \mathbb{E} \left[b_1 \mid b_1 \in (\varDelta_1, \varDelta_2] \right]$. Since $m_1 \geqslant \bar{p}$ and $\varDelta_1 < \bar{p}$, this implies that $\varDelta_2 > m_1$.
	In turn, $C_1$ with posterior $b_1 = \Delta_2$ must (by continuity of $\hat{V}$) be indifferent between leaving review $m_1$ and a higher review $m_2 > m_1$. However, we know that $\hat{V}(p_2 \mid p_1,b_1)$ is single peaked in $p_2$ with a peak at $p_2 = b_1$, hence the indifference condition $\hat{V}(m_1 \mid p_1,\Delta_2) = \hat{V}(m_2 \mid p_1,\Delta_2)$ implies that $m_2 > \varDelta_2$. 
	By iterating the argument, we conclude that 
	\begin{equation*}
		... < \varDelta_j < m_j < \varDelta_{j+1} < m_{j+1} < ...
	\end{equation*}
	with strict inequalities at every step. This sequence may be either finite with $\varDelta_j = 1$ at some step, or infinite, depending on distributions $F_\theta$.
	Therefore, equilibrium communication at $t = 1$ necessarily has an interval structure: instead of communicating her private belief $b_1$ truthfully (or, equivalently, consumption utility $v_1$ she received), $C_1$ only indicates which interval $(\varDelta_j, \varDelta_{j+1}]$ her posterior $b_1$ belongs to.
	Intuitively, the fact that the aforementioned ``most cautious recommendation to buy'', $m_1$, is noisy and not perfectly revealing of $b_1$ implies that all other messages must be noisy as well. Notably, perfect communication is thus impossible even for high posteriors $b_1$ when there is no conflict between the sender and the receiver.
	The latter suggests that the classic result of \citet{CS}, who show that cheap talk communication has an interval structure, does not rely on the conflict being present throughout the whole state space, as their model assumes.
	
	Figure \ref{fig:optimal_d} illustrates the payoffs in a potential cheap talk equilibrium. It plots the continuation payoff of $C_1$ in an interval equilibrium with three messages, $m_0 < m_1 = \bar{p} < m_2$.\footnote{A cheap talk equilibrium like the one in Figure \ref{fig:optimal_d} may or may not exist, depending on the value distributions $F_\theta$. Specifically, equilibria with more or fewer messages on the equilibrium path may exist, and posterior $m_1 \geqslant \bar{p}$ may not necessarily be exactly equal to $\bar{p}$.} Panel (a) plots this payoff against the first-best value $\hat{V}(b_1 \mid p_1, b_1)$ (obtained if $C_1$ could relay the correct belief $b_1$ \emph{and} force $C_2$ to purchase the item). The two coincide whenever $b_1 \in \{ m_1, m_2 \}$, but the cheap talk value is strictly lower for all other posteriors. This is due to the fact that while $C_1$ manages to convince $C_2$ to experiment with the product, the noise in communication makes the purchasing decision of the \emph{third} consumer less efficient.
	
	Panel (b) of Figure \ref{fig:optimal_d} compares a cheap talk equilibrium to the optimal social norm. While it is obvious that cheap talk can be no better than if $C_1$ had the ability to commit, the figure shows where these losses come from. Specifically, it is the noise for high $b_1$ that harms welfare -- in those cases, there is no conflict between $C_1$ and $C_2$, so truthful communication would be optimal, but the noise propagates from the ``conflict region'' $b_1 \in (\varDelta_1, \bar{p}]$, as argued above. At the same time, we can see that there can be more experimentation under cheap talk: weak recommendation $m_1$ is sent after a possibly wider range of posteriors $b_1$, including some in the conflict region, for which the optimal norm prescribes sending a negative review, $m_0$. Under the optimal social norm, $C_1$ commits to not recommend particularly bad products, for which the social value of experimentation is marginally positive, in order to be able to communicate truthfully more often (i.e., for a wider range of $b_1$) when $b_1 > \bar{p}$. Under cheap talk, however, she cannot avoid the temptation to recommend such products, which leads to more noise and lower welfare overall.
	Finally, while the equilibrium in Figure \ref{fig:optimal_d} has the feature that $m_1 = \bar{p}$, it may well be the case that $m_1 > \bar{p}$ in equilibrium. If so, this adds to inefficiency by reducing experimentation for $b_1 < \bar{p}$ while not necessarily reducing the average noise for $b_1 > \bar{p}$.
	
	\begin{figure}
		\centering
		\subfloat[][Equilibrium communication under cheap talk.]{
			\begin{tikzpicture}[xscale=6.5,yscale=6, 
				vhat/.style={dashed},
				truth/.style={color=violet!70!white, line width=1},
				commit/.style={color=red!70!yellow, line width=1},
				cheap/.style={color=cyan, line width=1},
				]
				\plotvhattrue
				\plottruthfalse
				\plotcommitfalse
				\plotcheaptrue
				
				\def\offsettruth{0.005};
				\def\offsetcommit{0.01};
				\def\offsetcheap{0.004};
				
				\valuefig
				
				\filldraw[cyan] (\pbar,0.125-\offsetcheap) ellipse(0.01) node[above]{$m_1$};
				\def\m2{0.83};
				\filldraw[cyan] (\m2,\m2*\m2*\m2-\offsetcheap) ellipse(0.01) node[above left]{$m_2$};
				\draw (\cheapL,0) node[below]{\footnotesize $\Delta_1$};
				\draw (\cheapR,0) node[below]{\footnotesize $\Delta_2$};
				\draw[dotted] (\cheapR,0) -- (\cheapR,25/12*\cheapR-125/108-\offsetcheap);
				
				\def\stroke{++(-0.2,0) -- ++(0.4,0)};
				\matrix [draw, fill=white, below right] at (0.05,1.0) {
					\draw [vhat] \stroke node[black,right] {\footnotesize $\hat{V}(b_1 \mid p_1,b_1)$}; \\ 
					\draw [cheap] \stroke node[black,right] {\footnotesize $V_1^{CT}(m^* \mid p_1,b_1)$}; \\
				};
			\end{tikzpicture}
		}
		\subfloat[][Cheap talk vs commitment.]{
			\begin{tikzpicture}[xscale=6.5,yscale=6, 
				vhat/.style={dashed},
				truth/.style={color=violet!70!white, line width=1},
				commit/.style={color=red!70!yellow, line width=1, dashed},
				cheap/.style={color=cyan, line width=1},
				]
				\plotvhatfalse
				\plottruthfalse
				\plotcommittrue
				\plotcheaptrue
				
				\def\offsettruth{0.003};
				\def\offsetcommit{0.0};
				\def\offsetcheap{0.005};
				
				\valuefig
				
				\def\stroke{++(-0.2,0) -- ++(0.4,0)};
				\matrix [draw, fill=white, below right] at (0.05,1.0) {
					\draw [commit] \stroke node[black,right] {\footnotesize $V_1^{Comm}(m^* \mid p_1,b_1)$}; \\
					\draw [cheap] \stroke node[black,right] {\footnotesize $V_1^{CT}(m^* \mid p_1,b_1)$}; \\
				};
			\end{tikzpicture}
		}
		\caption{$C_1$'s continuation value with cheap talk. \label{fig:optimal_d}}
		
		\footnotesize\emph{Note: Solid blue line in both panels plots $C_1$'s value in a cheap talk equilibrium; dashed red line in panel (b) plots $C_1$'s value from optimal communication with commitment; $m^*$ denotes the messages sent by the optimal communication strategies in the respective scenarios.}
	\end{figure}

	\subsection{Infinite-Horizon Model} \label{sub:infct}
	
	We now generalize the intuition of the three-period model with cheap talk to an infinite horizon. We again focus on stationary MPE with Cheap Talk, in which the communication strategies $\mu (m | p_t, b_t)$ and belief updating rules $q(p_t,m)$ are time-invariant. The equilibrium multiplicity problem associated with cheap talk models is greatly exacerbated with an infinite horizon, even with a restriction to stationary equilibria. This prevents us from providing a tight equilibrium characterization. However, we can still show that there must be a pooling region around $\bar{p}$, unless communication in all subsequent histories is sufficiently uninformative. In the latter case, we argue that truthful communication can arise, which is not trivial with an infinite horizon, since a ``terminal consumer'' no longer exists. 
	
	To formulate the results, we first introduce the notion of a cascade from the observational learning literature (\citet*{BHTW}), where it describes situations when social learning stops and the society locks in on one alternative (possibly the wrong one).
	
	\begin{definition*}
		Message $m \in \mathcal{E}(p_t)$ at public state $p_t$ starts a \emph{cascade} if after such a message, $p_s \geqslant \bar{p}$ for all $s > t$.
	\end{definition*}

	In other words, we say that some recommendation to purchase issued at $p_t$ leads to all future consumers buying the product, regardless of any of the interim consumers' experiences and reviews. Once a cascade starts, no new reviews can change future consumers' behavior. There are two things to note in relation to cascades. First, any message $m \in \mathcal{S}(p_t)$ at any $p_t$ necessarily starts a cascade as well, in the sense that no future consumers buy the product again, as discussed in Section \ref{sub:hist}.
	Second, with cheap talk, a continuation equilibrium always exists in which any given $m \in \mathcal{E}(p_t)$ starts a cascade. One example is the babbling equilibrium, one in which all future reviews are uninformative and are perceived as such, and thus the public belief remains frozen at $q(m | p_t)$.\footnote{
		Babbling equilibria are ubiquitous in cheap talk models. To see that babbling is an equilibrium, note that neither player has a profitable deviation. The sender cannot benefit by sending informative messages because they are ignored by the receivers regardless, and the receivers cannot benefit by following the sender's recommendation since it is uninformative. 
	}
	However, in general, a cascade need not shut down information transmission completely: reviews may be informative and affect the public belief $p_t$ as long as they do not affect future consumers' actual purchasing decisions.
	
	\begin{proposition} \label{prop:NPR_CT}
		In any stationary MPE with cheap talk, in any public state $p_t$: $[\bar{p},1] \subset \mathcal{P}(p_t)$ only if any message $m \in \mathcal{M}(p_t)$ starts a cascade.
	\end{proposition}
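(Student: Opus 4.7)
I plan to argue by contrapositive: if some $m^* \in \mathcal{M}(p_t)$ does not start a cascade, then $[\bar{p},1] \not\subset \mathcal{P}(p_t)$. Messages in $\mathcal{S}(p_t)$ trivially start negative cascades, so I may assume $m^* \in \mathcal{E}(p_t)$ with $q^* := q(m^* \mid p_t) \in [\bar{p},1]$ and, by hypothesis, the continuation equilibrium from $p_{t+1}=q^*$ puts positive probability on $\{p_s < \bar{p}\}$ for some finite $s > t$. The first step is to write
\begin{equation*}
V(m \mid p_t, b_t) = b_t A(q) + (1-b_t) B(q), \qquad q = q(m \mid p_t),
\end{equation*}
where $A(q)$ and $B(q)$ are the equilibrium discounted continuation payoffs from $p_{t+1}=q$ conditional on $\theta = H$ and $\theta = L$. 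The non-cascade assumption combined with MLRP yields $A$ strictly increasing and $B$ strictly decreasing in a neighborhood of $q^*$, so $V$ has strict increasing differences $A'(q) - B'(q) > 0$, and the sender's best-response posterior $q^{\mathrm{opt}}(b_t)$ is weakly monotone.

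Second, I derive the contradiction. Assume $[\bar{p},1] \subset \mathcal{P}(p_t)$. For the image of the monotone $q^{\mathrm{opt}}$ to cover the uncountable set $[\bar{p},1]$, the map must be strictly monotone on some sub-interval, so that every posterior in it has a singleton preimage; Bayesian consistency then forces $q^{\mathrm{opt}}(b_t) = b_t$ there (perfect communication on a sub-interval of $[\bar{p},1]$). Consider a type $b_t = \bar{p} - \epsilon$ for small $\epsilon > 0$: under perfect communication she sends a stopping message and obtains $0$, yet by deviating to the message $m^\bullet$ inducing posterior $\bar{p}$ (which is in $\mathcal{P}(p_t)$ by hypothesis) she obtains $b_t A(\bar{p}) + (1-b_t) B(\bar{p})$, which by linearity converges to $\bar{p} A(\bar{p}) + (1-\bar{p}) B(\bar{p})$ as $\epsilon \downarrow 0$. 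I need this limit to be strictly positive, which I establish in two sub-steps: (i) $m^\bullet$ itself cannot start a cascade, because otherwise every type $b_t < \bar{p}$ would strictly prefer an off-path stopping deviation to $m^\bullet$ (since the cascade payoff $(\theta(b_t)-c)/(1-\beta) < 0$), pushing the support of senders of $m^\bullet$ into $[\bar{p},1]$ and contradicting $q(m^\bullet \mid p_t) = \bar{p}$; (ii) a non-cascade continuation from $p_{t+1} = \bar{p}$ has strictly positive ex ante planner value by the usual option-value argument (the $t+1$ consumer buys and gets zero in expectation, while subsequent consumers optimally stop after sufficiently bad news and continue after good news). The resulting strict inequality contradicts ex post optimality of the stopping message for type $\bar{p} - \epsilon$ and completes the proof.

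The main obstacle I anticipate is the reduction carried out in step two: rigorously showing that $[\bar{p},1] \subset \mathcal{P}(p_t)$ together with strict single-crossing forces perfect communication on a sub-interval, even when the equilibrium admits mixed strategies or messages whose preimages carry no positive Lebesgue mass (in which case Bayes must be evaluated via density ratios, as for the injective direct-communication strategies featured in Section~\ref{sec:MOD}). The Bayesian-consistency sub-argument ruling out a cascade at posterior $\bar{p}$ --- on which the strict positivity of the deviation value hinges --- is the subtlest piece, though the underlying intuition mirrors the three-period analysis in Section~\ref{sub:EX1c}.
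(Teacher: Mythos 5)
Your proposal takes a genuinely different route from the paper, and it has several gaps that would need to be filled before it could stand on its own.

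The paper proves this proposition as a nearly immediate consequence of its threshold machinery. Lemma~\ref{lem:threshold1} establishes (directly from the linear representation of $V(m \mid p_t, b_t)$ in Lemma~\ref{lem:representation}, with full generality over mixed strategies) that the senders split at a cutoff $\bar{b}(p_t)$, with types below it stopping and types above it continuing. Lemma~\ref{lem:threshold2} (part~1) then proves, via a stopping-time argument that uses MLRP to show the stopping time is positively correlated with the low state, that $V(p_t,\bar{p}) > 0$ whenever stopping is possible in the continuation; and part~2 deduces the exact biconditional ``$\bar{b}(p_t)=\bar{p}$ iff every $m \in \mathcal{E}(p_t)$ starts a cascade.'' The proposition then follows because $[\bar{p},1] \subset \mathcal{P}(p_t)$ forces the cutoff to be exactly $\bar{p}$ (a non-degenerate pooling region $[\bar{b}(p_t),r]$ with $\bar{b}(p_t)<\bar{p}$ produces a single pooled posterior and necessarily leaves a gap of unreachable posteriors just above $\bar{p}$), and part~2 of Lemma~\ref{lem:threshold2} then converts $\bar{b}(p_t)=\bar{p}$ into the cascade conclusion.

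Your proof reconstructs pieces of this without the scaffolding, and this is where the gaps appear. First, you conclude only that perfect communication occurs on \emph{some} sub-interval of $[\bar{p},1]$, but then immediately assert that a type $\bar{p}-\epsilon$ ``under perfect communication \dots sends a stopping message.'' This does not follow: perfect communication on a sub-interval of $[\bar{p},1]$ says nothing about behavior of types below $\bar{p}$. What you actually need is that the \emph{preimage} of the posterior $\bar{p}$ is the singleton $\{\bar{p}\}$, so that $\bar{b}(p_t)=\bar{p}$; that is a sharper claim about the boundary, not about ``somewhere above $\bar{p}$,'' and you never derive it. Second, your sub-step~(ii), the claim that a non-cascade continuation from $\bar{p}$ has strictly positive planner value by ``the usual option-value argument,'' is precisely the content of Lemma~\ref{lem:threshold2} part~1 in the paper, which is not a throwaway: the paper proves it via an explicit stopping time and the MLRP, showing stopping is positively correlated with bad states so that the flow payoff along the experimentation path is weakly (and at the first decision, strictly) positive. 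An induction on the number of periods until stopping first becomes feasible is also needed, because $\mathcal{S}(q(m \mid p_t))$ itself may be empty even if a later successor admits stopping. Third, sub-step~(i) is circular with the rest of your argument: you are simultaneously assuming that (because of the covering argument) types below $\bar{p}$ stop, and that (to apply Bayesian consistency to $m^\bullet$) types below $\bar{p}$ send $m^\bullet$. Finally, the covering argument that monotone $q^{\mathrm{opt}}$ hitting an uncountable set forces injectivity on a sub-interval needs to handle mixed strategies, where $q^{\mathrm{opt}}$ is not a function of $b_t$ at all — you flag this yourself, but the paper's threshold approach sidesteps it entirely by never relying on an injectivity or covering argument.

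In short: your core intuition (types just below $\bar{p}$ would deviate to the message inducing posterior $\bar{p}$ because the continuation has strictly positive option value) is exactly the engine of the paper's Lemma~\ref{lem:threshold2}, but you invoke it as a single hand-waved step rather than proving it, and the bridge from the covering hypothesis to ``the threshold is exactly $\bar{p}$'' is missing.
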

	
	Proposition \ref{prop:NPR_CT} demonstrates that the conflict between the sender and the receiver of a review precludes perfect communication. It claims that unless \emph{any} message $m$ available in period $t$ starts a cascade, $C_t$ cannot have access to all possible public posteriors $[\bar{p}, 1]$. This implies that some information is inevitably lost in period $t$, unless all information after period $t$ is ignored. The idea is that if $C_{t+1}$ can provide information for future consumers with her purchase, then $C_t$ wants her to produce this externality, which leads $C_t$ to misreport her posterior in some cases. Conversely, if no informative communication is possible at $t+1$ or afterwards, then $C_t$ has no reason to induce experimentation that $C_{t+1}$ is trying to avoid, because the information generated by $C_{t+1}$ would not be meaningful to subsequent consumers either way. In the latter case, $C_t$ can be completely truthful with $C_{t+1}$, just as $C_2$ could be truthful with $C_3$ in the three-period example.

	Proposition \ref{prop:NPR_CT} is a negative statement, claiming that perfectly informative equilibria are unattainable in the cheap talk game beyond a single period. Theorem \ref{thm:dec} below is, conversely, a positive statement, providing a partial (if weak) characterization of what informative equilibria \emph{must} look like.
	
	\begin{theorem} \label{thm:dec}
		In any stationary MPE with cheap talk, for any $p_t$ for which there exists a message that does not start a cascade, there exist $l(p_t)$ and $r(p_t)$ such that $l(p_t) < \bar{p} < r(p_t)$, and for all $b_t \in [l(p_t), r(p_t)]$ we have $\mu (m \mid p_t,b_t) = 1$ for one such $m$.
	\end{theorem}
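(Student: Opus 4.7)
The plan is to adapt the three-period cheap talk argument from Section \ref{sub:int3} to the infinite horizon. Throughout I assume $p_t \geq \bar{p}$; otherwise $\mathcal{M}(p_t) = \{\varnothing\}$ and the statement is vacuous. I first identify a candidate pooling message $m^*$: among the equilibrium messages in $\mathcal{E}(p_t)$ that do not start a cascade (nonempty by the theorem's hypothesis, since any message in $\mathcal{S}(p_t)$ effectively starts a negative cascade at $b_t < \bar{p}$ and cannot support the conclusion), let $m^*$ be one whose induced posterior $p^* = q(m^* \mid p_t) \geq \bar{p}$ is (essentially) smallest, using a limiting argument along a sequence of on-path messages if the infimum is not attained.

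Next, I show that $V(m^* \mid p_t, \bar{p}) > 0$ strictly. Since $\theta(\bar{p}) = c$, the expected period-$(t{+}1)$ return under $m^*$ is zero, so the entire value accrues from subsequent periods. Using $p_s \geq \bar{p} \iff \theta(p_s) \geq c$, the continuation value can be written as $\mathbb{E}_{\bar{p}}\!\left[\sum_{s \geq t+1} \beta^{s-t-1}(\theta(p_s) - c)^+\right]$. Jensen's inequality applied to the convex function $x \mapsto x^+$, together with the martingale property $\mathbb{E}_{\bar{p}}[\theta(p_s)] = c$ and the non-degeneracy of $p_s$ (guaranteed by $m^*$ being non-cascade and by MLRP spreading the induced distribution to both sides of $\bar{p}$), delivers strict positivity.

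The crux is then to show that at $b_t = \bar{p}$, $m^*$ strictly dominates every other $m \in \mathcal{M}(p_t)$. Any $\mathcal{S}$-message yields $0$, and any cascade message in $\mathcal{E}(p_t)$ yields $\frac{\theta(\bar{p}) - c}{1-\beta} = 0$, both strictly below $V(m^* \mid p_t, \bar{p})$. The remaining comparison is against any non-cascade $m' \in \mathcal{E}(p_t)$ with $q(m' \mid p_t) > p^*$: both messages yield the same zero expected period-$(t{+}1)$ payoff, so the comparison reduces to continuation values from $p_{t+1} \in \{p^*, p'\}$. The intuition, inherited from Section \ref{sub:int3}, is that the smaller posterior $p^*$ keeps $C_{t+1}$ closer to indifference and hence yields more responsive downstream decisions; I would formalize this via MLRP-based stochastic-dominance of the induced distributions over $p_{t+2}$, showing that the lower starting posterior is more likely to drive $p_{t+2}$ below $\bar{p}$ after low realizations of $v_{t+1}$, which is precisely what a reviewer with $b_t = \bar{p}$ wants so as to cut off the wasteful cascade in the low state.

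Continuity of $V(m \mid p_t, \cdot)$ in $b_t$ then extends the strict inequalities at $b_t = \bar{p}$ to a neighborhood $[l(p_t), r(p_t)]$ with $l(p_t) < \bar{p} < r(p_t)$, and ex-post optimality forces $\mu(m^* \mid p_t, b_t) = 1$ throughout. The main obstacle is the third step: the three-period case supplied explicit single-peakedness of $\hat V(p_2 \mid p_1, b_1)$ with peak at $p_2 = b_1$ for free, but in infinite horizon the analogous monotonicity is an endogenous property of the continuation equilibrium and not available in closed form, so the formal work lies in the MLRP-based stochastic-dominance comparison of the induced continuation distributions sketched above.
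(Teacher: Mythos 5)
The proposal takes a different route from the paper and has two genuine gaps.

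First, the Jensen identity in step 2 is incorrect as stated. You claim $V(m^* \mid p_t, \bar p) = \mathbb{E}_{\bar p}\bigl[\sum_s \beta^{s-t-1}(\theta(p_s)-c)^+\bigr]$, but this only holds when $q(m^* \mid p_t) = \bar p$. If $q(m^*\mid p_t) = p^* > \bar p$, the period-$t$ reviewer's private likelihood ratio (which starts at $\bar p/(1-\bar p)$) differs from the public one (which starts at $p^*/(1-p^*)$), and her conditional forecast of $v_s$ given the public path is $\theta(\tilde b_s)$ for a \emph{private} posterior $\tilde b_s \neq p_s$, not $\theta(p_s)$. The correct positivity argument is a correlation argument, not a Jensen argument on the public belief: under MLRP the survival event $\{p_s \geq \bar p\}$ is positively correlated with $\theta$, hence $\mathbb{E}_{\bar p}[\mathbb{I}(p_s\geq\bar p)(v_s-c)] > 0$. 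This is exactly what the paper's Lemma~\ref{lem:threshold2} carries out via the stopping time $\tau$, and it also supplies an induction over the number of periods before experimentation can first be stopped, which handles the case the hypothesis leaves open (the non-cascade message may make the cascade avoidable only far downstream) and which your ``limiting argument'' glosses over.

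Second, and more seriously, step 3 is left entirely to an unfinished MLRP stochastic-dominance sketch, and that route is unlikely to succeed. Corollary~\ref{cor:monotonicity} gives monotonicity of $V_H$ and $V_L$ separately in the induced posterior, but with \emph{opposite} signs, so the net effect on $\bar p\, V_H + (1-\bar p)\, V_L$ cannot be signed from dominance of continuation distributions alone. The paper never needs this comparison. Its route is via Lemmas~\ref{lem:threshold1} and~\ref{lem:threshold2}: there is a single cutoff $\bar b(p_t)$ separating stop- from continue-messages; the positivity above forces $\bar b(p_t) < \bar p$; then $l(p_t) = \bar b(p_t)$ and the interval structure (Lemma~\ref{lem:representation}'s affineness of $V(m\mid p_t,\cdot)$ in $b_t$, single-crossing from Lemma~\ref{lem:VH_VL}, and belief consistency placing $q(m\mid p_t)$ strictly inside the interval of sending types) deliver $r(p_t) > \bar p$ for a single message. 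The fact that belief consistency forces the indifference point between any two adjacent $\mathcal E$-messages to lie strictly above the lower induced posterior, hence strictly above $\bar p$, is what actually makes your step 3 true; the stochastic-dominance reformulation obscures this and leaves the proof incomplete.
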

	
	Theorem \ref{thm:dec} claims that except in the cascade scenario discussed above, experiences $b_t$ in some neighborhood of the myopic cutoff $\bar{p}$ are always pooled together into a single review. 
	The intuition mirrors that from Sections \ref{sec:EQ} and \ref{sub:int3}: if one of the consumers arriving after $t$ is able to leave informative reviews, the option value of this information makes it optimal for $C_t$ to make $C_{t+1}$ buy the product when it is myopically suboptimal for the latter. To make this recommendation credible, it must also sometimes be issued when $b_t > \bar{p}$.
	The part that is worth pointing out here is the qualifier on $p_t$: communication at $p_t$ must be noisy only if at least some message is available in $\mathcal{M}(p_t)$ that does not start a cascade -- i.e., if at least some future consumer can issue a pivotal review. The complementary case was discussed in Proposition \ref{prop:NPR_CT}: if all messages in $\mathcal{M}(p_t)$ start a cascade, then perfect communication in state $p_t$ is possible.

	\section{Conclusion} \label{sec:CON}
	
	This paper presents a theoretical model of social learning, focusing on the issue of information provision in product reviews. We look closely at the fundamental tension underlying product reviews -- the conflict between consumers' self-interest in purchasing behavior and their prosocial motives when writing reviews -- and investigate how this tension affects the informational content of the reviews. We show that even in the absence of external interference (from, e.g., a platform or the sellers), this tension inevitably leads to a breakdown of truthful communication, as reviewers desire to deceive future consumers into buying a potentially subpar product for the sake of generating information. Moreover, despite the conflict only arising under specific circumstances, the noise created by it can propagate, making \emph{all} communication noisy in equilibrium. 
	
	Furthermore, our model connects two well-documented empirical patterns: the prevalence of prosocial motives in review-writing, and the tendency for consumers to leave reviews primarily after extremely positive or negative experiences. We show that leaving noisy reviews (or abstention from reviewing altogether) after average experiences improves welfare relative to truthful communication, because it encourages social experimentation. Therefore, this kind of behavior can emerge as an optimal strategy in the presence of prosocial motives. Alternatively, non-reporting of average experiences can arise as the optimal social norm in the society, which is not necessarily a result of individual consumers' rational intent.
	
	Beyond the specific context of product reviews, our paper also contributes to the broader literature on social learning and information aggregation. For instance, in the realm of scientific research, our results suggest that the non-publication of ``weak'' or null results may be optimal from a social welfare perspective. While it may lead to the duplication of research in the short term, it may also stimulate the production of stronger, more convincing findings in the long run, improving social welfare overall.

	\bibliographystyle{abbrvnat}
	\bibliography{literature}

	\newpage

	\section*{Appendix}
	\renewcommand{\baselinestretch}{1.33}\small
	\renewcommand\thesubsection{\Alph{section}.\arabic{subsection}}
	\setcounter{section}{1}

	\subsection{Supplementary Lemmas}
	
	In this section, we introduce a number of supplementary results that will aid us in proving Theorem \ref{thm:commitment}. Therefore, ``equilibrium'' should be understood as ``stationary MPE with Commitment''.
	
	First, fix an arbitrary state $p_t \in [\bar{p}, 1)$.\footnote{Remember that if $p_t < \bar{p}$ the experimentation stops, while if $p_t = 1$ the distribution becomes degenerate and reduces to a point mass at $b=1$.} 
	Let $\Phi_\theta(b | p_t)$ denote the c.d.f. of the private posterior $b_t$ from $C_t$'s point of view given prior $p_t$ and true state $\theta$. We next derive the exact expression for $\Phi_\theta(b | p_t)$.
	\begin{lemma} \label{lem:F_distribution}
	The distribution of $b_t$ conditional on $\theta$ and public belief $p_t$ is
	\begin{equation*} 
 		\Phi_\theta(b | p_t) = F_\theta \left(L^{-1} \left(\ln \frac{b}{1-b} - \ln \frac{p_t}{1 - p_t}\right)\right).
	\end{equation*}
	\end{lemma}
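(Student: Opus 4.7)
The plan is a direct change-of-variables argument using the Bayes' rule expression for $b_t$ from \eqref{eq:bel_con1} and the invertibility of $L(\cdot)$ guaranteed by the MLRP assumption.

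First, I would take the Bayes' rule expression
\begin{equation*}
b_t = \frac{p_t f_H(v_t)}{p_t f_H(v_t) + (1-p_t) f_L(v_t)},
\end{equation*}
rewrite it in log-odds form as
\begin{equation*}
\ln \frac{b_t}{1-b_t} = \ln \frac{p_t}{1-p_t} + \ln \frac{f_H(v_t)}{f_L(v_t)} = \ln \frac{p_t}{1-p_t} + L(v_t),
\end{equation*}
and solve for $v_t$ to get $v_t = L^{-1}\!\left(\ln \frac{b_t}{1-b_t} - \ln \frac{p_t}{1-p_t}\right)$.

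Second, I would invoke the MLRP assumption to justify the inversion: $L$ is continuous and strictly increasing on $\mathcal{V}$ with $\lim_{v \to \underline v} L(v) = -\infty$ and $\lim_{v \to \overline v} L(v) = +\infty$, so $L:\mathcal{V} \to \mathbb{R}$ is a strictly increasing bijection, and $L^{-1}$ is well-defined on all of $\mathbb{R}$. Combined with the monotonicity of the log-odds transformation $b \mapsto \ln\frac{b}{1-b}$, this shows that the mapping $v_t \mapsto b_t$ (for fixed $p_t$) is a strictly increasing bijection between $\mathcal{V}$ and $(0,1)$.

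Third, I would conclude by applying the definition of a c.d.f. and the fact that $F_\theta$ is the distribution of $v_t$ conditional on $\theta$: since $v_t \mapsto b_t$ is strictly increasing,
\begin{equation*}
\Phi(b \mid \theta, p_t) = \mathbb{P}(b_t \leqslant b \mid \theta, p_t) = \mathbb{P}\!\left(v_t \leqslant L^{-1}\!\left(\ln \tfrac{b}{1-b} - \ln \tfrac{p_t}{1-p_t}\right) \,\Big|\, \theta\right) = F_\theta\!\left(L^{-1}\!\left(\ln \tfrac{b}{1-b} - \ln \tfrac{p_t}{1-p_t}\right)\right),
\end{equation*}
which is the claimed formula. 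There is no real obstacle here; the only thing to be slightly careful about is verifying that $L^{-1}$ is defined on the entire real line (so that the formula makes sense for all $b \in (0,1)$), which is exactly the content of the limit conditions in the MLRP assumption.
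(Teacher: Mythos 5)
Your proof is correct and takes essentially the same approach as the paper: rewrite $b_t$ in log-odds form, invert $L$ (justified by MLRP), and apply the definition of a c.d.f. for $v_t$ given $\theta$. Your extra remarks on why $L^{-1}$ is a well-defined bijection from $\mathbb{R}$ onto $\mathcal{V}$ are a useful clarification but do not change the substance.
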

	\begin{proof}
		  By definition
			\begin{align*} 
			& \Phi_\theta(b|p_t) = \mathbb{P} \left\{ b(p_t,v_t) \leqslant b \mid p_t, \theta\right\} = \mathbb{P} \left\{ \ln \frac{b(p_t,v_t)}{1 - b(p_t,v_t)} \leqslant \ln \frac{b}{1-b} \mid p_t, \theta \right\} = \\
			& = \mathbb{P} \left\{ \ln \frac{p_t}{1 - p_t} + L(v_t) \leqslant \ln \frac{b}{1-b} \mid p_t, \theta \right\} = \mathbb{P} \left\{ v_t \leqslant L^{-1} \left(\ln \frac{b}{1-b} - \ln \frac{p_t}{1 - p_t}\right) \mid p_t, \theta \right\} = \\
			& = F_\theta \left(L^{-1} \left(\ln \frac{b}{1-b} - \ln \frac{p_t}{1 - p_t}\right)\right). \qedhere
		  \end{align*}
	\end{proof}
	
	Given $\Phi_\theta(b | p_t)$, we can introduce $\Phi(b | p_t)$ to denote the c.d.f. of $C_t$'s private posterior $b_t$ given prior $p_t$.
	\begin{equation} \label{eq:phi}
		\Phi(b|p_t) = p_t \cdot \Phi_H(b | p_t) + (1-p_t) \cdot \Phi_L(b | p_t).
	\end{equation}
	Note that the MLRP assumption implies that $L'(v) > 0$ for any $v$, and therefore the respective p.d.f. $\phi(b|p_t) := \frac{d\Phi(b|p_t)}{db}$ is well defined.
	Then $q(m | p_t, \mu)$ can in equilibrium be written as an expectation with respect to $\Phi(b|p_t)$ of the private posterior $b_t$ of the agent who sent message $m$:
		\begin{equation} \label{eq:q_b}
		q(m | p_t) = \mathbb{E} [b | m,p_t] = \frac{\int\limits_0^1 b \cdot \mu(m|p_t,b) \cdot d \Phi(b|p_t)}{\int\limits_0^1 \mu(m|p_t,b) \cdot d \Phi(b|p_t)}.
	\end{equation}
		
	Without loss, hereafter, we assume direct communication strategies, where $m = q(m | p_t)$. The next result provides a more convenient representation for value function \eqref{eq:value}. It shows that the continuation value $V(m \mid p_t,b_t)$ after any message $m \in \mathcal{E}(p_t)$ can be characterized by two value functions $V^H(m)$ and $V^L(m)$ that do not depend on $b_t$ or $p_t$.
	\begin{lemma} \label{lem:representation}
		Fix any equilibrium.
		For any public belief $p_t$, if $m \in \mathcal{E}(p_t)$ then 
		\begin{equation} \label{eq:representation}
		  V(p_{t+1} \mid p_t,b_t) = \theta(b_t) - c + \beta \cdot \Big[b_t \cdot V^H(p_{t+1}) + (1-b_t) \cdot V^L(p_{t+1})\Big],
		\end{equation}
		where $p_{t+1} = m = q(m | p_t)$ and
		\begin{equation} \label{eq:vtheta}
			V_\theta(p_t) := \mathbb{E} \left[ \sum\limits_{s=t+1}^{+\infty} \beta^{s-t-1} \cdot \mathbb{I}\left(p_s \geqslant \bar{p}\right) \cdot (v_s - c) \: \bigg| \: p_t, \theta \right].
		\end{equation}
		Moreover, $V(m \mid p_t,b_t)$ is linear and strictly increasing in $b_t$ for any given $m \in \mathcal{E}(p_t)$.
	\end{lemma}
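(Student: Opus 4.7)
My plan is to start from definition \eqref{eq:value} and split the infinite sum into the period $s=t+1$ term and the tail $s\geq t+2$, then use conditional independence given $\theta$ to recognize the tail as $\beta\,V_\theta(p_{t+1})$.

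For the $s=t+1$ term, since $m\in\mathcal{E}(p_t)$ we have $p_{t+1}=q(m\mid p_t)=m\geq\bar{p}$, so the indicator equals $1$. Conditioning on $\theta$ and using $\mathbb{E}[v_{t+1}\mid\theta]=\theta$, together with $\mathbb{P}(\theta=H\mid p_t,b_t,m)=b_t$ (the private belief $b_t$ is a sufficient statistic for $\theta$ at this stage, and $m$ is a deterministic function of $(p_t,b_t)$), yields $\mathbb{E}[v_{t+1}-c\mid p_t,m,b_t]=\theta(b_t)-c$.

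For the tail I would again condition on $\theta$. Since conditional on $\theta$ the consumption utilities are i.i.d., the process $(v_s,p_s)_{s\geq t+2}$ depends on the earlier history only through $(p_{t+1},\theta)$. Hence by \eqref{eq:vtheta},
\[
\mathbb{E}\!\left[\sum_{s\geq t+2}\beta^{s-t-1}\mathbb{I}(p_s\geq\bar{p})(v_s-c)\,\Big|\,\theta,p_t,m,b_t\right]=\beta\, V_\theta(p_{t+1}).
\]
Averaging over $\theta$ with weights $(b_t,1-b_t)$ delivers \eqref{eq:representation}. Linearity in $b_t$ is then immediate, since the right-hand side is affine in $b_t$.

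For strict monotonicity, $\partial V/\partial b_t=(H-L)+\beta\bigl(V_H(p_{t+1})-V_L(p_{t+1})\bigr)$. The cleanest way to sign the bracket is to invoke conditional independence once more: given $\theta$, $v_s$ is independent of $(v_{t+2},\dots,v_{s-1})$ and hence of the event $\{p_s\geq\bar{p}\}$, so from \eqref{eq:vtheta},
\[
V_\theta(p_{t+1})=(\theta-c)\sum_{s\geq t+2}\beta^{s-t-2}\,\mathbb{P}(p_s\geq\bar{p}\mid\theta,p_{t+1}).
\]
Since $H>c>L$, this gives $V_H(p_{t+1})\geq 0\geq V_L(p_{t+1})$, so the bracket is nonnegative and $\partial V/\partial b_t\geq H-L>0$. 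The only nontrivial step is this conditional-independence factoring; everything else reduces to careful bookkeeping with iterated expectations.
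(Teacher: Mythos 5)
Your proof is correct and takes essentially the same route as the paper's: split off the $s=t+1$ term, use conditional independence of $\{v_s\}_{s\geq t+2}$ from $b_t$ given $(\theta, p_{t+1})$ to identify the tail with $\beta V_\theta(p_{t+1})$, and sign $V_H\geq 0\geq V_L$ by factoring $\mathbb{E}[\mathbb{I}(p_s\geq\bar p)(v_s-c)\mid\theta,p_{t+1}]=(\theta-c)\,\mathbb{P}(p_s\geq\bar p\mid\theta,p_{t+1})$. Your exposition is slightly more explicit (the written-out partial derivative and the closed form for $V_\theta$), and the parenthetical ``$m$ is a deterministic function of $(p_t,b_t)$'' is unnecessary since $m$ is held fixed as an argument of $V$ and is at most a randomized function of the private state, carrying no information about $\theta$ beyond $b_t$ — but these are cosmetic differences, not gaps.
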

	\begin{proof}
		Assumption $m \in \mathcal{E}(p_t)$ means $C_{t+1}$ will buy the product, receiving a payoff that $C_t$ estimates at $\mathbb{E}[v_{t+1} | b_t] = \theta(b_t)$. For all $s \geqslant t+2$, $v_{s}$ are independent from $\{v_1,\dots,v_t\}$, so $p_s$ are independent from $b_t$ given $m$. Hence \eqref{eq:value} reduces to exactly \eqref{eq:representation}.
		Linearity follows immediately, since both $\theta(b_t)$ and $b_t \cdot V^H(p_t) + (1-b_t) \cdot V^L(p_t)$ are linear in $b_t$. 
		To show monotonicity, observe first that $\theta(b_t)$ is strictly increasing in $b_t$. Second, for any $s \geqslant t+1$: $\mathbb{E}\left[ v_s - c \mid p_t, \theta=H \right] > 0 > \mathbb{E}\left[ v_s - c \mid p_t, \theta=L \right]$, and $v_s$ is independent of $p_s$ (because $v_s$ is independent of $\{v_1,\dots,v_{s-1}\}$), meaning that $V^H(p_{t+1}) \geqslant 0 \geqslant V^L(p_{t+1})$. The result then follows.
	\end{proof}

	\begin{lemma} \label{lem:eps}
		Fix an arbitrary $p_t \in [\bar{p}, 1)$ and the corresponding distribution $\Phi(b|p_t)$ given by \eqref{eq:phi}. Suppose $b_t \sim \Phi(b|p_t)$.
		For any $l \in [0,\bar{p}]$, let $r(l)$ be implicitly defined by the following condition:
		\begin{equation}
			\mathbb{E} \left[b \mid b \in [l,r]\right] = \bar{p}.
		\label{eq:rofl}
		\end{equation}
		Then
		\begin{enumerate}
			\item $r(l,p_t)$ is well defined for all $l \in [0, \bar{p}]$,
			\item for any $\varepsilon \in [0, \bar{p})$ there exist $\underline{\delta} (\varepsilon, p_t), \overline{\delta} (\varepsilon, p_t) > 0$ such that if $l \in [\bar{p} - \varepsilon, \bar{p}]$ then $r(l) - \bar{p} \in [\underline{\delta} \cdot (\bar{p} - l), \overline{\delta} \cdot (\bar{p} - l)]$,
			\item $r(l,p_t)$ is a continuous function on $[0, \bar{p}] \times [\bar{p}, 1)$.
		\end{enumerate}
	\end{lemma}
	\begin{proof}
		Rewriting definition \eqref{eq:rofl}, we get
		\begin{equation} \label{eq:rofl2}
			\int \limits_{\bar{p}}^{r} (b-\bar{p}) \cdot \phi (b|p_t) \; db = \int \limits_{l}^{\bar{p}} (\bar{p}-b) \cdot \phi (b|p_t) \; db.
		\end{equation}
		The left-hand side of \eqref{eq:rofl2} is continuous, strictly increasing in $r$, and equals zero for $r = \bar{p}$.
		Since $\mathbb{E}[b|p_t] = p_t \geqslant \bar{p}$, it follows that 
		\begin{align} \label{eq:rofl3}
			\int \limits_{\bar{p}}^{1} (b-\bar{p}) \cdot \phi (b|p_t) \; db &\geqslant \int \limits_{0}^{\bar{p}} (\bar{p}-b) \cdot \phi (b|p_t) \; db \geqslant \int \limits_{l}^{\bar{p}} (\bar{p}-b) \cdot \phi (b|p_t) \; db.
		\end{align}
		Combining the observations above, by the intermediate value theorem we conclude that $r(l,p_t)$ exists.
		
		Fix some $\varepsilon > 0$. Let $B_l := \min \{ \phi(b|p_t) \mid b \in [\bar{p} - \varepsilon, r(\bar{p}-\varepsilon)] \}$ and $B_h := \max\{ \phi(b|p_t) \mid b \in [\bar{p} -\varepsilon, r(\bar{p} - \varepsilon)] \}$. These bounds exist because $\phi (b|p_t)$ is continuous, strictly positive, and finite for any $p_t \in [\bar{p}, 1)$.\footnote{Continuity follows from Lemma \ref{lem:F_distribution} and the assumption that $f_L$ and $f_H$ are continuously differentiable. It implies that $\phi (b|p_t)$ is continuous because it can be expressed as a function of $f_L$, $f_H$ and $f'_L$, $f'_H$, which are all continuous.} Therefore we can obtain from \eqref{eq:rofl2} that
		\begin{align*}
			\frac{B_l}{2} \cdot \left( r(l) - \bar{p} \right)^2 &\leqslant \frac{B_h}{2} \cdot \left( \bar{p}-l \right)^2
			& and &&
			\frac{B_h}{2} \cdot \left( r(l)-\bar{p} \right)^2 &\geqslant \frac{B_l}{2} \cdot \left( \bar{p}-l \right)^2.			
		\end{align*}
		Recall that $B_l, B_h > 0$ (since $\phi(b|p_t) > 0$ for $b \in [\bar{p} - \varepsilon, r(\bar{p}-\varepsilon)] \subset (0,1)$), so by setting $\underline{\delta} := \sqrt{\frac{B_l}{B_h}}$ and $\overline{\delta} := \sqrt{\frac{B_h}{B_l}}$ we get the result.
		
		It remains to show the continuity of $r(l,p_t)$. Fix some $\varepsilon > 0$. Take some sequence $(l^n, p_t^n) \in [0, \bar{p}] \times [\bar{p}, 1-\varepsilon]$ that converges to a given $(l,p_t)$. Denote associated solutions to \eqref{eq:rofl2} as $r^n := r(l^n, p_t^n)$ and $r := r(l,p_t)$. Then
		\begin{equation} \label{eq:rofl22}
		  \int \limits_{\bar{p}}^{r^n} (b-\bar{p}) \cdot \phi (b|p^n_t) \; db = \int \limits_{l^n}^{\bar{p}} (\bar{p}-b) \cdot \phi (b|p^n_t) \; db.
		\end{equation}
		Let $n \rightarrow +\infty$. Because $p^n_t \in [\bar{p}, 1-\varepsilon]$ and $b \in [0,\bar{p}]$, it implies that $\phi (b|p^n_t)$ is uniformly bounded from above. Then by the Dominated Convergence Theorem (see \citet*{EG}, Theorem 1.19) the RHS in \eqref{eq:rofl22} converges to $\int \limits_{l}^{\bar{p}} (\bar{p}-b) \cdot \phi (b|p_t) \; db$ as a composition of continuous functions. Given \eqref{eq:rofl2} it implies
		\begin{equation*}
		  \lim\limits_{n \rightarrow +\infty} \int \limits_{r}^{r^n} (b-\bar{p}) \cdot \phi (b|p^n_t) \; db + \lim\limits_{n \rightarrow +\infty} \int \limits_{\bar{p}}^{r} (b-\bar{p}) \cdot \phi (b|p^n_t) \; db = \int \limits_{\bar{p}}^{r} (b-\bar{p}) \cdot \phi (b|p_t) \; db.
		\end{equation*}
		By the Dominated Convergence Theorem the second summand in the LHS converges to the RHS, and therefore
		\begin{equation*}
		  \lim\limits_{n \rightarrow +\infty} \int \limits_{r}^{r^n} (b-\bar{p}) \cdot \phi (b|p^n_t) \; db = 0.
		\end{equation*}
		Assume now the contrary, that is $r^n \nrightarrow r$. Then the measure of the integration area is separated from zero. Both functions within the integral are then also separated from zero on a non-zero measure set. Therefore, the integral above is separated from zero as well and can not converge to zero, which gives us a contradiction. The claim was established for any $\varepsilon > 0$ and therefore $r(l,p_t)$ is continuous on $[0,\bar{p}] \times [\bar{p}, 1)$. \qedhere
	\end{proof}

	\begin{lemma} \label{lem:mon_con}
		Suppose $f(x)$ is a [weakly] convex and differentiable function on $\left[\bar{p}, 1\right]$, $a, b > 0$ and $f(\bar{p}) = a \bar{p} + b$, $f'(\bar{p}) = a$. Then $\frac{f(x)}{ax+b}$ is a [weakly] increasing function on $\left[\bar{p}, 1\right]$.
	\end{lemma}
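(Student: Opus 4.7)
The plan is to set $g(x) := f(x)/(ax+b)$ and prove $g'(x) \geq 0$ directly. Since $a,b>0$ and $\bar{p} \geq 0$, the denominator $ax+b$ is strictly positive on $[\bar{p},1]$, so $g$ is well-defined and differentiable wherever $f$ is. By the quotient rule,
\begin{equation*}
g'(x) = \frac{f'(x)(ax+b) - a\,f(x)}{(ax+b)^2},
\end{equation*}
so the whole task reduces to showing the numerator $N(x) := f'(x)(ax+b) - a\,f(x)$ is nonnegative on $[\bar{p},1]$. The two boundary conditions give $N(\bar{p}) = a(a\bar{p}+b) - a(a\bar{p}+b) = 0$ for free, so I only need to control $N$ for $x > \bar{p}$.

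My first move is to exploit the convexity of $f$ in two complementary ways. First, because $f$ is convex and differentiable, $f'$ is weakly nondecreasing, so $f'(x) \geq f'(\bar{p}) = a$ on $[\bar{p},1]$. Second, the supporting-hyperplane inequality at the point $x$ evaluated at $\bar{p}$ gives $f(\bar{p}) \geq f(x) + f'(x)(\bar{p}-x)$, which rearranges to the upper bound
\begin{equation*}
f(x) \leq (a\bar{p}+b) + f'(x)(x-\bar{p}).
\end{equation*}
Substituting this bound into $-a\,f(x)$ and simplifying,
\begin{equation*}
N(x) \geq f'(x)(ax+b) - a(a\bar{p}+b) - a\,f'(x)(x-\bar{p}) = (a\bar{p}+b)\bigl(f'(x)-a\bigr) \geq 0,
\end{equation*}
and the weak monotonicity of $g$ follows.

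The main obstacle I want to flag is that the obvious shortcut --- differentiating $N$ to get $N'(x) = f''(x)(ax+b) \geq 0$ and integrating from $\bar{p}$ --- silently assumes $f \in C^2$, whereas the hypothesis only supplies once-differentiability together with weak convexity. The supporting-line argument above is the cleanest way I see of replicating the ``$N$ starts at zero and never decreases'' logic using only the first-order information the lemma provides; the strict/weak version of the conclusion then tracks the strict/weak version of convexity in the hypothesis.
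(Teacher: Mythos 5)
Your proof is correct, and it takes a genuinely different route from the paper's. You compute $g'(x)$ via the quotient rule and show the numerator $N(x)=f'(x)(ax+b)-af(x)$ is nonnegative, using two facts from convexity: the monotone-derivative inequality $f'(x)\geq f'(\bar p)=a$, and the supporting-line inequality $f(x)\leq f(\bar p)+f'(x)(x-\bar p)$. The paper instead avoids differentiating $g$ entirely: it fixes $y>x\geq\bar p$, writes $g(y)-g(x)$ explicitly, and applies the chord-slope monotonicity $\frac{f(y)-f(x)}{y-x}\geq\frac{f(x)-f(\bar p)}{x-\bar p}\geq f'(\bar p)$ together with a short algebraic identity to show the difference is nonnegative. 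Both arguments are clean and require only the stated hypotheses; your derivative version relies on differentiability of $f$ at every point of $[\bar p,1]$ (which is given), while the paper's difference-quotient version in fact only invokes $f'$ at the single endpoint $\bar p$, making it marginally more robust in spirit. Your remark about the $C^2$ trap is apt and correctly addressed by the supporting-line bound.
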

	\begin{proof}
		Consider $y > x \geqslant \bar{p}$. Then
		\begin{equation*}
			\frac{f(y)}{ay+b} - \frac{f(x)}{ax+b} = \frac{y-x}{(ay+b)(ax+b)} \cdot \left((ax+b) \cdot \frac{f(y) - f(x)}{y-x} - a \cdot f(x)\right).
		\end{equation*}
		Because $f(x)$ is convex we have that $\frac{f(y) - f(x)}{y-x} \geqslant \frac{f(x) - f(\bar{p})}{x - \bar{p}}$. Therefore,
		\begin{equation*}
			(ax+b) \cdot \frac{f(y) - f(x)}{y-x} - a \cdot f(x) \geqslant (ax+b) \cdot \frac{f(x) - f(\bar{p})}{x - \bar{p}} - a \cdot f(x) = (a\bar{p} + b) \cdot \frac{f(x) - f(\bar{p})}{x - \bar{p}} - a \cdot f(\bar{p}).
		\end{equation*}
		$f(\bar{p}) = a\bar{p} + b$ and using convexity of $f(x)$ once again we know that $\frac{f(x) - f(\bar{p})}{x - \bar{p}} \geqslant f'(\bar{p}) = a$. Therefore, the expression above is non-negative.
	\end{proof}

	\begin{lemma} \label{lem:neg}
		In any equilibrium, there exists $\Delta > 0$ such that for all $p_t \in [\bar{p},1)$, all $m \in \mathcal{E}(p_t)$, and all $b_t < \Delta$: $V(m \mid p_t, b_t) < 0$.
	\end{lemma}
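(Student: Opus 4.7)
The plan is to exploit the linearity of $V(m\mid p_t,b_t)$ in $b_t$ established by Lemma \ref{lem:representation} and bound the intercept and slope uniformly in $p_t$, $p_{t+1}$, and the continuation equilibrium. Concretely, for any $m\in\mathcal{E}(p_t)$ with $p_{t+1}=q(m\mid p_t)$, representation \eqref{eq:representation} yields
\begin{equation*}
V(m\mid p_t,b_t)=\bigl[L-c+\beta V_L(p_{t+1})\bigr]+b_t\bigl[(H-L)+\beta(V_H(p_{t+1})-V_L(p_{t+1}))\bigr],
\end{equation*}
so it suffices to show that the intercept is bounded above by a strictly negative constant and the slope is bounded above by a constant that does not depend on $p_t$, $p_{t+1}$, or the equilibrium.

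For the intercept, Lemma \ref{lem:representation} already gives $V_L(p_{t+1})\leqslant 0$, whence the intercept is at most $L-c<0$ by assumption on the prices. For the slope, the key observation is that, conditional on $\theta$, the realized utilities $\{v_s\}_{s\geqslant t+2}$ are i.i.d.\ with mean $\theta$ and independent of the public beliefs $\{p_s\}_{s\geqslant t+2}$ (since $p_s$ depends only on past realizations $v_{t+1},\ldots,v_{s-1}$). Taking conditional expectations in \eqref{eq:vtheta} period-by-period therefore gives
\begin{equation*}
V_\theta(p_{t+1})=(\theta-c)\sum_{s\geqslant t+2}\beta^{s-t-2}\,\mathbb{P}\bigl(p_s\geqslant\bar{p}\mid\theta,p_{t+1}\bigr),
\end{equation*}
so $V_H(p_{t+1})\leqslant\tfrac{H-c}{1-\beta}$ and $V_L(p_{t+1})\geqslant\tfrac{L-c}{1-\beta}$, hence $V_H(p_{t+1})-V_L(p_{t+1})\leqslant\tfrac{H-L}{1-\beta}$. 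Substituting into the slope expression gives the uniform bound $(H-L)+\beta\cdot\tfrac{H-L}{1-\beta}=\tfrac{H-L}{1-\beta}$.

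Combining the two bounds yields, for every $p_t$, every $m\in\mathcal{E}(p_t)$, and every $b_t\in[0,1]$,
\begin{equation*}
V(m\mid p_t,b_t)\leqslant(L-c)+b_t\cdot\frac{H-L}{1-\beta},
\end{equation*}
which is strictly negative whenever $b_t<\bar{p}(1-\beta)$, where we used $\bar{p}=\tfrac{c-L}{H-L}$. Taking for instance $\delta:=\bar{p}(1-\beta)/2>0$ then satisfies the claim uniformly over all public states and all on-path buy-inducing messages. I do not anticipate a substantive obstacle; the only thing one needs to be careful about is making sure the slope bound is genuinely uniform, which comes down to conditioning on $\theta$ before taking expectations so that the bounds $V_H\leqslant\tfrac{H-c}{1-\beta}$ and $V_L\geqslant\tfrac{L-c}{1-\beta}$ hold regardless of the (possibly complicated) continuation strategy profile.
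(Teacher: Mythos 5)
Your proposal is correct and takes essentially the same approach as the paper: both invoke the linear-in-$b_t$ representation from Lemma \ref{lem:representation} and then bound $V_H$ and $V_L$ by elementary constants to produce a uniform $\delta$. The only cosmetic difference is in which bounds you plug in — you use $V_L\geqslant\frac{L-c}{1-\beta}$ together with $V_H\leqslant\frac{H-c}{1-\beta}$ to bound the slope, while the paper uses $V_L\leqslant 0$ and $V_H\leqslant\frac{H-c}{1-\beta}$ directly and obtains the larger threshold $\delta=\frac{(1-\beta)(c-L)}{H-\beta c-(1-\beta)L}$; your halving of $\bar{p}(1-\beta)$ is a clean way to guarantee the strict inequality in the statement.
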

	\begin{proof}
		Since $m \in \mathcal{E}(p_t)$, representation \eqref{eq:representation} applies.
		It is then enough to recognize that $V^H(m) \leqslant \frac{H-c}{1-\beta}$ and $V^L(m) \leqslant 0$ to conclude that $V(m \mid p_t, b_t) \leqslant 0$ for all $b_t \leqslant \Delta := \frac{(1-\beta)(c-L)}{H-\beta c - (1-\beta)L}$.
	\end{proof}

	\begin{lemma} \label{lem:vstar}
		Fix some equilibrium. Let $V^*(b_t) := V(b_t \mid p_t,b_t)$ denote $C_t$'s equilibrium continuation value of communicating her private belief $b_t$ truthfully. Then $V^*(b_t)$ is strictly increasing, (weakly) convex and differentiable for $b_t \geqslant \bar{p}$. Further, $V^*(b_t) = 0$ for $b_t < \bar{p}$ and $V^*(b_t) \geqslant 0$ for $b_t \geqslant \bar{p}$. Finally, for any $b_t,p_{t+1} \geqslant \bar{p}$, it is true that
		\begin{align} \label{eq:vconv}
			V(p_{t+1} \mid p_t, b_t) \leqslant V^*(b_t).
		\end{align}
	\end{lemma}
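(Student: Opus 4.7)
The plan is to establish the three parts of Lemma \ref{lem:vstar} in the order: (i) the trivial case $b<\bar p$, (ii) the central inequality \eqref{eq:vconv}, and (iii) strict monotonicity and weak convexity on $[\bar p,1]$, with \eqref{eq:vconv} doing most of the work. The crux is to expose a variational structure hidden in the equilibrium optimality condition: consumer $t$'s ex ante objective at public belief $p$ turns out to be a $(p,1-p)$-linear combination of two functionals of the induced message mapping whose feasible set does not depend on $p$. Part (i) is immediate: by direct communication, the message $m=b$ induces $p_{t+1}=b<\bar p$, so the period-$(t+1)$ consumer does not purchase and no further reviews are produced, giving $V^*(b)=0$.

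For (ii), fix the equilibrium strategy $\mu^*$ and, at each $p\in[\bar p,1)$, define the implied map $D_p:\mathcal V\to\mathcal M$ by $D_p(v):=\mu^*(\cdot\mid p,b(p,v))$. Splitting \eqref{eq:vtheta} into the period-$(t+1)$ contribution plus a discounted continuation and using that $v_{t+1}$ is independent of $v_t$ given $\theta$ yields the Bellman identity $V_\theta(p)=\Psi_\theta(D_p)$ with
\[
\Psi_\theta(D):=\int_{\mathcal V}\!\Big[\mathbb I\{D(v)\geqslant\bar p\}(\theta-c)+\beta\,V_\theta(D(v))\Big]f_\theta(v)\,dv,
\]
using the convention $V_\theta(m):=0$ for $m<\bar p$ (consistent with the discussion around $\mathcal S(p_t)$). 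Applying Lemma \ref{lem:representation} to $V(\mu(b_t)\mid p_t,b_t)$ and invoking the Bayesian density identities $p\,\phi_H(b)=b\,\phi(b\mid p)$ and $(1-p)\,\phi_L(b)=(1-b)\,\phi(b\mid p)$ together with the change of variable $v\leftrightarrow b=b(p,v)$, one shows that the consumer's ex ante continuation value at public belief $p_t$ reduces to $\bar\Psi(D;p_t):=p_t\,\Psi_H(D)+(1-p_t)\,\Psi_L(D)$. The Optimality clause of the equilibrium definition then says exactly that $D_p\in\arg\max_D\bar\Psi(D;p)$ for every $p$. Specializing to $p=b_t$ and comparing to the competitor $D=D_{p_{t+1}}$ gives $\bar\Psi(D_{b_t};b_t)\geqslant\bar\Psi(D_{p_{t+1}};b_t)$, i.e.\ $b_t V_H(b_t)+(1-b_t)V_L(b_t)\geqslant b_t V_H(p_{t+1})+(1-b_t)V_L(p_{t+1})$, which by Lemma \ref{lem:representation} is exactly \eqref{eq:vconv}.

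Given \eqref{eq:vconv}, the remaining two properties follow quickly. For strict monotonicity, take $b_1<b_2$ in $[\bar p,1]$ and chain
\[
V^*(b_2)\;\overset{\eqref{eq:vconv}}{\geqslant}\;V(b_1\mid p_t,b_2)\;>\;V(b_1\mid p_t,b_1)\;=\;V^*(b_1),
\]
where the middle strict inequality uses the strict monotonicity of $V(m\mid p_t,\cdot)$ from Lemma \ref{lem:representation}. For weak convexity, note that $\bar\Psi(D;b)$ is \emph{affine} in $b$ for each fixed $D$ (since $\Psi_H(D)$ and $\Psi_L(D)$ are constants once $D$ is fixed), so $b\mapsto\max_D\bar\Psi(D;b)$ is a pointwise supremum of affine functions and hence weakly convex on $[\bar p,1]$; adding the affine piece $\theta(b)-c$ preserves convexity, so $V^*(b)=\theta(b)-c+\beta\,\max_D\bar\Psi(D;b)$ is weakly convex.

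The main obstacle is spotting this decoupling: once one recognizes that the ex ante payoff at public belief $p$ is a $(p,1-p)$-mixture of two $D$-functionals whose feasible set is $p$-independent, \eqref{eq:vconv} becomes a one-line revealed-preference inequality and convexity a one-line remark about suprema of affine functions. The delicate technical ingredient making this rewriting possible is the Bayesian density identity linking the mixture $\phi(\cdot\mid p)$ to $\phi_H,\phi_L$ (and hence to $f_H,f_L$); without it, the dependence of $b_{t+1}$ on the announced $p_{t+1}$ entangles the expressions and obscures the linearity in $b$.
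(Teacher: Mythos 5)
Your proposal follows the paper's overall strategy -- a revealed-preference comparison built on the linearity of the objective in the public belief -- but it differs in two places, one cosmetic and one substantive, and it contains an explicit claim that is not self-evident.

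The decomposition $V(m\mid p_t,b_t)=\theta(b_t)-c+\beta\big[b_t V_H(m)+(1-b_t)V_L(m)\big]$ and the identities $p\,\phi_H(b)=b\,\phi(b\mid p)$, $(1-p)\,\phi_L(b)=(1-b)\,\phi(b\mid p)$ are exactly Lemma \ref{lem:representation} and the Bayes-rule bookkeeping the paper relies on, so your $\Psi_\theta$ decoupling is a tidy reformulation of the paper's one-step Bellman split. The monotonicity chain $V^*(b_2)\geqslant V(b_1\mid p_t,b_2)>V(b_1\mid p_t,b_1)=V^*(b_1)$ is word-for-word the paper's argument.

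Two concerns. First, for the inequality \eqref{eq:vconv} you assert that the feasible set of message mappings $D$ ``does not depend on $p$,'' which is what licenses comparing $D_{b_t}$ to the competitor $D_{p_{t+1}}$ in the $b_t$-weighted objective. But once $D(v)$ is interpreted (as it must be, for $V_\theta(D(v))$ to make sense) as the \emph{induced public posterior}, belief consistency \eqref{eq:bel_con2} ties the labels $D(v)$ to the prior: the posterior after a given partition cell is $\frac{p\int_C f_H}{p\int_C f_H+(1-p)\int_C f_L}$, which changes with $p$. Thus $D_{p_{t+1}}$ is generically \emph{not} Bayes-consistent at prior $b_t$, so it is not a legitimate competitor for the constrained program that defines $D_{b_t}$. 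The paper's proof passes through the same comparison more tersely (via ``$\mu^*(p_{t+1})$ does not necessarily maximize $\mathbb E[V\mid b_t]$''), so the issue is not specific to your write-up, but you have made the questionable step explicit and should either justify it or route around it (for instance by separating the relabeling distortion from the partition choice and invoking the induction/fixed-point structure of $V_\theta$). Second, your convexity argument (pointwise supremum of functions affine in $b$) relies on the same $b$-independent feasible set and is therefore exposed to the same objection; the paper instead \emph{derives} convexity from \eqref{eq:vconv} directly, by noting that $V(b'\mid p,\cdot)$ is affine, equals $V^*$ at $b'$, and lies below $V^*$ everywhere, hence is a subgradient of $V^*$ at each $b'\geqslant\bar p$. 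That route avoids re-invoking the variational structure and you should adopt it once \eqref{eq:vconv} is in hand.
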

	\begin{proof}
		First note that $V^*(b_t)$ is well defined since $V(b_t \mid p_t,b_t)$ does not depend on $p_t$. Indeed, the path of play from $t+1$ onward only depends on $p_{t+1}$, and $C_t$'s expectation of this play only depends on $b_t$ and $p_{t+1}$. The claim that $V^*(b)=0$ for $b < \bar{p}$ is immediate from the fact that $C_{t+1}$ does not buy the product and receives utility zero when $p_{t+1} < \bar{p}$.	
		$V(p_{t+1} \mid p_t, b_t)$ can be expanded as
		\begin{multline*}
			V(p_{t+1} \mid p_t, b_t) = \theta(b_t) - c + \beta \cdot \\
			\sum\limits_{m \in \mathcal{M}} \left(b_t \int\limits_0^1 V(m | p_{t+1}, b) \cdot \mu(m|p_{t+1},b) d\Phi_H(b|p_{t+1}) + (1-b_t) \int\limits_0^1 V(m|p_{t+1},b) \cdot \mu(m|p_{t+1},b) d\Phi_L(b|p_{t+1})\right).
		\end{multline*}   
		At the same time $C_{t+1}$ with public belief $p_{t+1}$ when choosing the optimal communication strategy maximizes
		\begin{equation*}
		   \sum\limits_{m \in \mathcal{M}} \left(p_{t+1} \int\limits_0^1 V(m | p_{t+1}, b) \cdot \mu(m|p_{t+1},b) d\Phi_H(b|p_{t+1}) + (1-p_{t+1}) \int\limits_0^1 V(m|p_{t+1},b) \cdot \mu(m|p_{t+1},b) d\Phi_L(b|p_{t+1})\right).
		\end{equation*}
		Therefore if $p_{t+1} = b_t$, when $C_{t+1}$ maximizes the continuation value, he does it the same way as $C_t$ would, and therefore \eqref{eq:vconv} holds.
		
		To show that $V^*(b_t) > 0$ for $b_t > \bar{p}$ and $V^*(b_t) \geqslant 0$ for $b_t = \bar{p}$, note that by \eqref{eq:vconv},
		\begin{align*}
			V^*(b_t) \geqslant V(1 \mid p_t,b_t) =
			\sum\limits_{s=t+1}^{+\infty} \beta^{s-t-1} (\theta(b_t) - c) = \frac{1}{1-\beta} \cdot (\theta(b_t) - c).
		\end{align*}
		The expectation in the right hand side is then equal to zero at $b_t = \bar{p}$ and is strictly positive for $b_t > \bar{p}$.

		Next, we establish the differentiability of $V^*(b_t)$. The definition of $V^*(b_t)$ and \eqref{eq:vconv} imply that $V^*(b_t) = \max\limits_{p_{t+1}} V(p_{t+1} \mid p_t,b_t)$. Therefore by the envelope theorem and representation \eqref{eq:representation} we have 
		\begin{equation} \label{eq:V_derivative}
			\frac{d V^*(b_t)}{d b_t} = H - L + \beta \cdot \left(V^H(b_t) - V^L(b_t)\right).
		\end{equation}
		All terms in \eqref{eq:V_derivative} are well defined and bounded, therefore the derivative exists. The monotonicity then follows from \eqref{eq:V_derivative} because $H > L$ and $V^H(b_t) \geqslant 0 \geqslant V^L(b_t)$.
 
		Finally, we show the convexity of $V^*(b_t)$. Assume there are $\bar{p} \leqslant b' < b''$ and $b = \lambda \cdot b' + (1-\lambda) \cdot b''$ for some $\lambda \in (0,1)$. Lemma \ref{lem:representation} implies that $V(p_{t+1} \mid p_t,b_t)$ is linear and strictly increasing in $b_t$ for all $p_{t+1} \geqslant \bar{p}$, i.e., $V(p_{t+1} \mid p_t,b_t) = k_1(p_{t+1}) + k_2(p_{t+1}) \cdot b_t$ with $k_2(p_{t+1}) > 0$. Therefore
		\begin{equation*}
		\begin{aligned}
			V^*(b') \geqslant V(b | p_t, b') = k_1(b) + k_2 (b) \cdot b', \\
			V^*(b'') \geqslant V(b | p_t, b'') = k_1(b) + k_2(b) \cdot b''.
		\end{aligned}
		\end{equation*}
		Adding these two inequalities together with weights $\lambda$ and $1-\lambda$ respectively yields
		\begin{equation*}
			\lambda \cdot V^*(b') + (1-\lambda) \cdot V^*(b'') \geqslant k_1(b) + k_2(b) \cdot b = V^*(b),
		\end{equation*}
		which establishes the (weak) convexity of $V^*(b_t)$.
	\end{proof}

	Now we are ready to prove the main theorem.

	\subsection{Proof of Theorem \ref{thm:commitment}}
	\newcounter{pfstep}
	
	The proof of the Theorem proceeds in a number of steps. Steps $1$ to $10$ show that if an equilibrium exists, it must take the form described in the statement. Step $11$ then confirms that an equilibrium, in fact, exists and is unique up to payoff equivalence.

	\refstepcounter{pfstep}\subparagraph{Step \thepfstep.}
	\textbf{For any equilibrium there exists a payoff-equivalent equilibrium with the following property: for any $p_t \in [\bar{p},1)$ and any $b_t > \bar{p}$ we have $p_{t+1} \geqslant \bar{p}$.}
	
	If the property is satisfied for the original equilibrium, then we are done. Therefore, assume that given an equilibrium strategy $\mu$, there exist $p_t \in [\bar{p}, 1)$, $b_t > \bar{p}$, and $m < \bar{p}$ such that $\mu(m | p_t, b_t) > 0$. Consider then an alternative strategy $\mu'$, which coincides with $\mu$, with the exception that it communicates $b_t$ in state $p_t$ truthfully: $\mu'(b_t | p_t,b_t) = 1$.
	Doing so yields continuation value $V^*(b_t)$. This value is strictly positive for any $b_t > \bar{p}$ and can only be zero at $b_t = \bar{p}$ by Lemma \ref{lem:vstar}, while sending message $m < \bar{p}$ yields a continuation value of zero.
	At the same time, $q(m | p_t, \mu') \leqslant q(m | p_t, \mu)$ since $m$ is not sent after $b_t > \bar{p}$ under $\mu'$, hence message $m$ stops experimentation under $\mu'$ same as it does under $\mu$. Therefore, $\mu'$ yields a higher value than $\mu$ in private state $(p_t,b_t)$ and performs equally well in all other states. Exhausting all such $m$ we construct a payoff-equivalent equilibrium with a desired property.\footnote{Note that exhausting all such $p_t,b_t$, and $m$ cannot result in a strict improvement, since that would contradict $\mu$ being an equilibrium strategy.}
	
	\refstepcounter{pfstep}\subparagraph{Step \thepfstep.}
	\textbf{For any equilibrium there exists a payoff-equivalent equilibrium with the following property: for any $p_t \in [\bar{p},1)$ and any $b_t < \Delta := \frac{(1-\beta)(c-L)}{H-\beta c - (1-\beta)L}$ we have $p_{t+1} < \bar{p}$.}
	
	If the statement is true for the original equilibrium, we are done. If not, suppose that there exist such $p_t \in [\bar{p},1)$, some set of states $b_t$ contained in $[0, \Delta)$, and $m \geqslant \bar{p}$ such that $\mu(m | p_t, b_t) > 0$ according to the equilibrium strategy $\mu$. Let $\Phi(b|p_t)$ be defined by $\eqref{eq:phi}$. 
	Define $\gamma$ so that the following condition holds:
	\begin{equation}
		\label{eq:sameposterior}
		\frac{\int\limits_0^1 b \cdot \mu(m|p_t,b) \; d\Phi(b|p_t)}{\int\limits_0^1 \mu(m|p_t,b) \; d\Phi(b|p_t)} = \frac{\int\limits_\Delta^\gamma b \cdot \mu(m|p_t,b) \; d\Phi(b|p_t)}{\int\limits_\Delta^\gamma \mu(m|p_t,b) \; d\Phi(b|p_t)}.
	\end{equation}
	In words, we look at the equilibrium distribution of $b_t$ conditional on message $m$, and select $\gamma$ in such a way that the expectation of $b_t$ according to this distribution does not change if we restrict its domain from $[0,1]$ to $[\Delta, \gamma]$. Note that $\gamma < 1$ is well defined since $\mathbb{E}[b|p_t,m] = m > \bar{p} > \Delta$.
	
	Analogous to the argument in the previous step, we then construct an alternative strategy $\mu'$ that is identical to $\mu$, except whenever $\mu$ prescribes that message $m$ is sent, $\mu'$ prescribes the following:
	\begin{enumerate}
		\item if $b_t \in [0, \Delta)$, then send a truthful message $b_t$;
		\item for $b_t \in [\Delta, \gamma]$ send $m$ as in $\mu$;
		\item if $b_t \in (\gamma, 1]$, then send a truthful message $b_t$.
	\end{enumerate}
	Lemma \ref{lem:neg} implies that if $b_t < \Delta$, then $V(m \mid p_t, b_t) < 0$. Hence $\mu'$ performs strictly better than $\mu$ for $b_t < \Delta$ because $C_t$ stops experimentation with truthful message and therefore gets $0$. If $b_t \in [\Delta, \gamma]$ then $C_t$ receives the exact same payoff as above, since \eqref{eq:sameposterior} ensures that $m$ generates the same posterior under both $\mu$ and $\mu'$. 
	Finally, if $b_t \in (\gamma,1]$, then from Lemma \ref{lem:vstar} (and inequality \eqref{eq:vconv} in particular) we know that truthful reporting is weakly better for $C_t$ than inducing any other message. We conclude that strategy $\mu'$ weakly improves $C_t$'s continuation value after all private posteriors (and strictly at some) relative to $\mu$. Exhausting all such $m$ we construct a payoff-equivalent equilibrium with a desired property.
	
	\refstepcounter{pfstep}\subparagraph{Step \thepfstep.}
	\textbf{For any equilibrium there exists a payoff-equivalent equilibrium with the following property: for any $p_t \in [\bar{p},1)$ if $m \geqslant \bar{p}$ is such that $\mu(m|p_t,b'), \mu(m|p_t,b'') > 0$ for some $b', b'' \geqslant \bar{p}$, then $\mu(m|p_t,b) > 0$ for some $b < \bar{p}$.} In words, it is never optimal to pool posteriors above the cutoff $\bar{p}$ without also pooling them with some posteriors below $\bar{p}$.
 
	If the statement is true for the original equilibrium, we are done. If not, there exists some $m$ such that $\mu(m|p_t,b'), \mu(m|p_t,b'') > 0$ for some $b', b'' \geqslant \bar{p}$, and $\mu(m|p_t,b) = 0$ for all $b < \bar{p}$. Consider an alternative strategy $\mu'$ such that at all $(p_t,b_t)$, at which $\mu$ prescribes leaving review $m$, $\mu'$ prescribes instead revealing $b_t$ truthfully with the same probability, and $\mu'$ is equivalent to $\mu$ otherwise. At any such $b_t$, reporting $m$ yields continuation value $V(m|p_t,b_t)$, while a truthful report yields $V^*(b_t)$, which is weakly larger by \eqref{eq:vconv}, hence $\mu'$ is a weakly better strategy than $\mu$ for any $b_t$ for $C_t$. Adjusting communication strategies for all such $m$ we then construct a payoff-equivalent equilibrium with a desired property.
		
	\refstepcounter{pfstep}\subparagraph{Step \thepfstep.}
	\textbf{For any equilibrium there exists a payoff-equivalent equilibrium with the following property: for any $p_t \in [\bar{p},1)$, if there exists some pooling message $\bar{m}$ such that $\mu(\bar{m} | p_t,b'), \mu(\bar{m} | p_t,b'') > 0$ for some $b' < \bar{p} \leqslant b''$, then $\bar{m} = q(\bar{m} | p_t) = \bar{p}$.}
	If the statement is true for the original equilibrium, we are done. If not, there exists a message $\bar{m}$ with $q(\bar{m} | p_t) > \bar{p}$ which is sent for some $b' < \bar{p} \leqslant b''$. If according to $\Phi(b|p_t)$ the measure of states where $\bar{m}$ is sent is zero, then we are done as well, because we can substitute $\bar{m}$ with truthful reports in these states. Therefore, assume that the measure of states where $\bar{m}$ is sent is non-zero.
	Consider then an alternative strategy $\mu'$ which is equivalent to $\mu$, except that message $\bar{m}$ is replaced by the following messages. Whenever $\mu$ prescribes message $\bar{m}$, $\mu'$ sends message $\bar{p}$ if $b_t \leqslant \gamma$, and reports $b_t$ truthfully whenever $b_t > \gamma$ (with the same probability as under $\mu$). Select $\gamma$ in such a way that $q(\bar{p}| p_t, \mu') = q(\bar{m}| p_t, \mu, b_t < \gamma) = \bar{p}$.\footnote{Note that such $\gamma$ always exists because we assumed that $\bar{m}$ is sent for a non-zero measure of states $b_t$.}
	
	Consider then $C_t$'s expected continuation value under $\mu$ and under $\mu'$. Because these strategies are identical, except for private beliefs where $\bar{m}$ is communicated under $\mu$, we need to compare the respective contributions to the expected value of $C_t$. For strategy $\mu$ it is
	\begin{equation*}
		\int \limits_{b : \mu(\bar{m} | p_t, b) > 0} V(\bar{m}|p_t,b) \cdot \mu(\bar{m} | p_t, b) \; d \Phi(b | p_t) = V^*(\bar{m}) \cdot \int \limits_{b : \mu(\bar{m}|p_t,b) > 0} \mu(\bar{m}|p_t,b) \; d \Phi(b|p_t)
	\end{equation*}
	In the above equality we used the linearity of $V(m|p_t,b_t)$ in $b_t$ established in Lemma \ref{lem:representation}. For $\mu'$ the contribution is
	\begin{align*}
		& \int \limits_{b \leqslant \gamma : \mu(\bar{m}|p_t,b) > 0} V(\bar{p} | p_t, b) \cdot \mu(\bar{m}|p_t,b) \; d \Phi(b|p_t) + \int \limits_{b > \gamma : \mu(\bar{m} | p_t, b) > 0} V(b | p_t, b) \cdot \mu(\bar{m}|p_t,b) \; d \Phi(b|p_t) = \\
		& V^*(\bar{p}) \cdot \int \limits_{b \leqslant \gamma : \mu(\bar{m} | p_t, b) > 0} \mu(\bar{m}|p_t,b) \; d \Phi(b|p_t) + \int \limits_{b > \gamma : \mu(\bar{m} | p_t, b) > 0} V^*(b) \cdot \mu(\bar{m}|p_t,b) \; d \Phi(b|p_t)
	\end{align*}
	Since $V^*(b_t)$ is (weakly) convex, Jensen's inequality implies that strategy $\mu'$ is (weakly) better than $\mu$. Taking all such $\bar{m}$ and adjusting the communication strategy we then construct a payoff-equivalent equilibrium with a desired property.

	\refstepcounter{pfstep}\subparagraph{Step \thepfstep.}
	\textbf{It is without loss to restrict attention to Markov equilibria with at most one pooling message which induces public belief $p_{t+1}=\bar{p}$.} This is because by Step $4$, any pooling message must induce the same public posterior $p_{t+1} = \bar{p}$, and by the Markov property, the continuation play must then also be the same after any pooling message.

	\medskip
	To summarize the steps above: we now know that any Markov equilibrium is payoff-equivalent to a one where $C_t$ either reveals her belief truthfully so that $p_{t+1} = b_t$, or sends a pooling message $m = \bar{p}$ so that $p_{t+1} = \bar{p}$.
	\medskip
	
	\refstepcounter{pfstep}\subparagraph{Step \thepfstep.}
	\textbf{For any equilibrium there exists a payoff-equivalent equilibrium with the following property: for any $p_t \in [\bar{p},1)$, there exists $l(p_t) \leqslant \bar{p}$ such that $\mu(\bar{p} | p_t, b) = 0$ for $b < l(p_t)$, and $\mu(\bar{p} | p_t, b) = 1$ for $b \in [l(p_t), \bar{p}]$.} In other words, the pooling region (if it is nonempty) is convex left of the cutoff.
	
	For the original equilibrium, denote $l(p_t) := \inf \{ b \leqslant \bar{p} \; | \; \mu(\bar{p} | p_t, b) > 0\}$. If the property holds for the original equilibrium with such $l(p_t)$ then we are done. If not, then there exists a subset of states $b_t$ contained in $[l(p_t), \bar{p}]$ where $b_t$ is revealed truthfully (remember from the previous step that induced posterior is either truthful, or is equal to $\bar{p}$). If the total measure of such states is zero, then we are done as well. Indeed, we can prescribe sending message $\bar{p}$ in these states: adjusted communication strategy satisfies the desired property while the $C_t$'s value does not change. Therefore, assume that the total measure of such states is strictly positive, i.e.,
	\begin{equation*}
		\int\limits_{l(p_t)}^{\bar{p}} \left(1 - \mu (\bar{p} | p_t, b)\right) \; d\Phi(b|p_t) > 0.
	\end{equation*}
	We next show that it implies a contradiction with $\mu$ being an equilibrium strategy. For that we construct an alternative strategy $\mu'$ which delivers a strictly higher payoff for $C_t$ rather than $\mu$. For that, define $l'(p_t) > l(p_t)$ such that
	\begin{equation*}
		\int\limits_{l'(p_t)}^{\bar{p}} \; d\Phi(b|p_t) = \int\limits_{l(p_t)}^{\bar{p}} \mu (\bar{p} | p_t, b) \; d\Phi(b|p_t).
	\end{equation*}
	Also define $\gamma \geqslant \bar{p}$ such that
	\begin{equation} \label{eq:mu_prime}
		\mathbb{E} \left[ b \; | \; b \in [l'(p_t), \bar{p}] \cup S_\gamma \right] = \bar{p},
	\end{equation}
	where $S_\gamma = \{b \; | \; \mu(\bar{p} | p_t, b) > 0, \bar{p} < b \leqslant \gamma\}$. In words, below $\bar{p}$, strategy $\mu'$ compresses the entire mass of beliefs where message $\bar{p}$ is transmitted towards $\bar{p}$. Above $\bar{p}$, it cuts high beliefs to maintain the induced posterior after the pooling message to be exactly $\bar{p}$.
	
	Define $\mu'$: it sends pooling message $\bar{p}$ for all $b_t \in [l'(p_t), \bar{p}] \cup S_\gamma$ and reveals $b_t$ truthfully otherwise.\footnote{Pooling such private beliefs into one message indeed induces posterior $\bar{p}$ due to \eqref{eq:mu_prime}.} Then we need to show
	\begin{multline*}
		\int\limits_{l'(p_t)}^{\bar{p}} V(\bar{p} | p_t, b) \; d\Phi(b|p_t) + \int\limits_{\bar{p}}^{\gamma} \left[V(\bar{p} | p_t, b) \cdot \mu (\bar{p} | p_t, b) + V^*(b) \cdot (1 - \mu (\bar{p} | p_t, b))\right] \; d\Phi(b|p_t) +
		\int\limits_{\gamma}^{1} V^*(b) \; d\Phi(b|p_t)
		>  \\
		\int\limits_{l(p_t)}^{\bar{p}} V(\bar{p} | p_t, b) \cdot \mu (\bar{p} | p_t, b) \; d\Phi(b|p_t) + \int\limits_{\bar{p}}^{1} \left[V(\bar{p} | p_t, b) \cdot \mu (\bar{p} | p_t, b) + V^*(b) \cdot (1 - \mu (\bar{p} | p_t, b))\right] \; d\Phi(b|p_t).
	\end{multline*}
	Because $V(\bar{p} | p_t, b_t)$ is linear in $b_t$ and due to \eqref{eq:mu_prime} we have
	\begin{multline*}
		\int\limits_{l'(p_t)}^{\bar{p}} V(\bar{p} | p_t, b) \; d\Phi(b|p_t) + \int\limits_{\bar{p}}^{\gamma} V(\bar{p} | p_t, b) \cdot \mu (\bar{p} | p_t, b) \; d\Phi(b|p_t) = \\ \int\limits_{l(p_t)}^{\bar{p}} V(\bar{p} | p_t, b) \cdot \mu (\bar{p} | p_t, b) \; d\Phi(b|p_t) + \int\limits_{\bar{p}}^{1} V(\bar{p} | p_t, b) \cdot \mu (\bar{p} | p_t, b) \; d\Phi(b|p_t)
	\end{multline*}
	Therefore it only remains to show that
	\begin{equation*}
		\int\limits_{\gamma}^{1} V^*(b) \cdot \mu (\bar{p} | p_t,b) d\Phi(b|p_t) > 0.
	\end{equation*}
	It is true because the measure of beliefs $b_t > \gamma$ for which $\bar{p}$ is transmitted is strictly positive, and $V^*(\gamma) > V^*(\bar{p}) \geqslant 0$. This shows that $\mu'$ gives a strictly higher payoff to $C_t$ rather than $\mu$, which is a contradiction.

	\refstepcounter{pfstep}\subparagraph{Step \thepfstep.}
	\textbf{For any equilibrium there exists a payoff-equivalent equilibrium with the following property: for any $p_t \in [\bar{p},1)$, there exists $r(p_t) \geqslant \bar{p}$ such that $\mu(\bar{p} | p_t, b) = 0$ for $b > r(p_t)$, and $\mu(\bar{p} | p_t, b) = 1$ for $b \in [\bar{p}, r(p_t)]$.} In other words, the pooling region (if it is nonempty) is convex right of the cutoff.
	
	For the original equilibrium define $r(p_t) := \sup \{ b \geqslant \bar{p} \; | \; \mu(\bar{p} | p_t, b) > 0\}$. If the property holds for the original equilibrium with such $r(p_t)$ then we are done. If not, then there exists a subset of states $b_t$ contained in $[\bar{p}, r(p_t)]$ where $b_t$ is revealed truthfully. If the total measure of such states is zero, then we are done as well.\footnote{For the same reason as in the previous step.} Therefore, assume that this measure is strictly positive, i.e.,
	\begin{equation*}
		\int\limits_{\bar{p}}^{r(p_t)} \left(1 - \mu (\bar{p} | p_t, b)\right) \; d\Phi(b|p_t) > 0.
	\end{equation*}
	We next construct an alternative strategy $\mu'$ which delivers a (weakly) higher payoff for $C_t$ rather than $\mu$ and satisfies the desired property. From the previous step we know there exists $l(p_t) \leqslant \bar{p}$ such that $b_t$ is revealed truthfully for $b_t < l(p_t)$ and message $\bar{p}$ is sent for all $b_t \in [l(p_t),\bar{p}]$. Then define $r'(p_t) < r(p_t)$ such that
	\begin{equation} \label{eq:r_def}
		\mathbb{E} \left[ b \; | \; b \in [l(p_t), r'(p_t)] \right] = \mathbb{E} \left[ b \; | \; \mu(\bar{p} | p_t, b) > 0 \right] = \bar{p},
	\end{equation}
	Consider an alternative strategy $\mu'$ for $C_t$ that is equivalent to $\mu$ for $b_t < \bar{p}$, sends message $\bar{p}$ for all $b_t \in [\bar{p}, r'(p_t)]$, and reveals $b_t$ truthfully for all $b_t > r'(p_t)$. We next show that $\mu'$ yields a (weakly) higher payoff to $C_t$. The equivalence of $\mu$ and $\mu'$ for $b_t < \bar{p}$ implies that we only need to look at $b_t \geqslant \bar{p}$. Hence, we need to show	
	\begin{multline*}
		\int\limits_{\bar{p}}^1 \Big[ V(\bar{p} \mid p_t,b) \cdot \mu(\bar{p} \mid p_t, b) + V^*(b) \cdot \left(1 - \mu(\bar{p} \mid p_t, b) \right) \Big] \; d\Phi(b | p_t) \leqslant \\
		\int\limits_{\bar{p}}^{r'(p_t)} V(\bar{p} \mid p_t,b) \; d\Phi(b | p_t) + \int\limits_{r'(p_t)}^1 V^*(b) \; d\Phi(b | p_t)
	\end{multline*}
	Rearranging terms we get
	\begin{multline} \label{eq:t1_valueineq4}
		\int\limits_{r'(p_t)}^1 \Big[ V^*(b) - V(\bar{p} \mid p_t,b) \Big] \cdot \mu(\bar{p} \mid p_t, b) \; d\Phi(b|p_t) \geqslant \\
		\int\limits_{\bar{p}}^{r'(p_t)} \Big[ V^*(b) - V(\bar{p} \mid p_t,b) \Big] \cdot \left( 1- \mu(\bar{p} \mid p_t, b) \right) \; d\Phi(b|p_t).
	\end{multline}
	First, expand definition \eqref{eq:r_def}. We have
	\begin{equation*}
		\int\limits_{l(p_t)}^{r'(p_t)} (b - \bar{p}) \; d\Phi(b | p_t) = \int\limits_{l(p_t)}^{\bar{p}} (b - \bar{p}) \; d\Phi(b | p_t) + \int\limits_{\bar{p}}^{1} (b - \bar{p}) \cdot \mu(\bar{p}|p_t,b) \; d\Phi(b | p_t)
	\end{equation*}
	Canceling terms on both sides we get
	\begin{equation} \label{eq:t1_beliefeq1}
		\int\limits_{\bar{p}}^{r'(p_t)} (b - \bar{p}) \cdot (1-\mu(\bar{p}|p_t,b)) \; d\Phi(b | p_t) = \int\limits_{r'(p_t)}^{1} (b - \bar{p}) \cdot \mu(\bar{p}|p_t,b) \; d\Phi(b | p_t)
	\end{equation}
	Second, since $V(\bar{p} \mid p_t,b_t)$ is linear in $b_t$ by Lemma \ref{lem:representation}, and $V^*(b_t)$ is convex by Lemma \ref{lem:vstar}, their difference $V^*(b_t) - V(\bar{p} \mid p_t,b_t)$ is convex in $b_t$. Hence by Lemma \ref{lem:mon_con}, $\frac{V^*(b_t) - V(\bar{p} \mid p_t,b_t)}{b_t-\bar{p}}$ is increasing for $b_t \geqslant \bar{p}$. This monotonicity implies
	\begin{multline*}
		\int\limits_{r'(p_t)}^1 (b-\bar{p}) \cdot \frac{V^*(r(p_t)) - V(\bar{p} \mid p_t,r(p_t))}{r(p_t)-\bar{p}} \cdot \mu(\bar{p} \mid p_t, b) \; d\Phi(b|p_t) \leqslant
		\\ \int\limits_{r'(p_t)}^1 (b-\bar{p}) \cdot \frac{V^*(b) - V(\bar{p} \mid p_t,b)}{b-\bar{p}} \cdot \mu(\bar{p} \mid p_t, b) \; d\Phi(b|p_t) =
		\\ \int\limits_{r'(p_t)}^1 \Big[ V^*(b) - V(\bar{p} \mid p_t,b) \Big] \cdot \mu(\bar{p} \mid p_t, b) \; d\Phi(b|p_t),
	\end{multline*}
	where the top line is the LHS of \eqref{eq:t1_beliefeq1} multiplied by a constant $\frac{V^*(r(p_t)) - V(\bar{p} \mid p_t,r(p_t))}{r(p_t)-\bar{p}}$, and the bottom line is the LHS of \eqref{eq:t1_valueineq4}. For the RHS of \eqref{eq:t1_beliefeq1} and \eqref{eq:t1_valueineq4} we similarly get
	\begin{multline*}
		\int\limits_{\bar{p}}^{r'(p_t)} (b-\bar{p}) \cdot \frac{V^*(r(p_t)) - V(\bar{p} \mid p_t,r(p_t))}{r(p_t)-\bar{p}} \cdot \left(1 - \mu(\bar{p} \mid p_t, b) \right) \; d\Phi(b|p_t) \geqslant
		\\ \int\limits_{\bar{p}}^{r'(p_t)} (b-\bar{p}) \cdot \frac{V^*(b) - V(\bar{p} \mid p_t,b)}{b-\bar{p}} \cdot \left(1 - \mu(\bar{p} \mid p_t, b) \right) \; d\Phi(b|p_t) =
		\\ \int\limits_{\bar{p}}^{r'(p_t)} \Big[ V^*(b) - V(\bar{p} \mid p_t,b) \Big] \cdot \left(1 - \mu(\bar{p} \mid p_t, b) \right) \; d\Phi(b|p_t).
	\end{multline*}
	The inequalities above together with \eqref{eq:t1_beliefeq1} imply that \eqref{eq:t1_valueineq4} holds, meaning that $\mu'$ satisfies the desired property and is a weakly better strategy for $C_t$ rather than $\mu$, and is thus payoff-equivalent to $\mu$.

	\refstepcounter{pfstep}\subparagraph{Step \thepfstep.}
	\textbf{In equilibrium, $V^*(\bar{p}) > 0$.}
 
	The previous steps establish that any equilibrium is payoff equivalent to one in which $C_t$ sends a pooling message $\bar{p}$ for $b_t \in \left[l(p_t), r(p_t)\right]$ and reveals $b_t$ truthfully otherwise. Therefore
	\begin{align*}
		V^*(\bar{p}) = V(\bar{p} \mid p_t, \bar{p}) = \theta(\bar{p}) - c + \beta \cdot \int\limits_{l(\bar{p})}^{r(\bar{p})} V^*(\bar{p}) \; d\Phi(b|\bar{p}) + \beta \cdot \int\limits_{r(\bar{p})}^1 V^*(b) \; d\Phi(b|\bar{p}).
	\end{align*}
	By definition of $\bar{p}$, the first summand is $0$. Note that $r(\bar{p}) < 1$ because $l(\bar{p}) \geqslant \Delta$. Therefore, the last term is strictly positive by the properties of $V^*(b)$ outlined in Lemma \ref{lem:vstar}, and the fact that $b_{t+1}$ has full support on $[0,1]$.
 
	\refstepcounter{pfstep}\subparagraph{Step \thepfstep.}
	\textbf{Assume an equilibrium satisfies properties from Steps 1-7, then for any $p_t \in [\bar{p}, 1)$ the pooling interval is nonempty: $l(p_t) < \bar{p} < r(p_t)$.}
 
	Suppose the contrary -- which, given the steps above, means that the equilibrium communication strategy $\mu$ is truthful: $\mu(b_t \mid p_t, b_t) =  1$ for all $b_t$. Take some $\varepsilon \in (0, \bar{p})$. Denote $l(\varepsilon) := \bar{p} - \varepsilon$ and set $r(\varepsilon)$ such that $\mathbb{E}[b_t \mid b_t \in [l(\varepsilon), r(\varepsilon)]] = \bar{p}$, which is possible by Lemma \ref{lem:eps}. 
	Consider then a strategy $\mu_\varepsilon$ that sends a pooling message $\bar{p}$ for all $b_t \in [l(\varepsilon), r(\varepsilon)]$ and is truthful otherwise. We now show that there exists $\varepsilon$ such that $\mu_\varepsilon$ yields a higher expected payoff for $C_t$ in state $p_t$ rather than $\mu$.
	
	The difference in $C_t$'s expected payoff under $\mu_\varepsilon$ relative to that under $\mu$ is given by
	\begin{align*}
		\int\limits_{l(\varepsilon)}^{r(\varepsilon)} V(\bar{p} \mid p_t, b) \; d\Phi(b | p_t) - \int\limits_{\bar{p}}^{r(\varepsilon)} V^*(b) \; d\Phi(b | p_t).
	\end{align*}
	Our goal is to show that for some $\varepsilon$, the expression above is strictly positive, i.e.,
	\begin{align} \label{eq:t1_valueineq5}
		\int\limits_{l(\varepsilon)}^{\bar{p}} V(\bar{p} \mid p_t, b) \; d\Phi(b | p_t)
		&>
		\int\limits_{\bar{p}}^{r(\varepsilon)} \left[ V^*(b) - V(\bar{p} \mid p_t, b) \right] \; d\Phi(b | p_t).
	\end{align}
	Since $V(m \mid p_t, b_t)$ is linear in $b_t$:
	\begin{align*}
		\int \limits_{l(\varepsilon)}^{\bar{p}} \left[ V^*(\bar{p}) - V(\bar{p} \mid p_t, b) \right] \; d\Phi (b|p_t) &= \int \limits_{\bar{p}}^{r(\varepsilon)} \left[ V(\bar{p} \mid p_t, b) - V^*(\bar{p}) \right] \; d\Phi (b|p_t).
	\end{align*}
	Adding the equality above to \eqref{eq:t1_valueineq5} we get
	\begin{equation*} \label{eq:t1_valueineq6}
		\int\limits_{l(\varepsilon)}^{\bar{p}} V^*(\bar{p}) \; d\Phi(b|p_t) > \int\limits_{\bar{p}}^{r(\varepsilon)} \left[ V^*(b) - V^*(\bar{p}) \right] \; d\Phi(b | p_t).
		\end{equation*}
	Monotonicity of $V^*(b_t)$ implies it is sufficient to show: 
	\begin{align}  \label{eq:t1_valueineq7}
		V^*(\bar{p}) \cdot \left[ \Phi(\bar{p}) - \Phi(l(\varepsilon)) \right] 
		&> 
		\left[ V^*(r(\varepsilon)) - V^*(\bar{p}) \right] \cdot \left[ \Phi(r(\varepsilon)) - \Phi(\bar{p}) \right],
	\end{align}
	Recall from Lemma \ref{lem:eps} that there exist $B_l := \min \{ \phi(b|p_t) \mid b \in [l(\varepsilon), r(\varepsilon)] \}$ and $B_h := \max \{ \phi(b|p_t) \mid b \in [l(\varepsilon), r(\varepsilon)] \}$, -- bounds on density $\phi(b|p_t)$ on the relevant interval of $b$. Since $\phi (b|p_t)$ is continuous and strictly positive, these bounds exist and $0 < B_l < B_h < \infty$. Given these bounds, it was established that $r(\varepsilon) - \bar{p} \leqslant \sqrt{\frac{B_h}{B_l}} \cdot \varepsilon$. These bounds then imply that 
	\begin{align} \label{eq:Phibound}
		\Phi(\bar{p}) - \Phi(l(\varepsilon)) & \geqslant B_l \cdot \varepsilon,
		&
		\Phi(r(\varepsilon)) - \Phi(\bar{p}) & \leqslant B_h \sqrt{\frac{B_h}{B_l}} \cdot \varepsilon.
	\end{align}
	On the other hand, $V^*(r(\varepsilon)) - V^*(\bar{p})$ is also bounded from above by a factor of $\varepsilon$. 
	To see this, recall that Lemma \ref{lem:mon_con} provided an exact form for $\frac{dV^*(b_t)}{db_t}$. Using the trivial bounds for $V^H(b_t)$ and $V^L(b_t)$ we conclude that $\frac{dV^*(b_t)}{db_t} \leqslant \frac{H - (1-\beta) \cdot L - \beta \cdot c}{1 - \beta}$. Therefore
	\begin{equation} \label{eq:t1_valueineq9}
		V^*(r(\varepsilon)) - V^*(\bar{p}) \leqslant \frac{H - (1-\beta) \cdot L - \beta \cdot c}{1 - \beta} \cdot \left(r(\varepsilon) - \bar{p}\right) \leqslant \frac{H - (1-\beta) \cdot L - \beta \cdot c}{1 - \beta} \cdot \sqrt{\frac{B_h}{B_l}} \cdot \varepsilon.
	\end{equation}
	Combining \eqref{eq:t1_valueineq9} with \eqref{eq:Phibound}, we have that \eqref{eq:t1_valueineq7} holds if the following holds:
	\begin{align} \label{eq:t1_valueineq8}
		V^*(\bar{p}) \cdot B_l \cdot \varepsilon > \frac{H - (1-\beta) \cdot L - \beta \cdot c}{1 - \beta} \cdot \frac{B_h^3}{B_l^2} \cdot \varepsilon^2.
	\end{align}
	If \ref{eq:t1_valueineq8} holds for the picked $\varepsilon$ we are done. If not, since the LHS is proportional to $\varepsilon$ and the RHS to $\varepsilon^2$, there exists a small enough $\tilde{\varepsilon} < \varepsilon$ such that \eqref{eq:t1_valueineq8} holds. For $\tilde{\varepsilon} < \varepsilon$ both inequalities in \ref{eq:Phibound} are valid and therefore \eqref{eq:t1_valueineq5} holds for $\tilde{\varepsilon}$ as well, meaning that strategy $\mu_{\tilde{\varepsilon}}$ is strictly better for $C_t$ rather than $\mu$.

	\bigskip
	This concludes the proof of the shape of equilibrium. The two remaining steps verify that an equilibrium of this form does, indeed, exist, and is unique.

	\refstepcounter{pfstep}\subparagraph{Step \thepfstep.}
	\textbf{Assume an equilibrium satisfies properties from Steps 1-7, then for any $p_t \in [\bar{p}, 1)$ the optimal pooling interval $[l(p_t), r(p_t)]$ is uniquely determined.}

	For a given $l(p_t)$, the right end $r(p_t)$ of the pooling interval is uniquely determined by Lemma \ref{lem:eps}. Therefore, it is sufficient to show that $l(p_t)$ is uniquely determined for any $p_t \in [\bar{p},1)$.
	Applying the implicit function theorem to \eqref{eq:rofl2} we obtain
	\begin{equation} \label{eq:implicit}
		\frac{dr(l)}{dl} = -\frac{\bar{p}-l}{r(l)-\bar{p}} \cdot \frac{\phi(l|p_t)}{\phi(r(l)|p_t)}.
	\end{equation}
	Given \eqref{eq:implicit}, $C_t$ chooses an optimal $l$ to maximize
	\begin{equation*}
		V^*(\bar{p}) \cdot \int\limits_{l}^{r} \; d\Phi(b|p_t) + \int\limits_{r}^{1} V^*(b) \; d\Phi(b|p_t).
	\end{equation*}
	Differentiating the above expression with respect to $l$ we get
	\begin{equation*}
		V^*(\bar{p}) \left(\phi(r | p_t) \cdot \frac{dr(l)}{dl} - \phi(l | p_t) \right) - V^*(r) \cdot \phi(r | p_t) \cdot \frac{dr(l)}{dl}.
	\end{equation*}
	Given \eqref{eq:implicit} the above expression can be rewritten as
	\begin{equation} \label{eq:derivative}
		\phi(l | p_t) \cdot \left(\frac{V^*(r(l)) - V^*(\bar{p})}{r(l) - \bar{p}} \cdot (\bar{p} - l) - V^*(\bar{p}) \right).
	\end{equation}
	We next show that there exists $l^*$ such that \eqref{eq:derivative} is strictly positive for $l < l^*$, equal to zero at $l^*$ and is strictly negative for $l > l^*$. by Lemma \ref{lem:mon_con}, $\frac{V^*(r(l)) - V^*(\bar{p})}{r(l) - \bar{p}}$ is an increasing function of $r$, and $r(l)$ is a decreasing function of $l$ by \eqref{eq:rofl2}. Therefore, the expression in brackets is a decreasing function of $l$. What remains is to show that \eqref{eq:derivative} is positive for $l = 0$ and is negative for $l = \bar{p}$. The latter is trivial, while the former reduces to ${V^*}'(\bar{p}) \cdot \bar{p} - V^*(\bar{p}) > 0$. Given \eqref{eq:V_derivative} this expression can be calculated explicitly:
	\begin{equation*}
		{V^*}'(\bar{p}) \cdot \bar{p} - V^*(\bar{p}) = \bar{p} \cdot (H - L) + \beta \cdot \bar{p} \cdot \left(V^H(\bar{p}) - V^L(\bar{p})\right) - \beta \cdot \bar{p} \cdot V^H(\bar{p}) - \beta \cdot (1-\bar{p}) \cdot V^L(\bar{p}) =  \bar{p} \cdot (H - L) - \beta \cdot V^L(\bar{p}) > 0
	\end{equation*}This implies $l^*$ is the optimal left end of the pooling interval, and $r^*$ derived from \eqref{eq:rofl2} is then the optimal right end of the pooling interval.

	\refstepcounter{pfstep}\subparagraph{Step \thepfstep.}
	\textbf{An equilibrium exists and is unique up to payoff equivalence.}
	
	Remember that a Markov equilibrium is a collection of communication strategies defined for all $p_t$. Steps 1-10 established that to describe the optimal strategy for any $p_t \in [\bar{p},1)$ it is enough to determine the pooling region $[l(p_t), r(p_t)]$ such that the bounds of the interval satisfy \eqref{eq:rofl2}.
	
	Notice that $V^*(p_t)$ can be interpreted as the value of consumer $C_t$ with public belief $p_t$, evaluated \textit{before} $C_t$ purchases the product. Therefore, $V^*(p_t)$ satisfies the following Bellman equation for any $p_t \in [\bar{p}, 1)$:
	\begin{align} \label{eq:Bellman}
		V^*(p_t) = \max\limits_{l,r} & \left[\theta(p_t) - c + \beta \cdot \int\limits_{l}^{r} V^*(\bar{p}) \; d\Phi(b | p_t) + \beta \cdot \int\limits_{r}^{1} V^*(b) \; d\Phi(b | p_t)\right] \\
		\nonumber
		\text{s.t.} & \int \limits_{\bar{p}}^{r} (b-\bar{p}) \cdot \phi (b|p_t) \; db = \int \limits_{l}^{\bar{p}} (\bar{p}-b) \cdot \phi (b|p_t) \; db.
	\end{align}
	We next apply the Banach fixed-point theorem to establish that there exists a unique $V^*(p_t)$ which constitutes a solution to \eqref{eq:Bellman}. To do this, consider a metric space of functions $\mathcal{C^I}[\bar{p}, 1]$ defined on $[\bar{p}, 1]$ such that $f \in \mathcal{C^I}[\bar{p}, 1]$ if:
	\begin{enumerate}
		\item $f$ is continuous on $[\bar{p}, 1]$,
		\item $f(\bar{p}) \geqslant 0$, $f(1) = \frac{H-c}{1-\beta}$,
		\item $f$ is (weakly) increasing on $[\bar{p},1]$. 
	\end{enumerate}
	We endow it with a standard metric $\rho(f,g) = \max\limits_{x \in [\bar{p}, 1]} |f(x) - g(x)|$.\footnote{Note that this is equivalent to the standard sup-norm $||\cdot||_\infty$. Since the functions are continuous, the maximum exists.} Define mapping $\mathcal{C^I}[\bar{p}, 1] \rightarrow \mathcal{C^I}[\bar{p}, 1]$ as
	\begin{align} \label{eq:mapping}
		T_f(p_t) = \max\limits_{l, r} & \left[\theta(p_t) - c + \beta \cdot \int\limits_{l}^{r} f(\bar{p}) \; d\Phi(b | p_t) + \beta \cdot \int\limits_{r}^{1} f(b) \; d\Phi(b | p_t)\right]
		\\
		\nonumber
		\text{s.t.} & \int \limits_{\bar{p}}^{r} (b-\bar{p}) \cdot \phi (b|p_t) \; db = \int \limits_{l}^{\bar{p}} (\bar{p}-b) \cdot \phi (b|p_t) \; db.
	\end{align}
	We separate this step into three sub-steps. These sub-steps verify that the premise of the Banach fixed-point theorem is satisfied.

	\subparagraph{Step \thepfstep.1}
	\textbf{$T_f(p_t)$ is well defined, i.e., $T_f(p_t)$ maps $\mathcal{C^I}[\bar{p}, 1]$ into itself.}
	
	To begin with, note that the solution to the maximization problem in the RHS of \eqref{eq:mapping} exists, and thus $T_f(p_t)$ is well defined. Indeed, Lemma \ref{lem:eps} shows that $r(l,p_t)$ is a continuous function. Therefore, the expression within the brackets in the RHS of \eqref{eq:mapping} is a continuous function of $l$, and thus it attains its maximum in $l$ on $[0, \bar{p}]$.
	We now proceed to verify that $T_f$ satisfies the properties required by $\mathcal{C^I}[\bar{p}, 1]$.
	
	First, let us show that $T_f(\bar{p}) \geqslant 0$ and $T_f(1) = \frac{H-c}{1-\beta}$. If $f(\bar{p}) \geqslant 0$ and $f$ is weakly increasing, then $f(x) \geqslant 0$ for all $x \in [\bar{p},1]$. Therefore $T_f(\bar{p}) \geqslant 0$ no matter how $l,r$ are chosen in \eqref{eq:mapping}. With $p_t = 1$, distribution $\Phi(b|1)$ is degenerate with a point mass in $b = 1$. Therefore, \eqref{eq:mapping} does not depend on $l,r$, and we have
	\begin{equation*}
		T_f(1) = H - c + \beta \cdot \frac{H-c}{1-\beta} = \frac{H-c}{1-\beta}.
	\end{equation*}
	
	Second, we establish the monotonicity of $T_f$. Fix any $x_1 < x_2$. Let $l_1, r_1$ and $l_2, r_2$ be the respective maximizers of \eqref{eq:mapping} at $p_t = x_1,x_2$. Also define $r'_1 := r(l_1, x_2)$, i.e., the value of $r$ that solves \eqref{eq:rofl2} with $l = l_1$ and $p_t = x_2$. Then
	\small
	\begin{align*}
		T_f(x_1) &= \theta(x_1) - c + \beta \cdot \left(\int\limits_{l_1}^{r_1} f(\bar{p}) \; d\Phi(b | x_1) + \int\limits_{r_1}^{1} f(b) \; d\Phi(b | x_1)\right) \\
		&< \theta(x_2) - c + \beta \left(\int\limits_{l_1}^{r_1} f(\bar{p}) \; d\Phi(b | x_2) + \int\limits_{r_1}^{1} f(b) \; d\Phi(b | x_2) \right) \\
		& \leqslant \theta(x_2) - c + \beta\left(\int\limits_{l_1}^{r'_1} f(\bar{p}) \; d\Phi(b | x_2) + \int\limits_{r'_1}^{1} f(b) \; d\Phi(b | x_2) \right) \\
		& \leqslant \theta(x_2) - c + \beta\left(\int\limits_{l_2}^{r_2} f(\bar{p}) \; d\Phi(b | x_2) + \int\limits_{r_2}^{1} f(b) \; d\Phi(b | x_2) \right) = T_f(x_2).
	\end{align*}
	\normalsize
	The first inequality in the sequence above is valid because $\Phi(b|x_2)$ first-order stochastically dominates $\Phi(b|x_1)$ and $f(b)$ is increasing. The second inequality follows from $r'_1 < r_1$ and $f(b)$ being increasing. The last inequality holds because $l_1, r'_1$ are not necessarily the maximizers for \eqref{eq:mapping} given $p_t = x_2$.
	
	Finally, we establish the continuity of $T_f(p_t)$ in $p_t$. We first show that $T_f(p_t)$ is continuous at $p_t = 1$. $T_f(1) = \frac{H-c}{1-\beta}$. Take any $\varepsilon > 0$. Because $f(b)$ is continuous and strictly increasing, there exists $\Delta > 0$ such that for all $b \in [1-\Delta, 1]$ we have $f(b) \geqslant \frac{H-c}{1-\beta} - \frac{\varepsilon}{3}$. Then take $\delta_1$ such that $\Phi(1-\Delta | 1- \delta_1) < \frac{\varepsilon}{3 \cdot \frac{H-c}{1-\beta}}$ and $\delta_2 = \frac{\varepsilon}{3 (H-L)}$. Then for any $\delta < \min\{\delta_1, \delta_2\}$:
	\begin{equation*}
		T_f(1-\delta) > H-c - \frac{\varepsilon}{3} + 0 + \beta \cdot \left(1-\frac{\varepsilon}{3 \cdot \frac{H-c}{1-\beta}}\right) \cdot \left(\frac{H-c}{1-\beta} - \frac{\varepsilon}{3}\right) > \frac{H-c}{1-\beta} - \varepsilon,
	\end{equation*}
	which implies that $T_f(p_t)$ is continuous at $p_t = 1$. Next, we show that $T_f(p_t)$ is continuous for $p_t \in [\bar{p}, 1-\delta]$ for any given $\delta > 0$. We can apply Berge's Maximum Theorem (see \citet*{SLP}, Theorem 3.6). Consider \eqref{eq:Bellman} as a maximization problem in $l$ and treat $r = r(l, p_t)$ that solves the constraint as a function of $l$ and $p_t$. Then for any $p_t \in [\bar{p}, 1-\delta]$, the set of admissible $l$ is a closed interval $[0,\bar{p}]$. The only property that needs to be checked then is the continuity of the expression within brackets in \eqref{eq:Bellman}.
	The first summand $\theta(p_t) - c$ is obviously continuous in $p_t$. The second summand can be rewritten as $V^*(\bar{p}) \cdot \left(\Phi(r(l,p_t) | p_t) - \Phi(l | p_t)\right)$. Lemma \ref{lem:eps} implies $r(l,p_t)$ is a continuous function. $\Phi(b|p_t)$ is a continuous function of its arguments, and, therefore, the summand is also continuous, as it is a composition of continuous functions. Finally, let $(l^n, p_t^n)$ be a sequence that converges to $(l,p_t)$, and let $r^n := r(l^n, p_t^n)$. Then because $f(b) \leqslant \frac{H-c}{1-\beta}$, by the Dominated Convergence Theorem (see \citet*{EG}, Theorem 1.19), we have that as $n \rightarrow +\infty$,
	\begin{equation*}
		\int\limits_r^1 f(b) \cdot \phi(b|p_t^n) \; db \rightarrow \int\limits_r^1 f(b) \cdot \phi(b|p_t) \; db.
	\end{equation*}
	Finally, because $\Phi(b|p_t)$ is continuous, and $r^n \rightarrow r$, $p_t^n \rightarrow p_t$,
	\begin{equation*}
		\int\limits_{r^n}^r f(b) \cdot \phi(b|p_t^n) \; db \leqslant \frac{H-c}{1-\beta} \cdot \left(\Phi(r|p_t^n) - \Phi(r^n|p_t^n)\right) \rightarrow 0,
	\end{equation*}
	which concludes the proof.

	\subparagraph{Step \thepfstep.2}
	\textbf{$\mathcal{C^I}[\bar{p}, 1]$ is a complete metric space.}
	
	The space of continuous functions defined on $[\bar{p},1]$ is complete. Therefore, any fundamental sequence $\{f_n\}_{n=1}^{+\infty}$ with $f_n \in \mathcal{C^I}[\bar{p}, 1]$ converges to a continuous function $f$. The remaining properties of the limiting function are straightforward. If $f_n(\bar{p}) \geqslant 0$ and $f(1) = \frac{H-c}{1-\beta}$ for all $n$, then taking the limit in $n$ we get $f(\bar{p}) \geqslant 0$ and $f(1) = \frac{H-c}{1-\beta}$. It remains to show the monotonicity of the limiting function. Fix any $x_1 < x_2$. If $f_n(x_1) \leqslant f_n(x_2)$ for all $n$, then taking the limit on both sides in $n$ we get $f(x_1) \leqslant f(x_2)$.

	\subparagraph{Step \thepfstep.3}
	\textbf{$T_f(p_t)$ is a contraction on $\mathcal{C^I}[\bar{p}, 1]$.}
	
	Take any $f,g \in \mathcal{C^I}[\bar{p}, 1]$ and $p_t \in [\bar{p},1]$. Denote the respective maximizers of \eqref{eq:mapping} at $p_t$ as $l_f, r_f$ for $f$ and $l_g, r_g$ for $g$ (if the maximizers are not unique, pick any). Without loss, assume that $T_f(p_t) \geqslant T_g(p_t)$. Then
	\begin{align*}
		T_f(p_t) - T_g(p_t) &= \beta \cdot \int\limits_{l_f}^{r_f} f(\bar{p}) \; d\Phi(b | p_t) + \beta \cdot \int\limits_{r_f}^{1} f(b) \; d\Phi(b | p_t) - \beta \cdot \int\limits_{l_g}^{r_g} g(\bar{p}) \; d\Phi(b | p_t) - \beta \cdot \int\limits_{r_g}^{1} g(b) \; d\Phi(b | p_t) \\
		& \leqslant \beta \cdot \int\limits_{l_g}^{r_g} \left[f(\bar{p}) - g(\bar{p})\right] \; d\Phi(b | p_t) + \beta \cdot \int\limits_{r_g}^{1} \left[f(b)-g(b)\right] \; d\Phi(b | p_t) \\
		& \leqslant \beta \cdot \rho(f,g) \cdot  \int\limits_{l_g}^{1}  d\Phi(b|p_t) \leqslant \beta \cdot \rho(f,g).
	\end{align*}
	Therefore, $\rho(T_f, T_g) = \max\limits_{p_t \in [\bar{p}, 1]} | T_f(p_t) - T_g(p_t) | \leqslant \beta \cdot \rho (f,g)$, and thus $T_f(p_t)$ is a contraction on $\mathcal{C^I}[\bar{p}, 1]$.

	\medskip
	The sub-steps above imply that there exists a unique $V^*(p_t)$ that solves \eqref{eq:Bellman}. Therefore, the associated maximizers $l(p_t), r(p_t)$ determine the optimal pooling interval for the optimal communication strategies. Step 10 above implies that these bounds are then unique for any $p_t \in [\bar{p}, 1)$. This concludes the proof of the theorem. \qed

	\subsection{Proofs for Section \ref{sec:CT}}

	In this section, we prove the results for the infinite-horizon model with cheap talk. We use ``cheap talk equilibrium'' to refer to ``stationary MPE with Cheap Talk''.
	Before we proceed with the proofs, note that Lemma \ref{lem:representation} applies to cheap talk equilibria as well. The notions of $V(p_{t+1} \mid p_t,b_t)$, $V^H (p_t)$, $V^L (p_t)$ are well defined, although numerically they can be different from the respective value functions for the commitment case.

	\begin{lemma} \label{lem:VH_VL}
		Fix any cheap talk equilibrium.
		Suppose that for some $p_t \in [\bar{p},1)$, there exist $m', m'' \in \mathcal{E}(p_t)$ such that $p_{t+1}' := q(m' | p_t) < q(m'' | p_t) := p_{t+1}''$.
		Then one of the following must hold:
		\begin{enumerate}
			\item $V^H(p_{t+1}') = V^H(p_{t+1}'')$ and $V^L(p_{t+1}') = V^L(p_{t+1}'')$;
			\item $V^H(p_{t+1}') < V^H(p_{t+1}'')$ and $V^L(p_{t+1}') > V^L(p_{t+1}'')$.
		\end{enumerate}
	\end{lemma}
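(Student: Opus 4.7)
The plan is to exploit the linear representation of continuation values from Lemma~\ref{lem:representation} together with the ex post optimality condition of the cheap talk equilibrium. Define
\begin{equation*}
    \Delta V(b_t) := V(m'' \mid p_t, b_t) - V(m' \mid p_t, b_t) = \beta \cdot \left[ b_t \cdot \Delta V_H + (1-b_t) \cdot \Delta V_L \right],
\end{equation*}
where $\Delta V_\theta := V_\theta(p_{t+1}'') - V_\theta(p_{t+1}')$ for $\theta \in \{H,L\}$. By Lemma~\ref{lem:representation} this function is affine in $b_t$, taking values $\beta \Delta V_L$ at $b_t = 0$ and $\beta \Delta V_H$ at $b_t = 1$. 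Since $m', m'' \in \mathcal{E}(p_t)$, both are sent on-path, and by the Bayes' rule in \eqref{eq:bel_con2} both must be sent on sets of $b_t$ of positive $\Phi(\cdot \mid p_t)$-measure; by MLRP and absolute continuity of $F_\theta$, $\Phi(\cdot \mid p_t)$ has a positive density throughout $(0,1)$.

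Next, ex post optimality implies that on the set where $m'$ is played with positive probability $\Delta V(b_t) \leq 0$, and on the set where $m''$ is played with positive probability $\Delta V(b_t) \geq 0$. I then proceed by case analysis on the signs of $(\Delta V_L, \Delta V_H)$. If both are strictly positive (resp. strictly negative), then $\Delta V(b_t) > 0$ (resp. $< 0$) on all of $(0,1)$ as a strict convex combination, so $m'$ (resp. $m''$) is strictly dominated everywhere in the support, contradicting its being on-path. The ``mixed boundary'' cases where one delta is zero and the other nonzero are ruled out analogously, since $\Delta V$ would be nonzero on a co-null set and one message would be sent only on a $\Phi$-null set, contradicting on-path use. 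The only possibilities remaining are $\Delta V_H = \Delta V_L = 0$ (which yields Case 1) and $\Delta V_H$ and $\Delta V_L$ of strictly opposite signs.

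To select between the two opposite-sign subcases, I use the hypothesis $p_{t+1}' < p_{t+1}''$. Suppose for contradiction that $\Delta V_L > 0$ and $\Delta V_H < 0$. Then $\Delta V(b_t)$ is strictly decreasing in $b_t$ with a unique zero $b^* \in (0,1)$, so ex post optimality forces $m''$ to be sent on $\{b_t < b^*\}$ and $m'$ on $\{b_t > b^*\}$ (both of positive measure). Since the Bayesian posterior induced by each message is a conditional expectation of $b_t$ over the respective set, one obtains $p_{t+1}'' = \mathbb{E}[b_t \mid b_t < b^*] < \mathbb{E}[b_t \mid b_t > b^*] = p_{t+1}'$, contradicting the hypothesis. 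Hence $\Delta V_L < 0$ and $\Delta V_H > 0$, which is Case 2.

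The main obstacle I anticipate is the careful handling of the ``degenerate'' boundary subcases (where one of $\Delta V_L, \Delta V_H$ vanishes and the other does not), since ruling these out requires invoking absolute continuity of $\Phi(\cdot \mid p_t)$ together with the full-support consequence of MLRP to argue that any equilibrium-relevant message must be played on a set of positive measure; beyond that, the argument is a transparent sign-analysis of an affine function.
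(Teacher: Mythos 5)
Your proof is correct and follows essentially the same route as the paper's: decompose the value difference via the affine representation from Lemma~\ref{lem:representation}, rule out the ``same-sign'' cases because one message would then never be ex post optimal on a positive-measure set, and rule out the ``wrong-opposite-sign'' case by observing that the threshold structure would invert the order of the induced posteriors via Bayes' rule, contradicting $p_{t+1}' < p_{t+1}''$. The one place you go beyond the paper's exposition is in explicitly handling the degenerate subcases where exactly one of $\Delta V_H, \Delta V_L$ vanishes; the paper's proof phrases its first two eliminations in terms of strict inequalities for both states and implicitly folds these boundary cases into the same argument, whereas you isolate them and correctly dispatch them by noting that full support of $b_t$ on $(0,1)$ (via Lemma~\ref{lem:F_distribution} and MLRP) makes the indifference set $\Phi$-null. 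One small caveat worth being aware of: when you conclude $p_{t+1}'' = \mathbb{E}[b_t \mid b_t < b^*]$ and $p_{t+1}' = \mathbb{E}[b_t \mid b_t > b^*]$, these equalities hold only if $m''$ and $m'$ are the unique on-path messages on those half-intervals; if $\mathcal{E}(p_t)$ contains other messages, the supports of $m'$ and $m''$ are merely \emph{contained} in $\{b_t > b^*\}$ and $\{b_t \leq b^*\}$ respectively. The contradiction survives, since any positive-measure subset of $\{b_t > b^*\}$ has conditional mean $> b^*$ and similarly on the other side, so $p_{t+1}'' < b^* < p_{t+1}'$ still follows; the paper's proof is equally terse on this point.
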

	\begin{proof}
		By contradiction, if $V_\theta (p_{t+1}') < V_\theta (p_{t+1}'')$ for both $\theta$, then for any $b_t$: $V(m' \mid p_t, b_t) < V(m'' \mid p_t, b_t)$, so sending message $m'$ can never be optimal. Hence $m' \notin \mathcal{M}(p_t)$ -- a contradiction. Analogously, it cannot be that $V_\theta (p_{t+1}') > V_\theta (p_{t+1}'')$ for both $\theta$. 
		To complete the proof, we are left to rule out the case $V^H(p_{t+1}') > V^H(p_{t+1}'')$ and $V^L(p_{t+1}') < V^L(p_{t+1}'')$. Again, assume, by way of contradiction, that this is the case. Since $m',m'' \in \mathcal{E}(p_t)$, there exist $b', b'' \in [0,1]$, for which sending $m'$ and $m''$ respectively is optimal: $V(m' \mid p_t, b') \geqslant V(m'' \mid p_t, b')$ and $V(m' \mid p_t, b'') \leqslant V(m'' \mid p_t, b'')$.
		From Lemma \ref{lem:representation} it is then immediate that there exists $\bar{b} \in (0,1)$ such that $V(m' \mid p_t, b) \leqslant V(m'' \mid p_t, b)$ for $b \in [0, \bar{b}]$ and $V(m' \mid p_t, b) \geqslant V(m'' \mid p_t, b)$ for $b \in [\bar{b}, 1]$. But then $m' > \bar{b} > m''$, which contradicts the assumption that $m' < m''$. This concludes the proof.
	\end{proof}
	
	\begin{corollary} \label{cor:monotonicity}
		In any cheap talk equilibrium, for any $p_t \in [\bar{p},1)$:
		\begin{enumerate}
			\item $V^H(p_{t+1})$ is a (weakly) increasing function of $p_{t+1}$ on $[\bar{p},1)$,
			\item $V^L(p_{t+1})$ is a (weakly) decreasing function of $p_{t+1}$ on $[\bar{p},1)$.
		\end{enumerate}
	\end{corollary}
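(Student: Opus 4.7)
The corollary is almost a direct dichotomy-to-monotonicity upgrade of Lemma \ref{lem:VH_VL}, so the plan is to simply invoke that lemma pairwise. My proof would proceed as follows.

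First, I would fix an arbitrary $p_t$ and pick any two messages $m', m'' \in \mathcal{E}(p_t)$ with $m' < m''$. Because we restrict attention to equilibria with direct communication, $m = q(m \mid p_t)$ on the equilibrium path, so the induced public posteriors satisfy $p_{t+1}' := q(m' \mid p_t) = m' < m'' = q(m'' \mid p_t) =: p_{t+1}''$. This puts me in exactly the hypothesis of Lemma \ref{lem:VH_VL}.

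Second, I would apply that lemma to conclude that one of two exclusive cases holds: either (i) $V_H(p_{t+1}') = V_H(p_{t+1}'')$ and $V_L(p_{t+1}') = V_L(p_{t+1}'')$, or (ii) $V_H(p_{t+1}') < V_H(p_{t+1}'')$ and $V_L(p_{t+1}') > V_L(p_{t+1}'')$. In either case, $V_H(q(m \mid p_t))$ does not decrease and $V_L(q(m \mid p_t))$ does not increase as we move from $m'$ to the larger $m''$, which is exactly the required weak monotonicity of the two maps in $m$.

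Since the pair $(m', m'')$ was arbitrary in $\mathcal{E}(p_t)$, both claims of the corollary follow at once, and no further calculation is required. There is no real obstacle here: the entire content is carried by Lemma \ref{lem:VH_VL}, and the only thing to verify is that the ordering assumption $m' < m''$ translates to the ordering of induced posteriors $p_{t+1}' < p_{t+1}''$ needed to invoke it, which is immediate under the direct-communication convention adopted in Section \ref{sub:eqdef}.
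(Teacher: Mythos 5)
Your proof is correct and takes exactly the same route as the paper: the paper's proof of this corollary is the single line ``Follows directly from Lemma \ref{lem:VH_VL}.'' You simply make explicit the two things the paper leaves implicit -- that the direct-communication convention lets you identify the ordering of messages with the ordering of induced posteriors $q(m \mid p_t)$, and that applying the lemma pairwise to any $m' < m''$ in $\mathcal{E}(p_t)$ yields the required weak monotonicity in both cases of the dichotomy.
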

	\begin{proof}
		Follows directly from Lemma \ref{lem:VH_VL}.
	\end{proof}
	
	We will be using the following shorthand notation for the consumers' optimal continuation value in private state $(p_t, b_t)$:
	\begin{equation*}
		V(p_t,b_t) := \max \limits_{p_{t+1} \in \mathcal{P}(p_t)} V(p_{t+1} \mid p_t,b_t).
	\end{equation*}

	\begin{lemma} \label{lem:threshold1}
		In any cheap talk equilibrium, in any public state $p_t \in [\bar{p},1)$ there exists $\bar{b} (p_t) \in [0,1]$ such that
		\begin{enumerate}
			\item For all $b < \bar{b}(p_t)$,  $\mu(m \mid p_t, b) > 0$ only if $m \in \mathcal{S}(p_t)$;
			\item For all $b > \bar{b}(p_t)$, $\mu(m \mid p_t, b) > 0$ only if $m \in \mathcal{E}(p_t)$.
		\end{enumerate}
	\end{lemma}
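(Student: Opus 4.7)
The plan is to show that the decision between a stopping and a continuing message is monotone in the private belief $b$, because of a sharp qualitative difference in how the two classes of messages depend on $b$. First, for any $m \in \mathcal{S}(p)$ experimentation halts forever after period $t$, so $V(m \mid p, b) = 0$ identically in $b$. Second, by Lemma \ref{lem:representation}, for any $m \in \mathcal{E}(p)$ the value $V(m \mid p, b) = \theta(b) - c + \beta[b V_H(q(m \mid p)) + (1-b) V_L(q(m \mid p))]$ is affine in $b$ with slope $H - L + \beta(V_H - V_L) \geq H - L > 0$, using the sign facts $V_H \geq 0 \geq V_L$ established inside that proof.

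Next, I would introduce the upper envelope $V^E(p, b) := \sup_{m \in \mathcal{E}(p)} V(m \mid p, b)$, setting $V^E(p, \cdot) \equiv -\infty$ if $\mathcal{E}(p)$ is empty. As a supremum of affine functions whose slopes are uniformly bounded below by $H - L$, $V^E(p, \cdot)$ is convex, hence continuous on the interior of $[0, 1]$, and it inherits the slope bound in the sense that $V^E(p, b'') - V^E(p, b') \geq (H - L)(b'' - b')$ for all $b'' > b'$. In particular, $V^E(p, \cdot)$ is strictly (and indeed Lipschitz) increasing.

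The threshold is then defined by $\bar{b}(p) := \inf\{b \in [0, 1] : V^E(p, b) \geq 0\}$, with the convention $\bar{b}(p) := 1$ if this set is empty. For $b < \bar{b}(p)$, strict monotonicity gives $V^E(p, b) < 0 = V(m' \mid p, b)$ for every $m' \in \mathcal{S}(p)$, so Ex Post Optimality precludes any $m \in \mathcal{E}(p)$ from lying in the support of $\mu(\cdot \mid p, b)$, delivering claim 1. For $b > \bar{b}(p)$, the same argument in reverse gives $V^E(p, b) > 0$ and rules out $\mathcal{S}(p)$-messages, delivering claim 2 everywhere except the single point $b = \bar{b}(p)$.

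The only delicate point is this boundary: continuity of $V^E$ on the interior forces $V^E(p, \bar{b}(p)) = 0$, so both classes of messages are simultaneously optimal there. Since $b_t \sim \Phi(\cdot \mid p)$ is atomless (by absolute continuity of $F_\theta$ via Lemma \ref{lem:F_distribution}), the event $\{b_t = \bar{b}(p)\}$ has zero probability, and without loss we may adopt the tie-breaking convention that at this single indifference point the consumer selects a message in $\mathcal{E}(p)$; this is consistent with Ex Post Optimality and produces the statement as written. The main (and mild) obstacle is precisely this boundary indifference; the rest of the argument is a direct consequence of Lemma \ref{lem:representation} together with the structural observation that stopping messages yield zero continuation value.
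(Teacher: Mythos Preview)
Your proof is correct and takes a genuinely different route from the paper's. The paper argues by contradiction: it supposes there exist $b' < b''$ with $b'$ sending some $q' \in \mathcal{E}(p)$ and $b''$ sending some $q'' \in \mathcal{S}(p)$, then uses the strict monotonicity from Lemma~\ref{lem:representation} to obtain the chain $0 = V(p,b'') \geqslant V(q' \mid p,b'') > V(q' \mid p,b') = V(p,b') \geqslant 0$, contradicting optimality of $q''$ at $b''$. That argument is a two-line single-crossing observation and never explicitly constructs $\bar{b}(p)$. Your approach is constructive: you form the upper envelope $V^E(p,b)$ over $\mathcal{E}(p)$, exploit the uniform slope bound $H-L$ to show it is strictly increasing, locate $\bar{b}(p)$ as its zero, and read off both claims directly. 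Your version delivers an explicit formula for the threshold and treats the empty-$\mathcal{E}(p)$ case and the boundary indifference point more carefully than the paper does (the paper's proof is silent on the tie at the threshold); the paper's version is shorter and needs only monotonicity of each $V(m\mid p,\cdot)$ rather than convexity or continuity of the envelope. Both rest on the same ingredient from Lemma~\ref{lem:representation}, so the difference is in packaging rather than substance.
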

	\begin{proof}
		The case when $\mathcal{S}(p_t) = \emptyset$ is equivalent to assuming $\bar{b}(p_t) = 0$. Therefore, we hereinafter assume that $\mathcal{S}(p_t) \ne \emptyset$. Assume the contrary: there exist $0 \leqslant b' < b'' \leqslant 1$ and $q' \in \mathcal{E}(p_t)$, $q'' \in \mathcal{S}(p_t)$ such that $\mu(q' | p_t, b') > 0$ and $\mu(q'' | p_t, b'') > 0$. Then $V(q'' \mid p_t, b'') = V(p_t, b'') = 0$ and $V(p_t, b') = V(q' \mid p_t, b') \geqslant V(q'' \mid p_t, b') = 0$. At the same time, $b' < b''$ and representation \eqref{eq:representation} from Lemma \ref{lem:representation} implies $V(q' \mid p_t, b'') > V(q' \mid p_t, b')$,  meaning
		\begin{equation*}
			0 = V(p_t, b'') \geqslant V(q' \mid p_t, b'') > V(q' \mid p_t, b') = V(p_t, b') \geqslant 0,
		\end{equation*}
		which is a contradiction. This argument proves the lemma.
	\end{proof}

	\begin{lemma} \label{lem:prelim}
		In any cheap talk equilibrium, in any public state $p_t \in [\bar{p},1)$,
		\begin{enumerate}
			\item $V(p_{t+1} | p_t, \bar{p}) \geqslant 0$ for any $p_{t+1} \in \mathcal{E}(p_t)$.
		  \item If there exists $p_{t+1} \in \mathcal{E}(p_t)$ s.t. $\mathcal{S}(p_{t+1}) \ne \emptyset$, then $V(p_t,\bar{p}) > 0$,
		\end{enumerate}
	\end{lemma}
	\begin{proof}
		We next prove the first part. Representation \eqref{eq:representation} applies and therefore
		\begin{equation} \label{eq:lem13}
			V(p_{t+1} \mid p_t, \bar{p}) = \beta \cdot \Big[\bar{p} \cdot V^H(p_{t+1}) + (1-\bar{p}) \cdot V^L(p_{t+1})\Big].
		\end{equation}
		For any $p_t$ denote as $p^i_{t+1}$, $i=1,2, \ldots$ the set of all posteriors that constitute $\mathcal{P}(p_{t})$, and by $w^\theta(p^i_{t+1} | p_t)$ the measure of the region where $p^i_{t+1}$ is induced given public belief $p_t$ and state $\theta$ according to $\Phi_\theta(b|p_t)$. Without loss, assume that if $\mathcal{S}(p_t) \ne \emptyset$ then $\mathcal{S}(p_t) = \{p^1_{t+1}\}$ and that posteriors are labeled in the increasing order: $p^i_{t+1} < p^{i+1}_{t+1}$ for all $i$. Depending on whether $\mathcal{S}(p_t) = \emptyset$ or not \eqref{eq:lem13} expands differently. First, consider the case when $\mathcal{S}(p_t) \ne \emptyset$. Then \eqref{eq:lem13} reduces to
		\begin{multline} \label{eq:posterior}
			V(p_{t+1} \mid p_t, \bar{p}) = \beta \cdot \Big[ \bar{p} \cdot (1-w^H(p^1_{t+2} | p_{t+1})) \cdot (H-c) + (1-\bar{p}) \cdot (1-w^L(p^1_{t+2} | p_{t+1})) \cdot (L-c) + \\ \bar{p} \cdot \beta \cdot \sum\limits_{i \geqslant 2} w^H(p^i_{t+2} | p_{t+1}) \cdot V^H(p^i_{t+2}) + (1-\bar{p}) \cdot \beta \cdot \sum\limits_{i \geqslant 2} w^L(p^i_{t+2} | p_{t+1}) \cdot V^L(p^i_{t+2})\Big].
		\end{multline}
		Note that $\Phi_H(b | p_t)$ first-order stochastically dominates $\Phi_L(b | p_t)$. Therefore, $0 < w^H(p^1_{t+2} | p_{t+1}) < w^L(p^1_{t+2} | p_{t+1}) < 1$ and thus the sum of the first two terms in \eqref{eq:posterior} is strictly positive. Moreover, by Corollary \ref{cor:monotonicity} we then have
		\begin{equation*}
			\sum\limits_{i \geqslant 2} w^H(p^i_{t+2} | p_{t+1}) \cdot V^H(p^i_{t+2}) \geqslant \sum\limits_{i \geqslant 2} w^L(p^i_{t+2} | p_{t+1}) \cdot V^H(p^i_{t+2}).
		\end{equation*}
		$V(p_{t+1} \mid p_t, \bar{p})$ then satisfies
		\begin{equation*}
			V(p_{t+1} \mid p_t, \bar{p}) > \beta^2 \cdot \Big[ \sum\limits_{i \geqslant 2} w^L(p^i_{t+2} | p_{t+1}) \cdot \left( \bar{p} \cdot V^H(p^i_{t+2}) + (1-\bar{p}) \cdot V^L(p^i_{t+2})\right)\Big].
		\end{equation*}
		By representation \eqref{eq:representation} the term in brackets reduces to $V(p^i_{t+2} | p_{t+1}, \bar{p})$. Therefore
		\begin{equation} \label{eq:main1}
			V(p_{t+1} \mid p_t, \bar{p}) > \beta \cdot \Big[ \sum\limits_{i \geqslant 2} w^L(p^i_{t+2} | p_{t+1}) \cdot V(p^i_{t+2} | p_{t+1}, \bar{p})\Big].
		\end{equation}
		The case when $\mathcal{S}(p_t) = \emptyset$ is similar. Equation \eqref{eq:lem13} transforms to
		\begin{align*}
			V(p_{t+1} \mid p_t, \bar{p}) = \beta^2 \cdot \Big[ \bar{p} \cdot \sum\limits_{i \geqslant 1} w^H(p^i_{t+2} | p_{t+1}) \cdot V^H(p^i_{t+2}) + (1-\bar{p}) \cdot \sum\limits_{i \geqslant 1} w^L(p^i_{t+2} | p_{t+1}) \cdot V^L(p^i_{t+2})\Big].
		\end{align*}
		Proceeding through the same steps we get
		\begin{equation} \label{eq:main2}
			V(p_{t+1} \mid p_t, \bar{p}) \geqslant \beta \cdot \Big[ \sum\limits_{i \geqslant 1} w^L(p^i_{t+2} | p_{t+1}) \cdot V(p^i_{t+2} | p_{t+1}, \bar{p})\Big].
		\end{equation}
		We can iterate inequalities for \eqref{eq:main1} and \eqref{eq:main2} further. $V(p_{t+1} \mid p_t, b_t)$ is bounded in its absolute value due to \eqref{eq:representation} by $M:= \max \{\frac{H-c}{1-\beta}, \frac{c-L}{1-\beta}\}$ and the sum of all weights in the RHS in \eqref{eq:main1} and \eqref{eq:main2} does not exceed $1$. Therefore for sufficiently many iteration steps the absolute value of the RHS in \eqref{eq:main1} and \eqref{eq:main2} can be made smaller than any positive number. This clearly implies that $V(p_{t+1} \mid p_t, \bar{p}) \geqslant 0$.

		The second part of the lemma follows from the first part and representation \eqref{eq:main1} because $V(p^i_{t+2} | p_{t+1}, \bar{p}) \geqslant 0$.
	\end{proof}

	\begin{lemma} \label{lem:threshold2}
		Threshold $\bar{b}(p_t)$ defined in Lemma \ref{lem:threshold1} satisfies $\bar{b} (p_t) \in [0, \bar{p}]$. Further, $\bar{b}(p_t) = \bar{p}$ if and only if the experimentation never stops after any $m \in \mathcal{E}(p_t)$.\footnote{It means that $\mathcal{S}(p_\tau) = \emptyset$ for all $p_\tau, \tau > t$ which can be on path of play originating from $p_t$.}
	\end{lemma}
	\begin{proof}		
		If experimentation never stops after any $p_{t+1} \in \mathcal{E}(p_t)$, $C_t$ decides whether all future consumers buy the product or avoid it. The discounted expected utilities in the two cases from her point of view is then $\frac{1}{1-\beta} \cdot \left(\theta(b_t) - c\right)$ and zero, respectively. Therefore, $C_t$ sends $p_{t+1} \in \mathcal{E}(p_t)$ if and only if $\theta(b_t) \geqslant c$ -- equivalently, $b_t \geqslant \bar{p}$, -- and stops experimentation otherwise.
		
		It is left to show the converse -- that if experimentation can be stopped by some future consumer then $\bar{b}(p_t) < \bar{p}$. 
		Let $\tau(p_t) := \min \{ \tau \geqslant 1 \mid \exists \; p_{t+1} \in \mathcal{E} (p_t), p_{t+2} \in \mathcal{E} (p_{t+1}), \ldots, p_{t+\tau} \in \mathcal{E} (p_{t+\tau-1}): \mathcal{S}(p_{t+\tau}) \ne \emptyset \}$ denote the minimal number of periods that should pass before one of the following consumers, $C_{\tau(p_t)}$, has an opportunity to stop experimentation. We show that if $\tau(p_t) < \infty$, then $V(p_t, \bar{p}) > 0$, which is equivalent to $\bar{b}(p_t) < \bar{p}$.
		The previous lemma proves the statement for $\tau(p_t) = 1$. We will proceed by induction on $\tau$.
		
		Suppose that induction statement ``if $\tau(p_t) = \tau - 1$ then $V(p_t, \bar{p}) > 0$'' is true for any $p_t$ for some $\tau \geqslant 2$. To see that it then also holds for $\tau(p_t) = \tau$, consider the following argument. Let $q$ denote the public posterior such that $q \in \mathcal{E}(p_t)$ and $\tau(q) = \tau - 1$ (in case of multiple such posteriors select arbitrarily). Then because $\mathcal{S}(q) = \emptyset$ representation \eqref{eq:main2} applies, and therefore
		\begin{equation*}
		   V(p_t, \bar{p}) \geqslant V(q \mid p_t, \bar{p}) \geqslant \beta \cdot w^L(p^1_{t+2} | q) \cdot V(p^1_{t+2} | q, \bar{p}).
		\end{equation*}
		Finally, notice that $p^1_{t+2}$ is the minimal posterior induced in public state $q$ and $\mathcal{S}(q) = \emptyset$. $p^1_{t+2}$ is induced for all $b_{t+1} \leqslant \bar{p}$ and is optimal for $b_{t+1} = \bar{p}$ in particular. Therefore $w^L(p^1_{t+2} | q) > 0$ and $V(p^1_{t+2} | q, \bar{p}) = V(q, \bar{p}) > 0$ by the induction hypothesis.
	\end{proof}
	
	\paragraph{Proof of Proposition \ref{prop:NPR_CT}.} 
	Suppose the contrary: not every $m \in \mathcal{M}(p_t)$ starts a cascade. Then by Lemma \ref{lem:threshold2} we have $V(p_t, \bar{p}) > 0$. Therefore $m \in \mathcal{M}(p_t)$, which is sent for $b_t = \bar{p}$, is sent for some $b_t < \bar{p}$ as well. Belief consistency then requires that it is sent for some $b_t > \bar{p}$ which finishes the argument.
	\qed	
	
	\paragraph{Proof of Theorem \ref{thm:dec}.} 
	Lemma \ref{lem:threshold2} implies $\bar{b}(p_t) < \bar{p}$. Therefore, we can take $l(p_t) = \frac{\bar{b}(p_t) + \bar{p}}{2} \in (\bar{b}(p_t), \bar{p})$. By belief consistency there exist $r(p_t) > \bar{p}$ and message $m \in \mathcal{M}(p_t)$ such that $m$ is sent for all $b_t \in [l(p_t), r(p_t)]$.
	\qed

\end{document}